\newcommand{\mytitle}{Smooth Routing in Decaying Trees}

\def\thxPACSs{Funded by Deutsche Forschungsgemeinschaft (DFG, German
	Research Foundation), project PACS (FL~1247/1-1, 522475669).}
\def\HUaffil{Humboldt-Universität zu Berlin,
	Department of Computer Science, Algorithm Engineering Group, Germany}
\def\TUaffil{Technische Universität Berlin, Algorithmics and Computational Complexity, Germany}

\documentclass[pagebackref]{article}
\usepackage[
	margin=3cm,
	a4paper,
]{geometry}

\usepackage{pdflscape}

\usepackage{authblk}

\title{\Large\bf \mytitle}
\author[2]{Till Fluschnik\footnote{\thxPACSs}}
\author[1]{Amela Pucic}
\author[1]{Malte Renken}

\affil[1]{\TUaffil}
\affil[2]{\HUaffil}

\date{}
\usepackage{amsthm}
\usepackage{thmtools}

\usepackage{url}
\usepackage[x11names, table]{xcolor}
\usepackage[utf8]{inputenc}

\usepackage{graphicx}
\usepackage{amsmath}
\usepackage{booktabs}
\usepackage{tabularray}
\UseTblrLibrary{booktabs}

\usepackage[T1]{fontenc}
\usepackage{amssymb}
\usepackage[outline]{contour}
\usepackage{enumerate}
\usepackage{dsfont}
\usepackage{tabularx}
\usepackage[ruled,vlined,linesnumbered]{algorithm2e}

\usepackage[square,numbers]{natbib}

\usepackage[linkcolor=red!50!black,citecolor=green!40!black,colorlinks]{hyperref}
\usepackage[capitalize,nameinlink]{cleveref}

\usepackage{doi}
\usepackage{paralist}
\usepackage{tikz}
\usetikzlibrary{calc,positioning}
\usetikzlibrary{arrows.meta,positioning}
\usepackage{pgfplots}
\pgfplotsset{compat=1.5}
\pgfplotsset{major grid style={very thin,gray!20!white}} %
\usepackage{pgfplotstable}
\usepackage{mathtools}

\usepackage{multirow}

\usepackage{pifont}%

\usepackage{etoolbox}

\usepackage{xargs}
\newcommandx{\set}[2][1=1]{\ensuremath{\{#1,\ldots,#2\}}}
\newcommandx{\tlog}[3][1=,3=]{\log_{#1}^{#3}(#2)}

\newcommandx{\mydefenv}[5][2=A,4=A]{%
  \ifstrequal{#2}{A}{\newtheorem{#1}{#3}}{}
  \ifstrequal{#4}{A}{\crefname{#1}{#3}{#3s}}{\crefname{#1}{#3}{#4}}%
  \Crefname{#1}{#5{.}}{#5s{.}}
}

\newcommandx{\mydefenvn}[5][2=A,4=A]{%
  \ifstrequal{#2}{A}{\newtheorem*{#1}{#3}}{}
  \ifstrequal{#4}{A}{\crefname{#1}{#3}{#3s}}{\crefname{#1}{#3}{#4}}%
  \Crefname{#1}{#5{.}}{#5s{.}}
}

\mydefenv{theorem}[A]{Theorem}{Thm}
\mydefenv{lemma}[A]{Lemma}{Lem}
\mydefenv{proposition}[A]{Proposition}{Prop}
\mydefenv{conjecture}[A]{Conjecture}{Conj}
\mydefenv{corollary}[A]{Corollary}[Corollaries]{Cor}
\mydefenv{observation}[A]{Observation}{Obs}
\mydefenv{fact}{Fact}{Fact}
\theoremstyle{definition}
\mydefenvn{problem}{Problem}{Prob}
\mydefenv{definition}[B]{Definition}{Def}
\mydefenv{rrule}{Data Reduction Rule}{DRR}
\mydefenv{xrule}{Rule}{Rule}
\mydefenv{myalgo}{Algorithm}{Algo}
\theoremstyle{remark}
\mydefenv{example}[B]{Example}{Ex}
\mydefenv{remark}[A]{Remark}{Rem}
\mydefenv{idea}{Idea}{Idea}

\declaretheorem[style=definition,name=Construction,qed=$\diamond$]{construction}
\declaretheorem[style=definition,name=Construction,sibling=construction]{construction*}
\crefname{construction}{Construction}{Constructions}
\crefname{construction*}{Construction}{Constructions}

\newcommandx{\decprob}[6][3=Input,5=Question]{
\begin{problem}[{#1}]\label{prob:#2}
  \textbf{Given} #4,
  the \textbf{question} is whether #6
  \end{problem}
}

\newcommandx{\decprobY}[6][3=Input,5=Question]{\begin{samepage}
  \begingroup
\begin{problem}\label{prob:#2}%
  {{\textsc{#1}}}
  \nopagebreak[4]\end{problem}\nopagebreak[4]\vspace{-0.5em}
  \par\noindent\hangindent=\parindent\textbf{#3}:  #4\nopagebreak[4]
  \par\noindent\hangindent=\parindent\textbf{#5}:  #6
  \par\medskip
  \endgroup
  \end{samepage}
}

\newcommandx{\decprobX}[5][2=Input,4=Question]{%
  \begingroup
  \par\medskip
  \noindent \colorbox{gray!17!white}{\textsc{#1}\index{problem!#1}}\nopagebreak[4]
  \par\noindent\hangindent=\parindent\textbf{#2}:  #3\nopagebreak[4]
  \par\noindent\hangindent=\parindent\textbf{#4}:  #5
  \par  \medskip
  \endgroup
}

\newcommand{\calF}{\mathcal{F}}
\newcommand{\calG}{\mathcal{G}}

\newcommand{\calU}{\mathcal{U}}

\DeclareMathOperator{\dist}{dist}

\newcommand{\yes}{\textnormal{\texttt{yes}}}
\newcommand{\no}{\textnormal{\texttt{no}}}
\newcommand{\RD}{$(\Rightarrow)\quad$}
\newcommand{\LD}{$(\Leftarrow)\quad$}

\newcommand{\N}{\mathbb{N}}
\newcommand{\Nzero}{\mathbb{N}_0}

\newcommand{\prob}[1]{\textnormal{\textsc{#1}}}

\newcommand{\misuigTsc}{\prob{Multicolored Independent Set on Unit Interval Graphs}}
\newcommand{\misuigAcr}{\prob{MIS\nobreakdash-UIG}}

\newcommand{\cocl}[1]{\ensuremath{\operatorname{#1}}}

\newcommand{\NP}{\cocl{NP}}

\newcommandx{\tref}[2][1=]{{\scriptsize{(\Cref{#2}#1)}}}

\newcommand{\true}{\ensuremath{\top}}
\newcommand{\false}{\ensuremath{\bot}}

\newcommand{\Pset}{\ensuremath{\mathcal{P}}}

\newcommand{\dl}{d}

\newcommand{\truthass}{\ensuremath{\beta}}

\newcommand{\ora}[1]{\overrightarrow{#1}}
\newcommand{\ola}[1]{\overleftarrow{#1}}

\DeclareMathOperator*{\bigland}{\bigwedge}

\newcommand{\ceq}{\ensuremath{\coloneqq}}
\newcommand{\cif}{\text{if }}

\newcommand{\etal}{et~al.}%

\definecolor{lilla}{HTML}{750787}

\newcommand{\thecolor}{black}%

\usepackage{siunitx}
\newcommand{\vkmh}[1]{\SI{#1}{\kilo\meter\per\hour}}

\usepackage{makecell}

\usepackage{etoolbox}
\newcommand{\ExternalLink}{%
\tikz[x=1.2ex, y=1.2ex, baseline=-0.05ex]{%
    \begin{scope}[x=1ex, y=1ex]
        \clip (-0.1,-0.1) --++ (-0, 1.2) --++ (0.6, 0) --++ (0, -0.6) --++ (0.6, 0) --++ (0, -1);
        \path[draw, line width = 0.5, rounded corners=0.5] (0,0) rectangle (1,1);
    \end{scope}
    \path[draw, line width = 0.5] (0.5, 0.5) -- (1, 1);
    \path[draw, line width = 0.5] (0.6, 1) -- (1, 1) -- (1, 0.6);
    }
}

\newcommand{\DG}{\ensuremath{\calG}}
\newcommand{\exogenous}{exogenous}
\newcommand{\simevacTsc}{\prob{Smooth Routing in Decaying Graphs}}
\newcommand{\simevacAcr}{\prob{SRDG}}

\DeclareMathOperator{\trt}{\theta}

\newcommand{\TER}{\pi}

\newcommand{\Rel}{X}

\newcommand{\partref}[2]{\cref{#1}\eqref{#1:#2}}
\newcommand{\src}{s}%
\newcommand{\snk}{t}%
\newcommand{\trtd}[2]{\ensuremath{[#1\vert#2]}}

\newcommand{\bigM}{\texttt{bigM}}
\newcommand{\indc}{\texttt{indic}}
\newcommand{\paths}{\texttt{paths}}
\newcommand{\stars}{\texttt{stars}}
\newcommand{\osm}{\texttt{osm}}

\newcommand{\ddlb}{d_{\rm lb}}
\newcommand{\frc}[1]{{#1}^*}

\newcommand{\tikzpreamble}{%
  \def\teps{0.25}
  \tikzstyle{xnode}=[circle,fill,scale=0.5,draw]
  \tikzstyle{xcnode}=[circle,fill=blue!10,inner sep=1pt,draw,font=\scriptsize]
  \tikzstyle{xedge}=[thick,-,color=\thecolor]
  \tikzstyle{xxedge}=[ultra thick,-,color=\thecolor]
  \tikzstyle{xedgedot}=[thick,-,dotted,color=white]
  \tikzstyle{xhabA}=[-,opacity=0.125, line width=9pt, line cap=round,color=magenta]
  \tikzstyle{xhabB}=[-,opacity=0.125, line width=9pt, line cap=round,color=green]
  \tikzstyle{xhabC}=[-,opacity=0.125, line width=9pt, line cap=round,color=cyan]
  \tikzstyle{xhabD}=[-,opacity=0.125, line width=9pt, line cap=round,color=blue]
  \tikzstyle{xhabE}=[-,opacity=0.125, line width=9pt, line cap=round,color=yellow]

	\tikzstyle{xemark}=[midway,inner sep=1pt,fill=white,opacity=0.8,font=\scriptsize,sloped]
	\tikzstyle{xemarkz}=[midway,inner sep=1pt,fill=white,font=\scriptsize,sloped]
	\tikzstyle{xemarkb}=[midway,below,inner sep=1pt,fill=white,opacity=0.4,font=\scriptsize,sloped]
	\tikzstyle{xemarka}=[midway,above,inner sep=1pt,fill=white,opacity=0.4,font=\scriptsize,sloped]

  \tikzstyle{xpath}=[->,>=latex,line width=0.35em,opacity=0.25]
  \tikzstyle{xtypeA}=[color=green!66!black]
  \tikzstyle{xtypeB}=[color=blue]
  \tikzstyle{xtypeC}=[color=orange!80!black]
  \tikzstyle{xpathA}=[xpath,xtypeA]
  \tikzstyle{xpathB}=[xpath,xtypeB]
  \tikzstyle{xpathC}=[xpath,xtypeC]
  \tikzstyle{xpathD}=[xpath,color=magenta]
  \tikzstyle{xpathE}=[xpath,color=red]
  \tikzstyle{xpathF}=[xpath,color=brown]
}

\newcommandx{\tikzES}[2][1={-,thick}]{%
	\foreach \x/\y in {#2}{\draw[#1] (\x) to (\y);}
}
\newcommandx{\tikzESd}[2][1=\scriptsize]{%
	\foreach \x/\y/\z in {#2}{\draw[-,thick] (\x) to node[midway,inner sep=1pt,fill=white,sloped,font=#1]{\z}(\y);}
}

\newcommand{\myabstract}{%
Motivated by evacuation scenarios arising in extreme events such as flooding or forest fires,
we study the problem of smoothly scheduling a set of paths in
graphs
where connections become impassable at some point in time.
A schedule is smooth if no two paths meet on an edge
and the number of paths simultaneously located at a vertex
does not exceed its given capacity.
We study the computational complexity of the problem when the underlying graph is a tree,
in particular a star or a path.
We prove that already in these settings,
the problem is NP-hard even with further restrictions on the capacities or on
the time when all connections ceased.
We provide an integer linear program (ILP) to compute the latest possible time to evacuate.
Using the ILP and its relaxation,
we solve sets of artificial
(where each underlying graph forms either a path or star)
and semi-artificial instances
(where the graphs are obtained from German cities along rivers),
study the runtimes,
and compare the results of the ILP with those of its relaxation.
}

\begin{document}

\maketitle

\begin{abstract}
	\myabstract{}
\end{abstract}

\section{Introduction}

In evacuation scenarios
facing extreme events like flooding or forest fires,
the underlying routing network may suffer under ceasing connections.
Evacuations plans may include \emph{fixed} evacuation routes,
i.e.,
each route is fully described as the sequence of all its vertices.
However,
since the way how extreme events impact the network
can differ,
a fixed evacuation schedule,
i.e.,
an assignment of departure times for each evacuation route on each of its connections,
could fail.
Given a set of \emph{fixed} evacuation routes in a network with ceasing connections and vertex capacities,
we wonder what is the computational complexity of finding a \emph{smooth} feasible evacuation schedule.
Herein,
we call an evacuation schedule smooth when,
on the one hand,
at each vertex the number of routes that simultaneously gather is at most the capacity of that vertex,
and,
on the other hand,
every two routes start to traverse a connection
at different times
and
traverse no two-way connection in opposite directions such that they meet at some time on it.
We are particularly interested in the special case
when the underlying network
forms a
tree.
Herein,
as subclasses of trees,
we focus on
underlying networks
that form
paths (like a main street)
or stars (like larger crossings or simplified roundabouts).

\newcommand{\DGtuple}{(V,E,A,c,\trt,\dl,\tau)}
\paragraph*{Our model and central decision problem.}
Our model uses \emph{decaying} graphs.
A decaying graph~$\DG=\DGtuple$ consist of
a static (mixed) graph~$G=(V,E,A)$
with edge set~$E\subseteq \binom{V}{2}$ and arc set~$A\subseteq V\times V$,
vertex capacities~$c\colon V\to\N$,
and lifetime~$\tau\in\N$,
and each connection
(i.e., edge or arc)
is equipped with a traversal time~$\trt\colon E\cup A\to\set[0]{\tau-1}$
and a deadline~$\dl\colon E\cup A\to \set{\tau} \eqqcolon T$.
Undirected edges model parts
which can be traversed in both directions but not simultaneously
(e.g., narrow streets).
For every static (mixed) graph~$G=(V,E,A)$
we assume that every two adjacent vertices are connected either by an edge,
an arc,
or two antiparallel arcs.

A path~$P$ can be represented as a sequence~$P=(v_1,v_2,\dots,v_{k+1})$ of at least two vertices all of which are mutually distinct or as a sequence~$P=(e_1,\dots,e_k)$ of connections,
where $e_i=\{v_i,v_{i+1}\}$
or $e_i=(v_i,v_{i+1})$.
Let~$V(P)$ denote the set of vertices of~$P$ and~$\Rel(P)=\{(v_i,v_{i+1})\mid i\in\set{k}\}$ be the set of all pairs of consecutive vertices in~$P$.
Let $\src(P)=v_1$ denote the source
and~$\snk(P)=v_{k+1}$ the sink of~$P$.
For a vertex~$v\in V$ and set~$\Pset$ of paths over~$V$,
let~$\Pset(v)$ denote the set of all paths in~$\Pset$ containing~$v$.

Altogether,
an instance~$I=(\DG,\Pset)$ of our problem consists of
a decaying graph~$\DG=\DGtuple$
and
a set~$\Pset=\{P_1,\dots,P_p\}$ of paths in~$G=(V,E,A)$
(see \Cref{fig:example}(left) for an example).
\begin{figure*}[t!]
 \centering
 \begin{tikzpicture}
  \def\xr{2.4}
  \def\yr{0.275}
  \tikzpreamble{}
  \node (v1) at (0,0)[xcnode,label=-90:{$v_1$}]{$1$};
  \node (v2) at (1*\xr,0)[xcnode,label=-90:{$v_2$}]{$2$};
  \node (v3) at (2*\xr,0)[xcnode,label=-90:{$v_3$}]{$3$};
  \node (v4) at (3*\xr,0)[xcnode,label=-90:{$v_4$}]{$2$};
  \node (v5) at (4*\xr,0)[xcnode,label=-90:{$v_5$}]{$2$};
  \node (v6) at (5*\xr,0)[xcnode,label=-90:{$v_6$}]{$1$};

  \foreach \x/\y/\out/\in/\trt in {
			v1/v2/0/180/{$\trtd{1}{9}$},
			v2/v3/30/150/{$\trtd{2}{9}$},
			v3/v2/-150/-30/{$\trtd{1}{8}$},
			v4/v5/30/150/{$\trtd{1}{11}$},
			v5/v4/-150/-30/{$\trtd{3}{6}$},
			v5/v6/0/180/{$\trtd{1}{7}$}
	}{
			\draw[->, >=latex, thick] (\x) to[out=\out,in=\in] node[xemarkz]{\trt} (\y);
	}
	\draw[-, thick] (v3) to node[xemarkz]{$\trtd{2}{14}$} (v4);

	\foreach \x in {1,...,10}{\node (P\x) at (0,1.0*\yr+\x*\yr)[]{};}

	\foreach \no/\x/\y/\col in {
		1/v1/v5/blue!80!black,
		2/v1/v4/orange,
		3/v3/v6/green!50!black,
		4/v5/v2/red!80!black,
		5/v1/v5/lilla,
		6/v2/v6/brown,
		7/v5/v3/magenta,
		8/v3/v2/gray!80!black,
		9/v4/v6/yellow!80!green!80!black,
		10/v1/v4/cyan%
  }{
		\draw[xpath,color=\col] (\x|-P\no) to node[xemark]{$P_{\no}$}(\y|-P\no);
  }
 \end{tikzpicture}
 \hfill
 \includegraphics[width=0.95\textwidth]{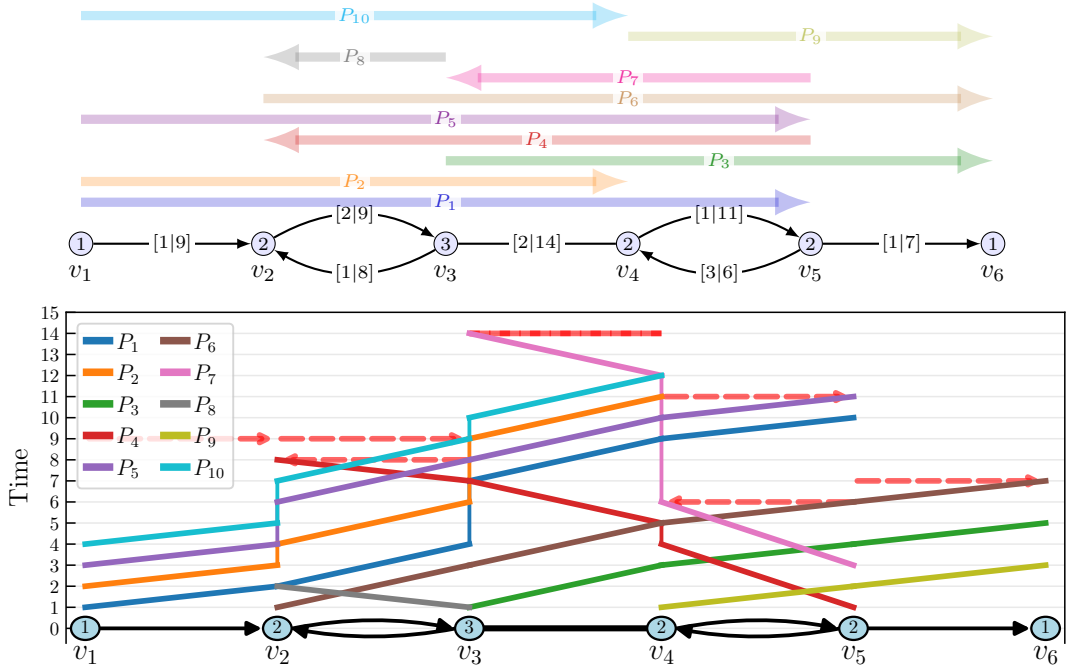}
 \caption{Example instance of~\simevacAcr{} on a decaying path with 6 vertices (top)
 and a solution witnessing feasibility (bottom).
 The capacity of each vertex~$v_i$ is encircled.
 (Top)
 On each connection~$e$ we indicate its traversal time and deadline as~$\trtd{\trt(e)}{d(e)}$.
 Each of the paths~$P_1,\dots,P_{10}$ is described by the vertically aligned source and sink and by an arc for its direction.
 (Bottom)
 For each path its timely location on any vertex or connections is drawn.
 E.g.,
 the path~$P_1$ (blue) starts at time step 1 from its source~$v_1$,
 arrives at and departs from~$v_2$ at time step 2,
 and so on.
 Red dashed lines/arrows indicate the deadline on the respective connection.
 }
 \label{fig:example}
\end{figure*}

A \emph{temporalization}~$\TER$ of an instance~$I$
is an assignment $\TER(P,e_i)\in T$ for each path $P=(e_1,\dots,e_k)\in \Pset$ such that
$\TER(P,e_i)+\trt(e_i)\leq \TER(P,e_{i+1})$
and~$\TER(P,e_i)+\trt(e_i)\leq \dl(e_i)$
(we refer to this as \emph{adequacy});
$\TER(P,e_i)$ corresponds to the departure time of~$P$ on edge~$e_i$.\footnote{Such temporalizations can be equivalently defined as assignments of departure times for paths from their vertices; We will make use of this equivalence throughout.}
Given a temporalization,
two paths~$P,P'$ are \emph{temporally edge-disjoint}
when the following hold:
(i) if $(v,w)\in \Rel(P)\cap \Rel(P')$,
then $\TER(P,\{v,w\}) \neq \TER(P',\{v,w\})$
(resp., $\TER(P,(v,w)) \neq \TER(P',(v,w))$),
and
(ii) if $e=\{v,w\}\in E$, $(v,w)\in \Rel(P)$, and $(w,v)\in \Rel(P')$,
then $|\TER(P,e) - \TER(P',e)|\geq \max\{1,\trt(e)\}$
(in the special case of zero traversal time,
it forbids departing on that edge simultaneously).
A temporalization is \emph{temporally edge-disjoint} if every two paths~$P,P'\in\Pset$ are temporally edge-disjoint.
A path~$P=(v_1,\dots,v_{k+1})\equiv (e_1,\dots,e_k)$ is (located) at an inner vertex~$v_i$, $1<i\leq k$,
at the time steps $\TER(P,e_{i-1})+\trt(e_{i-1}),\dots,\TER(P,e_i)$.
As special cases,
a path is (located) only at time $\TER(P,e_1)$ at~$\src(P)=v_1$ and only at time~$\TER(P,e_k) + \trt(e_k)$ at~$\snk(P)=v_{k+1}$.
A temporalization \emph{respects} the (vertex) capacities
if for every vertex~$v\in V$ and every time step~$t\in T$,
at most~$c(v)$ many paths are located on~$v$ at~$t$.
A decaying graph is \emph{uncapacitated} when~$c(v)\geq |\Pset(v)|$ for all~$v\in V$,
i.e.,
when all paths containing~$v$ can be located at $v$ at the same time,
and \emph{capacitated} otherwise.
Altogether,
a temporalization is~\emph{valid}
if it is temporally edge-disjoint and respects the vertex capacities.
Finally,
we seek to solve the following problem
(see \Cref{fig:example}(right) for a witnessing solution):

\decprob{\simevacTsc{} (\simevacAcr{})}{dwgbp}
{an instance~$I=(\DG, \Pset)$ with decaying graph~$\DG=\DGtuple$
and
set~$\Pset=\{P_1,\dots,P_p\}$ of paths in~$G=(V,E,A)$}
{there is a valid temporalization~$\TER$ of~$I$.
}

\paragraph*{Our contributions.}

We contribute both on the theoretical and experimental side.
For all settings of the deadlines and capacities being constant or general,
we prove whether \simevacAcr{} on capacitated decaying paths and stars is \NP-hard or polynomial-time solvable
(see~\Cref{tab:results} for an overview).
\begin{table}
 \centering
 \caption{Overview on our complexity-theoretic results.
 Each hardness result still holds when the respective decaying graph is \exogenous.
	}
 \label{tab:results}
	\begin{tblr}{
		colspec = {r l l l},
	}
			\toprule
		& deadlines & \SetCell[c=2]{c} capacitated \\\cmidrule{3-4}
		&  & constant capacities & scaling capacities \\\midrule\midrule
		\SetCell[r=2]{r,bg=gray!10} path
			& constant & \SetCell[c=2]{c,bg=green!10} P~\tref{thm:path:dp} \\
			& scaling  & \SetCell[c=2]{c,bg=red!5} \NP-hard~\tref{thm:nphard:paths:capone}
			\\
		\SetCell[r=2]{r,bg=gray!20} star
			& constant & \SetCell{bg=green!10} P~\tref{obs:nphard:stars:occ}
									& \SetCell{bg=red!10} \NP-hard~\tref{thm:nphard:stars:cap} \\
			& scaling  & \SetCell{bg=red!10} \NP-hard~\tref{thm:nphard:stars:cap:one}
									& \SetCell{bg=red!10} \NP-hard
			\\
		\SetCell[r=2]{r,bg=gray!10} tree
			& constant & \SetCell[r=2,c=2]{c,bg=red!15} \NP-hard~\tref{thm:nphard:trees} \\
			& scaling  &
			\\\bottomrule
	\end{tblr}
\end{table}
Our main result
is that \simevacAcr{}
is \NP-hard on uncapacitated decaying trees with constant deadlines.
We provide an integer linear program (ILP)
to compute the minimum addition to every deadline
to achieve a feasible instance,
i.e.,
the latest possible time to evacuate.
With experiments\footnote{Code and data is available at \url{https://github.com/buhtig-tf/Smooth-Routing-in-Decaying-Trees}.}
on artificial
(decaying stars and paths)
and semi-artificial instances
(street networks from German cities along rivers
for which we simulated flooding),
we prove the practicality of the ILP and its relaxation.

\paragraph*{Related work.}
Issues caused by human behavior through evacuation like congestion~\cite{Litman06} are problematic.
A common framework is
evacuation via origin-destination pairs~\cite{SchichlS15,WuSZ16,HigashikawaKTT24,ShahabiW14},
suitable for centralized planning
e.g.\ when evacuating inhabitants with no access to private automobiles.
Computing and testing evacuation paths
are mostly done for simultaneously initiated evacuation
via,
e.g.,
time-expanded networks and flow models~\cite{dhamala2015survey,MishraMP18,PyakurelNDD19,AkridaCGKS19,YuL12}.

Closest to us is the work
by Klobas et al.~\cite{KlobasMMNZ23}
and Kunz et al.~\cite{KunzMZ23},
studying the problem
\prob{Temporally Disjoint Paths (TDP)}.
Their model yet
differs in several aspects:
They study temporal graphs where all connections are edges that
can also reappear,
they have no vertex capacities and travel times
(translated into our model,
all vertex capacities are one
and all travel times are zero),
and their paths are given only by origin-destination pairs.
Due to the latter,
\prob{TDP} is
\NP-hard already if~$\tau=1$~\cite{KlobasMMNZ23};
\simevacAcr{} is (trivially)
polynomial-time solvable in this case.
Note that
on decaying trees origin-destination pairs fully describe a path.
Kunz et al.~\cite{KunzMZ23} proved \prob{TDP} to be
hard for different graph restrictions and parameters,
e.g.,
NP-hard
(and W[1]-hard w.r.t.\ the number of vertices)
on temporal stars
(without constantly-bounded lifetime).

Waiting-time restrictions for temporal paths are studied~\cite{CasteigtsHMZ21},
even with no waiting time for a set of paths all with the same origin-destination pair~\cite{FLUSCHNIK201969}.
Also studied is the effect of ceasing connections on shortest temporal paths~\cite{BoeckmannTW23}.
Finally,
orthogonal to our concept of simultaneous evacuation,
there is work related to successive evacuation in temporal graphs~\cite{ChimaniT23,FluschnikNSZ23}.

Multi-agent pathfinding~\cite{AtzmonSFWBZ20,SternSFK0WLA0KB19,GaoLLYLW24} coordinates multiple agents sharing limited network resources while e.g.\ avoiding safety-critical conflicts such as collisions~\cite{AlmagorKL24,PertzovskiySZF25} or overloading~\cite{ShahabiW14}.
It appears in dynamic settings~\cite{MuranoPR15}, where edge availability or travel times vary, and formulated on graphs~\cite{YuL13,YuL16}, which encode feasible routes.

\section{Preliminaries}
\label{sec:prelim}

We denote by~$\N$ and~$\Nzero$ the natural numbers excluding and including zero,
respectively.
For a graph~$G=(V,E,A)$,
we denote by~$\calU(G)$ the graph obtained from replacing each arc with an edge
and then deleting duplicated edges.
A \emph{decaying tree}~$\DG=\DGtuple$ is a decaying graph
with $\calU(G)$
being a \emph{tree}
(a connected, cycle-free graph),
where~$G=(V,E,A)$.
If~$\calU(G)$ is a path (star),
then we call~$\DG$ a decaying path (star).
A \emph{star} with~$n$ vertices is a graph with center vertex~$v^*$ and vertices~$v_1,\dots,v_{n-1}$,
each of which is adjacent only with~$v^*$.
In a decaying tree with vertices~$s,z$,
we also write~$(s,z)$-path for the unique path with source~$s$ and sink~$z$.
A decaying tree is \emph{\exogenous},
when for every~$t\in\set{\tau}$,
$\calU(G_t)$ is a tree plus isolated vertices,
where $G_t=(V,E_t,A_t)$
such that~$e\in E_t$ (resp.\ $a \in A_t$) if and only if~$e\in E$ and~$\dl(e)\geq t$ (resp.\ $a\in A$ and $\dl(a)\geq t$).

\section{Computational Complexity}
\label{sec:hardness}

We distinguish between
decaying paths (\Cref{sec:nphard:paths}),
decaying stars (\Cref{sec:nphard:stars}),
and decaying trees (\Cref{sec:nphard:trees}).

\subsection{On capacitated paths}
\label{sec:nphard:paths}

In \Cref{sec:nphard:paths:capone} we show that \simevacAcr{} on decaying paths
is \NP-hard even for unit capacities.
In \Cref{sec:nphard:paths:constdl},
we show that
if the lifetime is constant,
then \simevacAcr{} is polynomial-time solvable.

\subsubsection{On paths with capacity one}
\label{sec:nphard:paths:capone}

\begin{theorem}%
 \label{thm:nphard:paths:capone}
 \simevacAcr{} is \NP-hard even on \exogenous{} decaying paths
 where every vertex has capacity one.
\end{theorem}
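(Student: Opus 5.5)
We plan a polynomial-time reduction from \misuigTsc{} (\misuigAcr{}). An instance consists of a unit interval graph whose vertices are colored with $k$ colors; we ask for an independent set containing exactly one vertex of each color. We may assume every interval has the form $[a,a+L)$ with integer $a$ and a common length $L$, and that all start points are polynomially bounded. Color class $i$ is then a set $S_i\subseteq\N$ of admissible start times. Two selected intervals with starts $a\in S_i$ and $b\in S_j$, $i\neq j$, conflict iff $|a-b|<L$.

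The key observation is that condition~(ii) of temporal edge-disjointness on an undirected edge $e$ with $\trt(e)=L$ says exactly this. Two paths traversing $e$ in opposite directions must satisfy $|\TER(P,e)-\TER(P',e)|\geq L$, i.e., the half-open intervals $[\TER(P,e),\TER(P,e)+L)$ must be disjoint. So we plan to use undirected edges of traversal time $L$ as \emph{conflict edges}, and the departure time on such an edge as the encoding of the chosen interval.

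\textbf{Construction.} For every color $i$ we create a \emph{selection path} $P_i$, and build the decaying path so that $P_i$ is first forced to fix a departure time $a_i\in S_i$ and then keeps that offset rigidly.

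\emph{Selection.} $P_i$ passes a capacity-one vertex $x_i$ at which we place \emph{blocker paths}. Each blocker is a short path of unit-time arcs whose deadlines equal its earliest possible arrival times. Because of these tight deadlines each blocker has exactly one feasible schedule, and it occupies $x_i$ at every time step not in $S_i$ (suitably shifted). Since $c(x_i)=1$, $P_i$ can only be at $x_i$, and hence depart from it, at times in $S_i$.

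\emph{Rigidity.} After $x_i$, all connections of $P_i$ get deadlines that are tight relative to its latest admissible start. Further blockers placed at the intermediate capacity-one vertices fill every waiting slot, so $P_i$ reaches each conflict edge with a fixed, known offset from $a_i$.

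\emph{Conflicts.} For every pair $i<j$ we need a conflict edge traversed by a copy of $P_i$'s schedule in one direction and by a copy of $P_j$'s in the other. On a path graph a single route cannot serve in both directions. We therefore plan to represent each color by several rigidly synchronized copies, some directed left and some right. The copies of one color are synchronized by a chain of capacity-one vertices whose tight deadlines and blockers force all copies to share the offset $a_i$. Copies of the same color must never be opposite on a common conflict edge, so we order the segments along the path so that, for each pair, the conflict edge $e_{ij}$ is crossed only by one right-going copy of $i$ and one left-going copy of $j$. All other paths crossing $e_{ij}$ run in the same direction as one of these two; condition~(i) then only requires distinct departures there, which we guarantee by assigning distinct rigid offsets to the different copies.

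\textbf{Exogeneity.} The deadlines decrease monotonically from both ends toward the middle, so the surviving connections at every time $t$ still form a subpath plus isolated vertices. We make the blockers' tight deadlines consistent with this by padding each blocker with extra zero-slack arcs.

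\textbf{Correctness.} For the forward direction, an independent multicolored set yields offsets $a_i$, and we schedule every copy rigidly. The selection vertices are respected by construction, and the conflict edges are respected by independence. For the backward direction, capacity one at the $x_i$ forces $a_i\in S_i$, rigidity transfers $a_i$ to every copy of color $i$, and condition~(ii) on $e_{ij}$ gives $|a_i-a_j|\geq L$, so the selected vertices are pairwise non-adjacent.

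\textbf{Main obstacle.} The hardest step is the synchronization of the several copies of one color. We must make sure that waiting inside the synchronizing chains is impossible and that blockers and copies never collide at capacity-one vertices unintentionally. We would first try giving each copy a private time window, with consecutive windows shifted by a large constant $M > \max_i \max S_i + L$, so that all unintended interactions fall into disjoint time ranges and only the intended pairwise comparisons on the $e_{ij}$ remain.
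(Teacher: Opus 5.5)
Your proposal has a genuine gap exactly where you locate the main obstacle. The synchronization of the copies of one color is never constructed, yet the backward direction (``rigidity transfers $a_i$ to every copy of color $i$'') rests entirely on it, and the ingredients you name cannot provide it. Blockers with a forced schedule, together with deadlines, act on each copy separately: they occupy the same time steps no matter what the other copies do. With only these ingredients, every copy of color $i$ may pick its own element of $S_i$, different copies may encode different vertices of $V_i$, and a valid temporalization no longer yields one vertex per color. All other interactions in \simevacAcr{} (capacity one, conditions (i) and (ii)) only \emph{forbid} combinations of times. Equality of the offsets of two distinct paths must therefore be manufactured by a dedicated gadget, for instance a pigeonhole argument in which $|S_i|-1$ rigid filler paths exhaust all other admissible slots on a stretch used by both copies. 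No such gadget is given, and it would also have to coexist on the single line with the copies and blockers of all other colors.

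Two supporting claims are also wrong as stated. First, on the undirected edge $e_{ij}$, any path crossing it in the direction of the copy of $i$ travels opposite to the copy of $j$, so condition (ii), not just (i), applies to that pair. More generally, with capacity one everywhere, a right-going and a left-going path whose routes share a stretch of undirected edges cannot pass each other inside it, so one must clear the whole stretch before the other enters. Your intended test therefore depends on the entire overlap of the two routes, not only on $\theta(e_{ij})=L$. Every unintended right/left pair must also be separated explicitly, which your time-window idea only promises. Second, the monotonicity in your exogeneity argument is reversed: if deadlines decrease toward the middle, the middle dies first and $\mathcal{U}(G_t)$ has two nontrivial components. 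Deadlines must instead increase toward an interior peak, as on the paper's tails $x_1,\dots,x_\tau$ with deadline $t$ on $\{x_t,x_{t+1}\}$.

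The paper avoids copies entirely. Each color has a single validation path whose interval is encoded as the time it \emph{waits} at the central vertex $v^*$, so $c(v^*)=1$ checks all pairs at once, with zero traversal times. The only forcing needed is that this path arrives at a time in $A_i$ and departs at a time in $B_i$. This is achieved by the $m_i-1$ left and right color-paths, which consume the other labels, and by the outward blockers $X_t,Y_t$, which a counting argument pins to be at $x_\tau$, resp.\ $y_\tau$, exactly at time $t$.
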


Our reduction builds on the reduction from~Klobas~\etal~\cite[Theorem~6]{KlobasMMNZ23}.
On a high level,
one can represent colored intervals
as paths going from ``left to right'' on the decaying path
such that they all cross a common vertex~$v^*$
and can only start and finish at time steps according to the respective interval's ``left and right'' endpoints.
Then,
with unit capacities,
there is a valid temporalization
(that in particular respects the capacity of~1 of~$v^*$)
if and only if there is a set of pairwise disjoint intervals of each color,
where the latter corresponds to an \NP-hard task.

Our reduction builds on the reduction from~Klobas~\etal~\cite[Theorem~6]{KlobasMMNZ23}.
Consequently,
we give a polynomial-time many-one reduction from the \NP-hard~\cite{BevernMNW15} \misuigTsc{} (\misuigAcr) problem,
where,
given a unit interval graph~$G=(V,E)$
with partition~$V=V_1\uplus\dots\uplus V_k$ corresponding to~$k$ colors
where $V_i$ is an independent set for every~$i\in\set{k}$,
the question is whether there is an independent set~$X$ in~$G$ with~$|X\cap V_i|=1$ for every~$i\in\set{k}$.
Herein, a graph~$G=(V,E)$ is a unit interval graph if and only if it can be represented as follows:
each vertex~$v\in V$ is assigned an interval~$S_v\ceq [a_v,b_v]$ of unit length
such that~$\{v,w\}\in E$ if and only if~$S_v\cap S_w\neq \emptyset$.
We assume that every two endpoints of the intervals have a difference of at least two.

\begin{construction}
 \label{const:nphard:paths:capone}
 The construction follows Klobas~\etal~\cite[Theorem~6]{KlobasMMNZ23}
 and thus we adapt their notations.
 We first describe their basic construction
 (see \cref{fig:Klobas} for an illustration of the temporal graph from~\cite[Theorem~6]{KlobasMMNZ23}).
	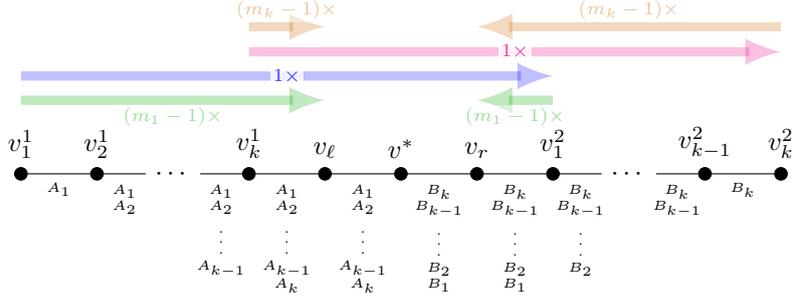
\begin{figure}[t]
		\centering
		\begin{tikzpicture}
			\tikzpreamble{}
			\def\xr{1}
			\def\yr{0.975}

			\node (v11) at (-5*\xr,0)[xnode,label=90:{$v^1_1$}]{};
			\node (v12) at (-4*\xr,0)[xnode,label=90:{$v^1_2$}]{};
			\node (v1j) at (-3*\xr,0)[]{$\cdots$};
			\node (v1k) at (-2*\xr,0)[xnode,label=90:{$v^1_k$}]{};
			\node (vl) at (-1*\xr,0)[xnode,label=90:{$v_\ell$}]{};
			\node (vstar) at (0,0)[xnode,label=90:{$v^*$}]{};
			\node (vr) at (1*\xr,0)[xnode,label=90:{$v_r$}]{};
			\node (v21) at (2*\xr,0)[xnode,label=90:{$v^2_1$}]{};
			\node (v2j) at (3*\xr,0)[]{$\cdots$};
			\node (v2k1) at (4*\xr,0)[xnode,label=90:{$v^2_{k-1}$}]{};
			\node (v2k) at (5*\xr,0)[xnode,label=90:{$v^2_k$}]{};

			\foreach \x/\y/\z in {
			  v11/v12/{$A_1$},
			  v12/v1j/{$\begin{matrix} A_1 \\ A_2 \end{matrix}$},
				v1j/v1k/{$\begin{matrix} A_1 \\ A_2 \\ \vdots \\ A_{k-1}\end{matrix}$},
				v1k/vl/{$\begin{matrix} A_1 \\ A_2 \\ \vdots \\ A_{k-1} \\ A_k\end{matrix}$},
				vl/vstar/{$\begin{matrix} A_1 \\ A_2 \\ \vdots \\ A_{k-1} \\ A_k\end{matrix}$},
				vstar/vr/{$\begin{matrix} B_k \\ B_{k-1} \\ \vdots  \\ B_2 \\ B_1\end{matrix}$},
				vr/v21/{$\begin{matrix} B_k \\ B_{k-1} \\ \vdots  \\ B_2 \\ B_1\end{matrix}$},
				v21/v2j/{$\begin{matrix} B_k \\ B_{k-1} \\ \vdots  \\ B_2\end{matrix}$},
				v2j/v2k1/{$\begin{matrix} B_k \\ B_{k-1}\end{matrix}$},
				v2k1/v2k/{$B_k$}
			}{
				\draw[-] (\x) to node[midway, below]{\tiny \z}(\y);
			}

			\draw[xpathA] ($(v11)+(0,1.0*\yr)$) to node[xemarkb]{$(m_1-1)\times$}($(vl)+(0,1.0*\yr)$);
			\draw[xpathA] ($(v21)+(0,1.0*\yr)$) to node[xemarkb]{$(m_1-1)\times$}($(vr)+(0,1.0*\yr)$);
			\draw[xpathB] ($(v11)+(0,1.33*\yr)$) to node[xemark]{$1\times$}($(v21)+(0,1.33*\yr)$);
			\draw[xpathC] ($(v1k)+(0,2*\yr)$) to node[xemarka]{$(m_k-1)\times$}($(vl)+(0,2*\yr)$);
			\draw[xpathC] ($(v2k)+(0,2*\yr)$) to node[xemarka]{$(m_k-1)\times$}($(vr)+(0,2*\yr)$);
			\draw[xpathD] ($(v1k)+(0,1.66*\yr)$) to node[xemark]{$1\times$}($(v2k)+(0,1.66*\yr)$);
		\end{tikzpicture}
		\caption{Illustration of the temporal graph constructed by Klobas~\etal~\cite[Theorem~6]{KlobasMMNZ23}, where the time label set of each edge
		(i.e., when it is present)
		is the union of the sets drawn below of it.
		Formally,
		they have no vertex capacities which translates to capacity one in our model.
		For the two color classes 1 and~$k$,
		the corresponding paths are depicted in green/blue and orange/magenta,
		respectively.}
		\label{fig:Klobas}
	\end{figure}
 To this end,
 we construct an instance~$I'$ of \simevacAcr{}
 from a given instance~$I=(G=(V=V_1\uplus\dots\uplus V_k,E))$ of \misuigAcr{} as follows.
 Initially,
 we have the vertex set $V'=\{v_\ell,v^*,v_r\}\cup\{v^j_i\mid j\in\{1,2\}, i\in\set{k}\}$
 that forms a path with edge sets $\{\{v^j_i,v^j_{i+1}\}\mid j\in\{1,2\}, i\in\set{k-1}\}$,
 $\{\{v^1_k,v_\ell\},\{v_\ell,v^*\}\}$,
 and $\{\{v^*,v_r\},\{v_r,v^2_1\}\}$,
 where each edge has traversal time zero
 and deadline~$\tau$,
 where $\tau$ is the largest endpoint of the intervals.
 For each color~$i$,
 we associate a left-label set~$L_i = \bigcup_{j=1}^i A_j$ where~$A_j\ceq \bigcup_{v\in V_j} \{a_v\}$ with edge~$\{v^1_i,v^1_{i+1}\}$
 (where we identify~$v^1_{k+1}$ with~$v_\ell$).
 Similarly,
 for each color~$i\in\set{k}$,
 we associate a right-label set~$R_i = \bigcup_{j=i}^k B_j$ where~$B_j\ceq \bigcup_{v\in V_j} \{b_v\}$
 with edge~$\{v^2_{i-1},v^2_{i}\}$
 (where we identify~$v^2_{0}$ with~$v_r$).
 The edge~$\{v_\ell,v^*\}$ and $\{v^*,v_r\}$
 is associated with the label sets~$L_k$ and~$R_1$,
 respectively.
 We make use of the fact that left of~$v^*$,
 the time set of labels on edge~$\{v^1_i,v^1_{i+1}\}$
 is a strict subset of the time labels of edge~$\{v^1_j,v^1_{j+1}\}$
 for every~$j>i$,
 and right of~$v^*$,
 the time set of labels on edge~$\{v^2_i,v^2_{i+1}\}$
 is a strict subset of the time labels of edge~$\{v^1_j,v^1_{j+1}\}$
 for every~$j<i$.
 Next,
 we explain the paths they constructed,
 to which refer to as color-paths.
 For each color~$i$,
 let~$m_i=|V_i|$,
 i.e.,
 the number of intervals of color~$i$.
 Add~$m_i-1$ many~$(v_i^1,v_\ell)$-paths,
 $m_i-1$ many~$(v_i^2,v_r)$-paths,
 and one~$(v_i^1,v_i^2)$-path
 (referred to as the validation color-path).
 We also refer to a color-path starting left---except for the validation paths---and
 right of~$v^*$
 as left and right color-path, respectively.
 This finishes the central construction based on Klobas~\etal~\cite[Theorem~6]{KlobasMMNZ23}.

 We next proceed with our modifications.
 We describe first our path extension,
 i.e.,
 how we extend the path constructed so far at both ends,
 and then introduce what we call blocker paths.
 See~\cref{fig:nphard:paths:capone} for illustration.
 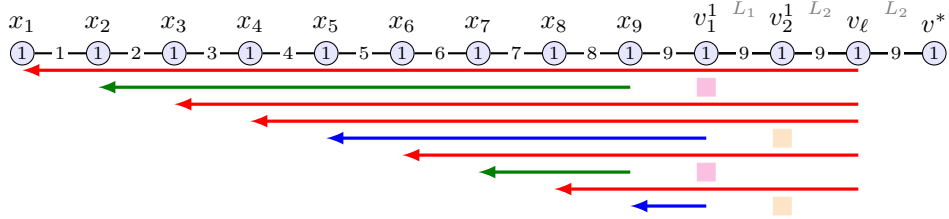
\begin{figure}[t]
		\centering
		\begin{tikzpicture}
			\tikzpreamble{}
			\def\xr{0.5}
			\def\yr{0.225}
			\node (A1) at (0*\xr,0)[xcnode,label=90:{$x_1$}]{$1$};
			\node (A2) at (2*\xr,0)[xcnode,label=90:{$x_2$}]{$1$};
			\node (A3) at (4*\xr,0)[xcnode,label=90:{$x_3$}]{$1$};
			\node (A4) at (6*\xr,0)[xcnode,label=90:{$x_4$}]{$1$};
			\node (A5) at (8*\xr,0)[xcnode,label=90:{$x_5$}]{$1$};
			\node (A6) at (10*\xr,0)[xcnode,label=90:{$x_6$}]{$1$};
			\node (A7) at (12*\xr,0)[xcnode,label=90:{$x_7$}]{$1$};
			\node (A8) at (14*\xr,0)[xcnode,label=90:{$x_8$}]{$1$};
			\node (A9) at (16*\xr,0)[xcnode,label=90:{$x_9$}]{$1$};
			\node (B1) at (18*\xr,0)[xcnode,label=90:{$v^1_1$}]{$1$};
			\node (B2) at (20*\xr,0)[xcnode,label=90:{$v^1_2$}]{$1$};
			\node (Vl) at (22*\xr,0)[xcnode,label=90:{$v_\ell$}]{$1$};
			\node (Vstar) at (24*\xr,0)[xcnode,label=90:{$v^*$}]{$1$};

			\foreach \x in {1,...,9}{
				\node (Y\x) at (A\x|-{(0,-\x*\yr)})[draw=none]{};
			}
			\foreach \x/\y/\z in {B1/B2/{$L_1$},%
					B2/Vl/{$L_2$},%
					Vl/Vstar/{$L_2$}%
					}{
				\draw[-,thick] (\x) -- node[above,yshift=1em,color=gray] {\scriptsize\z} (\y);
			}
			\foreach \x/\y/\z in {A1/A2/$1$,A2/A3/$2$,A3/A4/$3$,A4/A5/$4$,
					A5/A6/$5$,A6/A7/$6$,A7/A8/$7$,A8/A9/$8$,A9/B1/$9$,B1/B2/$9$,B2/Vl/$9$,Vl/Vstar/$9$
					}{
				\draw[-,thick] (\x) -- node[midway,fill=white,inner sep=1pt] {\scriptsize\z} (\y);
			}

			\def\blckA{red}
			\def\blckO{blue}
			\def\blckN{green!50!black}
			\foreach \x/\y/\z/\clr in {Vl/A1/Y1/\blckA,
				A9/A2/Y2/\blckN,
				Vl/A3/Y3/\blckA,
				Vl/A4/Y4/\blckA,
				B1/A5/Y5/\blckO,
				Vl/A6/Y6/\blckA,
				A9/A7/Y7/\blckN,
				Vl/A8/Y8/\blckA,
				B1/A9/Y9/\blckO%
				}{
				\draw[->,>=latex,very thick,color=\clr] (\x|-\z) to (\y|-\z);
			}

			\foreach \x/\y/\clr in {
				B1/Y2/magenta,
				B1/Y7/magenta,
				B2/Y5/yellow!50!red,
				B2/Y9/yellow!50!red%
				}{
				\node at (\x|-\y)[color=\clr, fill, opacity=0.25, draw]{};
			}

		\end{tikzpicture}
		\caption{Red paths block all color classes,
		green none,
		and blue only the fist color class.
		Magenta and orange blocks indicate all possible starting times of the color-paths of color~$1$ and~$2$,
		respectively.
		Here,
		$A_1=L_1=\{2,7\}$, $A_2=\{5,9\}$, and~$L_2=\{2,5,7,9\}$.
		}
		\label{fig:nphard:paths:capone}
	\end{figure}
 We add the vertices~$x_1,\dots,x_\tau$ and~$y_\tau,\dots,y_1$,
 and make~$v^1_1$ adjacent with~$x_\tau$ having deadline~$\tau$,
 and make~$v^2_k$ adjacent with~$y_\tau$ having deadline~$\tau$.
 For each~$t\in\set{\tau-1}$,
 we add the edges~$\{x_t,x_{t+1}\}$ and $\{y_t,y_{t+1}\}$
 each with deadline~$t$.

 We construct the following \emph{blocker} path~$X_t$ for each~$t\in\set{\tau}$
 left of~$v^*$.
 If~$t$ is in all label sets,
 then~$X_t$ is an~$(x_\tau,x_t)$ path.
 If~$t$ is in no label set,
 then~$X_t$ is an~$(v_\ell,x_t)$ path.
 If~$t$ is in time label set~$L_{i+1}$ but not in~$L_i$,
 then~$X_t$ is an~$(v^1_i,x_t)$ path.
 We also call~$X_t$ left-blocker.
 We construct the following blocker path~$Y_t$ for each~$t\in\set{\tau}$
 right of~$v^*$.
 If~$t$ is in all time label sets,
 then~$Y_t$ is an~$(y_\tau,y_t)$ path
 (as a special case here,
 if~$t=\tau$,
 then there is no path~$Y_\tau$).
 If~$t$ is in no time label set,
 then~$Y_t$ is an~$(v_r,y_t)$ path.
 If~$t$ is in time label set~$R_{i-1}$ but not in~$R_i$,
 then~$Y_t$ is an~$(v^2_i,y_t)$ path.
 We also call~$Y_t$ right-blocker.
\end{construction}

\begin{lemma}
 \label{obs:nphard:paths:capone:xytau}
 Let~$I'$ be a \yes-instance.
 Then at~$x_\tau$,
 the left-blocker path~$X_t$ is exactly at time step~$t$,
 and at~$y_\tau$,
 the right-blocker path~$Y_t$ is exactly at time step~$t$.
\end{lemma}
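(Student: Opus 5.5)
Every left-blocker~$X_t$ has its sink at~$x_t$ and its source at~$x_\tau$ or to the right of~$x_\tau$. Hence it traverses, in this order, the edges $\{x_\tau,x_{\tau-1}\},\{x_{\tau-1},x_{\tau-2}\},\dots,\{x_{t+1},x_t\}$, and the last of these has deadline~$t$. Since all traversal times are zero, adequacy gives two bounds on the times~$X_t$ spends on the extension. First, $X_t$ departs from~$x_\tau$ no later than it departs on the last edge, so $X_t$ is at~$x_\tau$ only at time steps at most~$t$. Second, $X_t$ arrives at~$x_\tau$ no earlier than the time~$1$, the minimum of~$T$. For $t=\tau$ the path is the trivial case: $X_\tau$ ends at~$x_\tau$. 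Whether $\tau$ lies in all label sets or in none, its sink is~$x_\tau$, and the edge~$\{x_\tau,v^1_1\}$ has deadline~$\tau$.

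The plan is a counting, or pigeonhole, argument at~$x_\tau$, which has capacity one. Every blocker $X_1,\dots,X_\tau$ contains~$x_\tau$: as source, as sink, or as an inner vertex. Each $X_t$ is located at~$x_\tau$ during a nonempty interval of time steps, and by capacity one these $\tau$ intervals are pairwise disjoint. I prove by induction on~$t$ that $X_t$ is at~$x_\tau$ exactly at time step~$t$. Suppose this holds for all $t'<t$. Then the time steps $1,\dots,t-1$ at~$x_\tau$ are occupied. By the deadline bound above, $X_t$ must be at~$x_\tau$ at some time step at most~$t$. All earlier time steps are taken, so it is there at time step~$t$ and at no earlier step.

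It remains to rule out that $X_t$ lingers at~$x_\tau$ after time~$t$. For $t<\tau$, if $X_t$ is at~$x_\tau$ at time step~$t$, it must depart on $\{x_\tau,x_{\tau-1}\}$ at time at least~$t$. Its final edge has deadline~$t$, and the departure times along the path are nondecreasing with zero traversal times. Hence every departure, in particular the one from~$x_\tau$, happens exactly at~$t$, so $X_t$ occupies~$x_\tau$ only at~$t$. For $t=\tau$, the path ends at~$x_\tau$ and is located there only at its single arrival time, which is~$\tau$. This gives the left statement.

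The right side is symmetric, using $y_\tau$, the $y$-extension edges with deadlines~$t$, and the blockers~$Y_t$. The one subtlety is that $Y_\tau$ may not exist, namely when $\tau$ lies in all right label sets. This does not harm the induction, which only uses the steps $t<\tau$ and each $Y_t$ separately. If $Y_\tau$ exists, the same argument applies to it, since it ends at~$y_\tau$ via edges with deadline~$\tau$. The main obstacle is making the base-ordering argument airtight: the bound "at most~$t$" must come from the deadline, and the exclusion of earlier time steps from the induction.
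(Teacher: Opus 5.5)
Your proof is correct and follows the paper's argument. The deadline $t$ on $\{x_{t+1},x_t\}$ forces $X_t$ to be at $x_\tau$ no later than $t$, and capacity one at $x_\tau$, which all left-blockers share, then pins it to exactly $t$; the paper phrases this as a descent (an early visit of $X_t$ at some $t'<t$ pushes $X_{t'}$ even earlier, ad infinitum), while your forward induction makes the same reasoning explicit and also spells out the upper bound and the possibly missing $Y_\tau$.
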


\begin{proof}
 Suppose not,
 that is,
 some $X_t$ ($Y_t$) is at time step~$t'<t$ at~$x_\tau$ ($y_\tau$)
 (recall that~$X_t$ ($Y_t$) must reach~$x_t$ ($y_t$) latest at time step~$t$
 since each~$\{x_t,x_{t+1}\}$ ($\{y_{t+1},y_t\}$) has deadline~$t$).
 Then~$X_{t'}$ ($Y_{t'}$) must be earlier than~$t'$ at~$x_\tau$ ($y_\tau$).
 Recall that all left-blocker (right-blocker) paths
 visit~$x_\tau$ ($y_\tau$).
 Hence,
 iterating the argument leads to a contradiction.
\end{proof}

\begin{lemma}
\label{obs:nphard:paths:capone:val}
 Let~$I'$ be a \yes-instance.
 Then for every color~$i\in\set{k}$,
 every left color-path
 and validation path
 for~$i$
 starts at time step~$a_v$ for some~$v\in V_i$,
 and every right color-path for~$i$
 starts at time step~$b_v$ for some~$v\in V_i$.
 Moreover,
 at the same time it starts,
 every left color-path
 reaches~$v_\ell$,
 validation path reaches~$v^*$,
 and right color-path reaches~$v_r$.
\end{lemma}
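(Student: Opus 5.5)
The plan is to read off the ``forbidden'' time steps at each vertex from \cref{obs:nphard:paths:capone:xytau}, and then pin down the color-paths by counting. I treat only the left side; the right side (with $y$, $v^2_i$, $R_i$, $B_i$, $v_r$) is symmetric.

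\emph{Step 1: what the blockers occupy.} By \cref{obs:nphard:paths:capone:xytau}, $X_t$ is at $x_\tau$ exactly at time $t$, so $\TER(X_t,\{v^1_1,x_\tau\})=t$. Every edge between $v_\ell$ and $x_\tau$ has traversal time zero, and $X_t$ cannot reach $x_\tau$ before $t$. Hence every inner vertex of $X_t$ strictly between its source and $x_\tau$ is occupied by $X_t$ at time $t$. This includes $v^1_1$, which is inner unless $v^1_1$ is itself the source. The source of $X_t$ is occupied at its departure time, which is at most $t$. I would first show it is exactly $t$, by an induction on $t$ in the style of \cref{obs:nphard:paths:capone:xytau}: an earlier departure would make $X_t$ wait, and the waiting would collide with $X_{t'}$ for some $t'<t$. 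Granting this, consider $t\notin L_i$. Then the source of $X_t$ is $v_\ell$ or some $v^1_j$ with $j\ge i$, because the sets $L_j$ are nested. So at time $t$ the vertex $v^1_i$ is occupied by $X_t$, and since $c\equiv1$ no color-path can be there. The same holds for $v_\ell$ at every $t\notin L_k$. Condition~(ii) of temporal edge-disjointness with $\trt=0$ also forbids using the traversed edges at that time.

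\emph{Step 2: counting at $v_\ell$.} Every left color-path and every validation path of every color visits $v_\ell$, either as its sink or as an inner vertex. By Step~1 it can only do so at times in $L_k=A_1\cup\dots\cup A_k$. There are $\sum_i m_i$ such paths. By the endpoint-separation assumption, $|L_k|=\sum_i m_i$. Since $c(v_\ell)=1$, each path occupies $v_\ell$ for exactly one time step, and together they occupy each time in $L_k$ exactly once.

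\emph{Step 3: induction on the color $i$.} The claim to prove is that the $m_i$ left color-paths and the validation path of color $i$ start at pairwise distinct times of $A_i$, and reach $v_\ell$ (respectively $v^*$ for the validation path) at their start time. By Step~1 they start at times in $L_i$, and these times are distinct because $c(v^1_i)=1$. By the induction hypothesis, the times in $L_{i-1}$ are already consumed at $v^1_i$: the color-$j$ paths with $j<i$ pass through $v^1_i$ without waiting, at their start times, which cover $A_j$. So the color-$i$ paths can only use $A_i$, and since $|A_i|=m_i$ they use all of it. The validation path continues through $v_\ell$ to $v^*$, and $v^*$ is likewise blocked outside $L_k$.

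\emph{Main obstacle: excluding waiting.} If a color-path departs at $s$ and is still at some inner vertex at a time $s'>s$, the counting becomes delicate. I would handle this in two ways. First, Step~2 already forces exactly one time step at $v_\ell$. Second, waiting at some $v^1_{j'}$ over an interval $[s,s']$ would occupy a time in $A_{j''}$ for some $j''$ whose paths, by the counting, must pass through $v^1_{j'}$ at exactly that time, contradicting $c(v^1_{j'})=1$. Formally I expect to run Steps~2 and~3 as one simultaneous induction over the vertices $v^1_1,\dots,v^1_k,v_\ell$ from left to right. At each vertex, the number of paths that must pass equals the number of unblocked times, so every path occupies each vertex exactly once and at its start time.
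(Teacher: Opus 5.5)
\paragraph{Gap in Step~1.} Steps~2 and~3 both rest on Step~1, and Step~1 is not established. \cref{obs:nphard:paths:capone:xytau} and the zero traversal time of $\{v^1_1,x_\tau\}$ only tell you that $X_t$ departs $v^1_1$ at time $t$. That $X_t$ is also at $v^1_2,\dots,v^1_k,v_\ell$ (and at its source) at time $t$ requires that $X_t$ never waits on the segment, which is exactly what you ``grant''.

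Your proposed justification is an induction on $t$ in which any waiting of $X_t$ collides with some $X_{t'}$, $t'<t$. This fails, because the slot in which $X_t$ waits can belong to a color-path rather than to a blocker. For example, let $1\in A_1$. Then $2$ is no endpoint, so $X_2$ starts at $v_\ell$, and it could reach $v^1_1$ at time $1$ and wait there until time $2$. The only earlier blocker, $X_1$, is an $(x_\tau,x_1)$-path and never visits $v^1_1$. So the only path $X_2$ could collide with is the color-$1$ path that ought to start at time $1$, and that start time is precisely what the lemma is meant to establish. Hence Step~2 (``by Step~1 it can only do so at times in $L_k$'') and Step~3 (``by Step~1 they start at times in $L_i$'') are circular as written. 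Your closing paragraph addresses waiting of color-paths, but it still measures ``unblocked times'' via Step~1.

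\paragraph{The missing idea: count the blockers too.} Fix a vertex $u$ of the segment.
\begin{itemize}
\item For $u=v^1_i$, the paths containing $u$ are exactly the $\tau-|L_i|$ blockers $X_t$ with $t\notin L_i$ (their sources are $v^1_j$ with $j\ge i$, or $v_\ell$), together with the $\sum_{j\le i}m_j=|L_i|$ left color-paths and validation paths of colors $j\le i$.
\item For $u=v_\ell$, they are the $\tau-|L_k|$ blockers starting there, together with all $|L_k|$ left color-paths and validation paths.
\end{itemize}
So exactly $\tau$ paths must be located at $u$. Since $u$ has capacity one and only $\tau$ time steps exist, each of these paths is at $u$ for exactly one step.

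Consequently no path waits anywhere on $v^1_1,\dots,v^1_k,v_\ell$. With zero traversal times, every path crosses this segment within a single time step. For $X_t$ that step is $t$, by \cref{obs:nphard:paths:capone:xytau}. For a color-path it is its start time, so a validation path also reaches $v^*$ at its start time.

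With this, your Step~1 conclusion holds and Step~3 goes through as you wrote it. Alternatively, your left-to-right induction over the vertices can be rescued by deriving the blocked slots at $v^1_i$ from the absence of waiting at $v^1_1,\dots,v^1_{i-1}$, rather than from Step~1.
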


\begin{proof}
 We only discuss the left side since the right side goes analogously.
 Suppose not,
 and lets start with the smallest~$i\in\set{k}$
 where this happens.
 If~$i=1$,
 we know that by~\cref{obs:nphard:paths:capone:xytau}
 and since every path starts at~$x_\tau$ or right of it,
 that the only time steps at which no path is at~$v^1_1$
 are those in~$L_1$.
 Suppose not,
 that is,
 there is a~$a_v<t^*<a_w$,
 $v,w\in V_1$,
 where a color-path starts.
 Then the blocker path~$X_{t^*}$ must be leaving~$v^1_1$ at some time step~$t<t^*$.
 Thus,
 $t^*$ paths must leave at~$t^*$ time steps.
 Hence,
 at each time step one is leaving,
 in particular on time step~$a_v$.
 This path then collides at~$x_\tau$ with~$X_{a_v}$,
 which is again because of~\cref{obs:nphard:paths:capone:xytau},
 a contradiction.

 Now suppose that a color-path~$P$ starting left of~$v^*$ leaving at~$a_v$,
 $v\in V_1$,
 arrives at~$v_\ell$ at some time step~$t>a_v$.
 We know that there is blocker path~$X_{a_v+1}$
 that starts at~$v_\ell$ since there is a difference of at least two between any two interval start times.
 Hence,
 $X_{a_v+1}$ and~$P$ must be at the same vertex at the same time,
 yielding a contradiction.
 With the same argument,
 the validation path must proceed from~$v_\ell$ to~$v^*$
 at time step~$a_v$ when it arrives.

 Now we can assume that it holds true for all~$1\leq i'<i$.
 By induction,
 only the time labels~$a_v$ for each~$v\in V_i$ remain
 as possible starting times for the color-paths regarding color~$i$.
 With the same arguments as before,
 each of the color-paths regarding color~$i$ starting left of~$v^*$
 are leaving~$v^1_i$ and arriving at~$v_\ell$
 (and hence at~$v^*$ for the validation path)
 at time step~$a_v$ for each~$v\in V_i$.
\end{proof}

Finally,
we established the same
setup for the color-paths as in Klobas~\etal~\cite[Theorem~6]{KlobasMMNZ23}.
Thus, the correctness proof goes analogously to theirs,
yet we give it for the sake of completeness.

\begin{proof}[Proof of \cref{thm:nphard:paths:capone}]
 Let~$I'=(\DG,\Pset)$ be the instance of \simevacAcr{}
 obtained from an instance~$I=(G,k)$ with interval graph~$G=(V,E)$
 and partition~$V=V_1\uplus\dots\uplus V_k$
 of \misuigAcr{} via \cref{const:nphard:paths:capone}.
 Note that we can construct~$I'$ in time polynomial in~$|I|$,
 since,
 in accordance with Klobas~\etal~\cite[Theorem~6]{KlobasMMNZ23},
 we can assume a interval graph representation of~$G$ where every two endpoints of the intervals have a difference of at least two and all endpoints are in~$O(k+|V|^2)$.

 \RD{}
 Let~$I$ be a \yes-instance and
 let~$X$ be an independent set in~$G$ such that~$|X\cap V_i|=1$ for every~$i\in\set{k}$.
 Let~$u_i$ be the vertex in~$X\cap V_i$ for every~$i\in\set{k}$.
 Route all the blocker paths in accordance with~\cref{obs:nphard:paths:capone:xytau}.
 Now,
 for every~$i\in\set{k}$,
 assign the~$(v_i^1,v_i^2)$-path the time step~$a_{u_i}$ on all edges left of~$v^*$,
 and~$b_{u_i}$ on all edges right of~$v^*$.
 For each~$v\in V_i\setminus \{u_i\}$,
 assign one of the left color-paths the time step~$a_v$ to all its edges,
 and one of the right color-paths the time step~$b_v$ to all its edges.
 By this, all color-paths are assigned.
 Recall that this is possible since all endpoints differ by at least two.
 Finally,
 observe that for every distinct~$i,j\in\set{k}$,
 it holds that $S_{u_i}\cap S_{u_j}=\emptyset$ since~$u_i$ and~$u_j$ are part of the independent set~$X$.
 Hence,
 since the validation color-paths for~$i$ and~$j$ are located at~$v^*$ at exactly all time steps in~$S_{u_i}$ and in~$S_{u_j}$,
 respectively,
 they do not meet in~$v^*$.
 Thus,
 we constructed a valid temporalization.

 \LD{}
 Let~$I'$ be a \yes-instance
 and let~$\TER$ be a valid temporalization such
 that every validation color-path leaves~$v^*$ earliest possible.
 By \cref{obs:nphard:paths:capone:xytau},
 the left-blocker paths~$X_t$ and $Y_t$ are exactly at time step~$t$
 at~$x_\tau$ and at~$y_\tau$,
 respectively.
 By~\cref{obs:nphard:paths:capone:val},
 for every color~$i\in\set{k}$,
 every left color-path for~$i$
 must start and reach~$v_\ell$ at time step~$a_v$ for some~$v\in V_i$
 and every right color-path
 for~$i$ must start and reach~$v_r$ at time step~$b_w$ for some $w\in V_i$.
 Same for the validation color-path for~$i$:
 it must start and reach~$v^*$ at time step~$a_v$ for some~$v\in V_i$,
 and depart from~$v^*$ and terminate at step~$b_w\geq a_v$ with~$w\in V_i$.
 Since~$V_i$ forms an independent set and due to the choice of~$\TER$ such that every validation color-path leaves~$v^*$ earliest possible,
 it holds true that~$v=w$.
 Thus,
 every validation color-path for color~$i\in\set{k}$ corresponds to an interval
 and hence to a vertex~$u_i\in V_i$.
 We claim that~$X=\{u_i\mid i\in\set{k}\}$ is a solution to~$I$.
 Clearly,
 $|X\cap V_i|=1$ by construction for every~$i\in\set{k}$.
 Since~$\TER$ is a valid temporalization,
 we know that the capacity of 1 of~$v^*$ is respected at all time steps.
 Thus,
 for every two distinct colors~$i,j\in\set{k}$,
 the two validation color-paths for colors~$i$ and~$j$ do not meet in~$v^*$
 (i.e., they are located at~$v^*$ for disjoint sets of time steps).
 Hence,
 the two intervals~$S_{u_i}$ and~$S_{u_j}$ are disjoint,
 and hence,
 there is no edge between~$u_i$ and~$u_j$.
 It follows that~$X$ is solution.
\end{proof}

\subsubsection{On paths with constant deadlines}
\label{sec:nphard:paths:constdl}

\newcommand{\occ}{\operatorname{vl}}
\newcommand{\InS}{R}
\newcommand{\OutS}{L}
\newcommand{\InF}{\rho}
\newcommand{\OutF}{\lambda}

Next we show that
presumably,
\cref{thm:nphard:paths:capone} does not hold for constant deadlines,
even if we allow general capacities.
In fact,
we show an even stronger result that
if the maximum \emph{vertex load} $\occ(I)\ceq \max_{v\in V} |\Pset(v)|$,
i.e., the maximum number of paths containing the same vertex,
is constant,
then
\simevacAcr{} is solvable in polynomial-time.

\begin{theorem}
 \label{thm:path:dp}
 \simevacAcr{} on decaying paths is solvable
 in
 $O(n\cdot |\Pset| + \tau^{4\occ(I)}\cdot \occ(I)^2 \cdot n)$
 time,
 where~$I$ denotes any input instance with~$n$ vertices,
 lifetime~$\tau$,
 and set~$\Pset$ of paths.
\end{theorem}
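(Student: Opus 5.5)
We design a left-to-right dynamic program along the path $v_1,\dots,v_n$, where $\calU(G)$ is this path. The key fact is that every path in $\Pset$ is a contiguous subpath. Consequently the only interaction between the part of the instance left of the connection $e_i$ between $v_i$ and $v_{i+1}$ and the part right of it goes through the paths traversing $e_i$, together with the times at which they traverse it. So the plan is to cut at each connection $e_i$ and record, for every path crossing $e_i$, its departure time on $e_i$.

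\textbf{Preprocessing.} In $O(n\cdot|\Pset|)$ time we compute, for every vertex $v_i$, the set $\Pset(v_i)$. For every connection $e_i$ we compute the set $\Pset_i$ of paths using $e_i$ and record their directions. Note that $|\Pset_i|\le \occ(I)$.

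\textbf{States.} A state for $e_i$ is an assignment $\sigma_i\colon \Pset_i\to T$ of departure times on $e_i$ that satisfies three local conditions. First, $\sigma_i(P)+\trt(e_i)\le \dl(e_i)$ (deadline). Second, paths using $e_i$ in the same direction have distinct departure times. Third, paths using an undirected $e_i$ in opposite directions have departure times differing by at least $\max\{1,\trt(e_i)\}$. There are at most $\tau^{\occ(I)}$ states per connection. We define a boolean table $D_i[\sigma_i]$ which is true if and only if $\sigma_i$ extends to departure times on $e_1,\dots,e_{i-1}$ for all paths such that every constraint located at $e_1,\dots,e_i$ and at $v_1,\dots,v_i$ is satisfied.

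\textbf{Transition.} Fix a vertex $v_{i+1}$ and consider consecutive states $\sigma_i$ and $\sigma_{i+1}$. The constraints at $v_{i+1}$ involve only $\Pset(v_{i+1})$. Each path in $\Pset(v_{i+1})$ falls into one of four kinds:
\begin{itemize}
\item[] it passes through $v_{i+1}$, so it uses both $e_i$ and $e_{i+1}$;
\item[] it ends at $v_{i+1}$;
\item[] it starts at $v_{i+1}$;
\item[] it passes in the other direction.
\end{itemize}
In every case, its arrival time at $v_{i+1}$ and its departure time from $v_{i+1}$ are determined by $\sigma_i$ and $\sigma_{i+1}$: the arrival is $\sigma(P)+\trt$ on the incoming connection, and the departure is the $\sigma$-value on the outgoing connection. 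So we can check adequacy, namely arrival $\le$ departure for passing paths. We can also check the capacity $c(v_{i+1})$ by sweeping over the at most $\tau$ time steps and counting, for each, the at most $\occ(I)$ intervals of presence. Temporal edge-disjointness is entirely local to a single connection and is already enforced inside the states. Hence
\[
D_{i+1}[\sigma_{i+1}] = \bigvee_{\sigma_i\text{ compatible at }v_{i+1}} D_i[\sigma_i].
\]
Capacity at $v_1$ and $v_n$ is checked at the boundary states. The instance is a yes-instance if and only if some $D_{n-1}$ entry is true. Correctness follows by induction on $i$, since all constraints are local to a vertex or a connection and each path's times are consistent across the cuts by construction.

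\textbf{Running time.} Enumerating pairs of states costs $\tau^{2\occ(I)}$ per vertex. Checking compatibility naively costs $O(\tau\cdot \occ(I))$ for the capacity sweep plus $O(\occ(I)^2)$ for pairwise checks. A direct bound gives $\tau^{2\occ(I)+1}\occ(I)^2 n$, which lies within the claimed $\tau^{4\occ(I)}\occ(I)^2 n$ as long as $\tau\ge 2$ or $\occ(I)\ge 1$; the remaining cases are trivial. The more generous exponent $4\occ(I)$ in the statement suggests the authors encode a state per vertex instead, consisting of (arrival, departure) pairs. That gives $\tau^{2\occ(I)}$ states and $\tau^{4\occ(I)}$ pairs, which is equally fine.

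\textbf{Main obstacle.} The main care point is the treatment of paths starting or ending at $v_{i+1}$. Such a path is present only at a single time step: at $\TER(P,e_1)$ at its source, and at $\TER(P,e_k)+\trt(e_k)$ at its sink. The capacity count must treat these presences correctly, and the mixed arc/edge cases for $e_i$ must be handled in the state definitions. Beyond this, everything is routine bookkeeping.
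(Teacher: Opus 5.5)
Your proposal is correct and is essentially the paper's dynamic program. The paper indexes its table by vertices, but its state $(\rho_i,\lambda_i)$ at $v_i$ records only the arrival times of right-moving paths and the departure times of left-moving paths crossing the connection between $v_{i-1}$ and $v_i$, which is exactly your cut state; its transitions check the same local conditions (monotonicity, deadlines and temporal disjointness on that connection, and capacity at the shared vertex via overlapping presence intervals). The exponent $4\operatorname{vl}(I)$ in the statement just comes from the paper's loose count of $\tau^{2|\mathcal{P}(v_i)|}$ entries per vertex, so your sharper bound $\tau^{2\operatorname{vl}(I)+1}\operatorname{vl}(I)^2 n$ fits within it, with only the trivial case $\operatorname{vl}(I)=0$ (no paths) needing separate mention.
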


For an instance~$I$ on a decaying path with lifetime~$\tau$,
we can assume that~$\occ(I)\leq 4\tau$,
as otherwise~$I$ is a \no-instance:
at most four paths can arrive and leave at a time step in a vertex.
Thus,
\simevacAcr{} on decaying paths is
polynomial-time
solvable for constant
deadlines.

We prove \cref{thm:path:dp} via dynamic programming
over the vertices of the decaying path from ``left to right''.
On a high level,
the dynamic program at vertex~$v_{i+1}$
considers all possible ways paths that traverse to or from ``the left''
are scheduled to leave or arrive
from or at~$v_i$ and~$v_{i+1}$.
Thereby,
it takes care only about the capacities of~$v_{i}$ and~$v_{i+1}$
as well as about monotonicity and temporally edge-disjointness
regarding the connection(s) between~$v_{i}$ and~$v_{i+1}$.

Let~$\DG=\DGtuple$ be a decaying path
with vertex set~$V=\{v_1,\dots,v_n\}$ and connections exactly between~$v_i$ and~$v_{i+1}$ for all~$i\in\set{n-1}$.
Let~$c$ denote the capacities and~$\Pset$ the set of paths.
For every~$i\in\set{n}$,
let~$\DG_i=(V_i,E_i,A_i,c_i,\trt_i,\dl_i,\tau)$ denote~$\DG$ restricted to~$V_i\ceq \{v_1,\dots,v_i\}$,
where~$E_i$ and~$A_i$ contains every connection from~$E$ and~$A$ with both endpoints in~$V_i$,
$c_i\colon V_i\to \N$ with~$c_i(v)=c(v)$,
$\dl_i\colon E_i\cup A_i\to \set{\tau}$ with~$\dl_i(e)=\dl(e)$,
$\trt_i\colon E_i\cup A_i\to \set[0]{\tau-1}$ with~$\trt_i(e)=\trt(e)$.
Let~$\Pset_i$ denote the set of all paths~$P\in\Pset$ with~$|V(P)\cap V_i|\geq 2$ restricted to~$V_i$;
that is,
a path~$P=(v_j,\dots,v_k)$ with~$j<k$ and~$j<i$ restricted to~$V_i$ is the path~$P_i=(v_j,\dots,v_{\min\{k,i\}})$,
and a path~$P=(v_j,\dots,v_k)$ with~$j>k$ and~$k<i$ restricted to~$V_i$ is the path~$P_i=(v_{\min\{i,j\}},\dots,v_k)$.
Let~$\InS_i\subseteq \Pset_i$ be the set of paths that arrive at~$v_i$
and let $\OutS_i\subseteq \Pset_i$ be the set of paths that depart from~$v_i$.

\begin{construction*}
 \label{constr:path:dp}
Let $I=(\DG, \Pset)$ be an instance of \simevacAcr{}
consisting of a decaying path~$\DG=\DGtuple$ with~$V=\{v_1,\dots,v_n\}$
and
a set~$\Pset=\{P_1,\dots,P_p\}$ of paths in~$G=(V,E,A)$.
For all~$i\in\set{n}$,
for all assignments $\InF_i\colon \InS_i\to \set{\tau}$
of the arrival times of paths in~$\InS_i$ at~$v_i$,
and for all assignments~$\OutF_i\colon \OutS_i\to\set{\tau}$ of the departure times of paths in~$\OutS_i$ from~$v_i$,
define $T[i,\InF_i,\OutF_i]\in\{\true,\false\}$ as follows.

Set~$T[1,\InF_1,\OutF_1] = \true$
(since~$\InS_1=\OutS_1=\emptyset$,
as convention,
there is only one~$\InF_1$ and~$\OutF_1$ mapping $\InS_1$ and $\OutS_1$ to 1 and $\tau$,
respectively).
Iteratively for~$i=1,\dots,n-1$,
we set~$T[i+1,\InF_{i+1},\OutF_{i+1}]=\true$ if and only if
there exist $\InF_{i}$ and~$\OutF_{i}$ such that $T[i,\InF_{i},\OutF_{i}]=\true$ and each of \eqref{eq:path:dp:monoton:arr}--\eqref{eq:path:dp:cap} holds true,
which are defined as follows.
Herein,
with $\ora{e_i}$ we refer to~$(v_i,v_{i+1})$ if $(v_i,v_{i+1})\in A$ and to~$\{v_i,v_{i+1}\}$ otherwise;
similarly, with~$\ola{e_i}$ we refer to~$(v_{i+1},v_i)$ if $(v_{i+1},v_i)\in A$ and to~$\{v_i,v_{i+1}\}$ otherwise.

First,
we address monotonicity and deadlines.
\begin{align}
  \forall\, P\in \InS_i\cap \InS_{i+1}: && \InF_i(P) &\leq \InF_{i+1}(P)-\trt(\ora{e_i})
  \label{eq:path:dp:monoton:arr}
  \\
  \forall\, P\in \OutS_i\cap \OutS_{i+1}: && \OutF_i(P) &\geq \OutF_{i+1}(P)+\trt(\ola{e_i})
  \label{eq:path:dp:monoton:dep}
  \\
  \forall\, P\in \InS_{i+1}: && \dl(\ora{e_i}) &\geq \InF_{i+1}(P)
  \label{eq:path:dp:dl:arr}
  \\
  \forall\, P\in \OutS_{i+1}: && \dl(\ola{e_i}) &\geq \OutF_{i+1}(P)+\trt(\ola{e_i})
  \label{eq:path:dp:dl:dep}
\end{align}
Next,
we address temporal disjointness.
\begin{align}
  \forall\, P,Q\in \InS_{i+1},\, P\neq Q:\qquad\quad  \InF_{i+1}(P) \neq \InF_{i+1}(Q)\qquad\qquad
  \label{eq:path:dp:td:arr}
  \\
  \forall\, P,Q\in \OutS_{i+1},\, P\neq Q:\qquad\quad \OutF_{i+1}(P) \neq \OutF_{i+1}(Q)\qquad\qquad
  \label{eq:path:dp:td:dep}
  \\
\begin{aligned}
	e_i=\{v_i,v_{i+1}\}\in E \implies \big(\forall P&\in \InS_{i+1}, Q\in \OutS_{i+1}:
  \\
	\max\{1,\trt(e_i)\}
  &\leq  |\InF_{i+1}(P) - \trt(e_i) - \OutF_{i+1}(Q)| \big)
  \label{eq:path:dp:td}
  \end{aligned}
\end{align}
Finally, we address the capacities. First for~$v_{i+1}$, second for~$v_i$:
\begin{align}
 &c(v_{i+1})=1 \implies \forall P\in \InS_{i+1}, Q\in \OutS_{i+1}:\: \InF_{i+1}(P) \neq \OutF_{i+1}(Q)
 \label{eq:path:dp:cap:end} \\
 &\forall t\in \set{\tau}:\:  c(v_i)  \geq |\{P\in\Pset(v_i)\mid t\in S_i(P)\}|,
 \label{eq:path:dp:cap}
\end{align}
where for all~$P\in\Pset(v_i)$ we have
\begin{align*}
   S_i(P) = \begin{cases}
							[\InF_i(P), \InF_{i+1}(P)-\trt(\ora{e_i})],& \cif{}P\in \InS_i\cap \InS_{i+1} ,
							\\
							[\OutF_{i+1}(P)+\trt(\ola{e_i}), \OutF_{i}(P)],& \cif{}P\in \OutS_i\cap \OutS_{i+1} ,
							\\
							[\InF_i(P), \InF_i(P)],& \cif{}P\in \InS_i\setminus \InS_{i+1} ,
							\\
							[\InF_{i+1}(P)-\trt(\ora{e_i}), \InF_{i+1}(P)-\trt(\ora{e_i})],& \cif{}P\in \InS_{i+1} \setminus \InS_i,
							\\
							[\OutF_i(P), \OutF_i(P)],& \cif{}P\in \OutS_i\setminus \OutS_{i+1} ,
							\\
							[\OutF_{i+1}(P)+\trt(\ola{e_i}), \OutF_{i+1}(P)+\trt(\ola{e_i})],& \cif{}P\in \OutS_{i+1} \setminus \OutS_i. \quad \diamond
							\\
						\end{cases}
\end{align*}
\end{construction*}

Intuitively,
the dynamic program is correct since at vertex~$v_{i+1}$,
to extend paths coming from ``left'' to~$v_{i+1}$ or leaving to the ``left'' from~$v_{i+1}$,
all that matters is whether this extension respects the deadline of the connection(s) between $v_i$ and~$v_{i+1}$ (see~\eqref{eq:path:dp:dl:arr} and~\eqref{eq:path:dp:dl:dep}),
is temporally edge-disjoint between~$v_{i}$ and~$v_{i+1}$ (see~\eqref{eq:path:dp:td:arr}--\eqref{eq:path:dp:td}),
respects the capacities of~$v_{i}$ and~$v_{i+1}$ (see~\eqref{eq:path:dp:cap:end} and \eqref{eq:path:dp:cap}),
and fits to a valid temporalization of the set~$\Pset_{i}$ of paths in~$\DG_{i}$ (in the sense of monotonicity, see~\eqref{eq:path:dp:monoton:arr} and \eqref{eq:path:dp:monoton:dep}).
Whether such an extension fits depends mainly on when the paths to extend arrive or leave~$v_{i}$,
which is stored in~$T[i,\InF_{i},\OutF_{i}]$ for all possible entries $\InF_{i}$ and $\OutF_{i}$.

For each of the at most~$\tau^{2|\Pset(v_{i+1})|}$ entries~$T[i+1,\cdot,\cdot]$,
we have to check at most~$\tau^{2|\Pset(v_{i})|}$ entries~$T[i,\cdot,\cdot]$ with respect to~\eqref{eq:path:dp:monoton:arr}--\eqref{eq:path:dp:cap}.
Assuming all of the path subsets like~$R_i$ and~$L_i$ are precomputed in~$O(n\cdot|\Pset|)$ time,
each of such a check takes time
$O(|\Pset(v_{i+1})|^2)$ (for \eqref{eq:path:dp:monoton:arr}--\eqref{eq:path:dp:cap:end}) and
$O(|\Pset(v_{i-1})| \cdot \log(|\Pset(v_{i-1})|)$ (for~\eqref{eq:path:dp:cap}).

\begin{lemma}
 \label{obs:path:dp:clique}
 For every~$i\in\set{n-1}$,
 constraint~\eqref{eq:path:dp:cap} can be verified
 in~$O(|\Pset(v_i)| \cdot \log(|\Pset(v_i)|))$ time.
\end{lemma}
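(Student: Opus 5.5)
Constraint~\eqref{eq:path:dp:cap} asks whether, at every time step $t\in\set{\tau}$, the number of intervals $S_i(P)$, $P\in\Pset(v_i)$, containing $t$ is at most $c(v_i)$. Equivalently, the maximum point-coverage of the family $\{S_i(P)\mid P\in\Pset(v_i)\}$ must not exceed $c(v_i)$. This is the maximum clique size of the interval graph formed by these intervals, which motivates the label of the lemma. I will compute this maximum by a standard sweep over interval endpoints rather than by iterating over all $\tau$ time steps. Iterating over time steps would cost $\Theta(\tau)$ and would not fit the claimed bound.

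First, for every $P\in\Pset(v_i)$, I compute the interval $S_i(P)=[\ell_P,r_P]$ in constant time. By the case distinction in \cref{constr:path:dp}, this only requires looking up $\InF_i(P)$, $\InF_{i+1}(P)$, $\OutF_i(P)$, $\OutF_{i+1}(P)$ and the traversal times of $\ora{e_i}$ and $\ola{e_i}$. Membership of $P$ in $\InS_i$, $\InS_{i+1}$, $\OutS_i$, $\OutS_{i+1}$ is assumed to be precomputed, as stated before the lemma. If some interval is empty ($\ell_P>r_P$), then the monotonicity constraints \eqref{eq:path:dp:monoton:arr}--\eqref{eq:path:dp:monoton:dep} already fail, and the entry is rejected before this check. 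Every path in $\Pset(v_i)$ falls into exactly one case, because paths on a decaying path visiting $v_i$ either arrive at or depart from $v_i$ with respect to the side of $v_{i+1}$ or $v_{i-1}$. This yields $m\ceq|\Pset(v_i)|$ intervals in $O(m)$ time.

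Next, I create $2m$ events: a \emph{start} event at $\ell_P$ with weight $+1$ and an \emph{end} event at $r_P+1$ with weight $-1$. Because intervals are closed and integral, a path that leaves at time $t$ and one that arrives at time $t$ are both counted at $t$, which matches the definition of location. I sort the events by time in $O(m\log m)$ time, placing $-1$ events before $+1$ events at equal times. I then sweep through them while maintaining a running counter, and after each event I compare the counter with $c(v_i)$. The maximum counter value equals $\max_t|\{P\mid t\in S_i(P)\}|$, since the coverage function is piecewise constant and changes only at event times. The sweep takes $O(m)$ time, so the total running time is $O(m\log m)$, as claimed.

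The main obstacle is not algorithmic but bookkeeping. I need to confirm that the sweep with the half-open endpoint $r_P+1$ correctly reproduces the discrete coverage count, and that the case distinction for $S_i(P)$ assigns each path in $\Pset(v_i)$ exactly one interval. For the first point, I will argue that coverage at an integer $t$ equals the number of $\ell_P\le t$ minus the number of $r_P+1\le t$, and the tie-breaking rule makes the counter value after processing all events at time $t$ equal exactly this quantity. For the second point, I will use that on a path every $P\ni v_i$ with $v_i$ interior passes through both $e_{i-1}$ and $e_i$, while endpoints at $v_i$ give the singleton cases.
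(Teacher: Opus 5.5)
Your proposal is correct and follows the paper's proof. You compute the intervals $S_i(P)$ in $O(|\Pset(v_i)|)$ time via the six-case distinction, then decide the constraint by computing the maximum point coverage (the maximum clique of the interval graph) in $O(m\log m)$ time; the only difference is that the paper cites Gupta, Lee, and Leung for this last step, whereas you spell out the sort-and-sweep behind it, with correct tie-breaking for closed integral intervals.
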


\begin{proof}
 Let~$i\in\set{n}$.
 The collection~$\calF\ceq\{S_i(P)\mid P\in\Pset(v_i)\}$ can be understood as the input to the problem of finding the maximum clique size in an interval graph,
 where a clique is a subset of intervals that have pairwise non-empty intersection.
 We have the following immediate equivalence:
 $\max_{t\in \set{\tau}} |\{P\in\Pset\mid t\in S_i(P)\}| = r$
 if and only if the maximum clique size of the interval graph formed by~$\calF$ is~$r$.
 This follows directly from the equivalence that a subset~$\calF'$ of intervals have pairwise non-empty intersection
 if and only if
 there exist a~$t\in\set{\tau}$ with~$t\in S_i(P)$ for all~$S_i(P)\in\calF'$.

 Due to~\citet{GuptaLL82},
 we can compute the size of a maximum clique
 for an interval graph with~$N$ vertices
 in~$O(N \cdot \log(N))$ time.
 We can compute~$\calF$ in~$O(|\Pset(v_i)|)$ time,
 since for each~$P\in\Pset(v_i)$,
 we only need to check the indices of $P$'s endpoints
 to determine which of the six cases apply.
 Since we have~$|\calF|\leq |\Pset(v_i)|$,
 the claim follows.
\end{proof}

\begin{lemma}
 \label{lem:path:dp:runtime}
 We can compute all entries of~$T$ from~\cref{constr:path:dp}
 in~$O(n\cdot |\Pset| + \tau^{4\occ(G)}\cdot \occ(G)^2 \cdot n)$.
\end{lemma}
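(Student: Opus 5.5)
The proof is a counting argument over the table of \cref{constr:path:dp}, split into a preprocessing phase and a table-filling phase.

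\textbf{Preprocessing.} For every $i\in\set{n}$ we compute the sets $\Pset(v_i)$, $\InS_i$ and $\OutS_i$. For each path $P\in\Pset$, given by its source and sink indices $j,k$, we walk over the indices between them. At each index $i$ we decide from the signs of $j-i$ and $k-i$ whether the restriction $P_i$ exists and whether it arrives at or departs from $v_i$. This takes $O(n)$ time per path, so $O(n\cdot|\Pset|)$ in total. We also read off $\occ(I)=\max_i|\Pset(v_i)|$, and for each $i$ the connections $\ora{e_i}$, $\ola{e_i}$ together with their traversal times and deadlines, in $O(n)$ time.

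\textbf{Number of entries.} Since $\InS_i$ and $\OutS_i$ are disjoint subsets of $\Pset(v_i)$, the number of pairs $(\InF_i,\OutF_i)$ for fixed $i$ is at most $\tau^{|\InS_i|+|\OutS_i|}\le\tau^{\occ(I)}$. Here we bound by $\tau^{2\occ(I)}$, following the paper's coarser count. Filling $T[i+1,\cdot,\cdot]$ means iterating over every candidate $(\InF_{i+1},\OutF_{i+1})$ and every predecessor $(\InF_i,\OutF_i)$ with $T[i,\InF_i,\OutF_i]=\true$. That is at most $\tau^{2\occ(I)}\cdot\tau^{2\occ(I)}=\tau^{4\occ(I)}$ pairs per index $i$, and $O(n\,\tau^{4\occ(I)})$ pairs over all $n-1$ transitions.

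\textbf{Cost of one check.} For a fixed pair we verify \eqref{eq:path:dp:monoton:arr}--\eqref{eq:path:dp:cap}.
\begin{itemize}
\item Constraints \eqref{eq:path:dp:monoton:arr}--\eqref{eq:path:dp:dl:dep} quantify over single paths, costing $O(\occ(I))$.
\item Constraints \eqref{eq:path:dp:td:arr}, \eqref{eq:path:dp:td:dep}, \eqref{eq:path:dp:td} and \eqref{eq:path:dp:cap:end} quantify over pairs of paths in $\Pset(v_{i+1})$, costing $O(\occ(I)^2)$ by brute force.
\item Constraint \eqref{eq:path:dp:cap} costs $O(\occ(I)\log\occ(I))\subseteq O(\occ(I)^2)$ by \cref{obs:path:dp:clique}.
\end{itemize}
Membership tests such as $P\in\InS_i\cap\InS_{i+1}$ are $O(1)$ after preprocessing, if we store a membership flag per path and index. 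Each check therefore costs $O(\occ(I)^2)$.

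\textbf{Total.} Multiplying gives $O(\tau^{4\occ(I)}\cdot\occ(I)^2\cdot n)$ for filling the table. Adding the preprocessing yields the claimed bound $O(n\cdot|\Pset|+\tau^{4\occ(I)}\cdot\occ(I)^2\cdot n)$.

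\textbf{Main obstacle.} The delicate step is making sure enumerating the assignments adds no hidden overhead. Every entry $(\InF_i,\OutF_i)$ needs a representation of size $O(\occ(I))$ that can be produced in amortized $O(1)$ time per entry, for instance as an odometer-style enumeration of vectors in $\set{\tau}^{|\InS_i|+|\OutS_i|}$. Likewise, computing the intervals $S_i(P)$ must fit within $O(\occ(I))$ per check, which \cref{obs:path:dp:clique} already guarantees. Both costs are absorbed by the $\occ(I)^2$ factor.
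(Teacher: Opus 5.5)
Your proof is correct and follows the paper's argument. It uses $O(n\cdot|\Pset|)$ preprocessing of $\Pset(v_i)$, $\InS_i$, $\OutS_i$, at most $\tau^{2\occ(I)}\cdot\tau^{2\occ(I)}$ pairs of entries per transition, and an $O(\occ(I)^2)$ check per pair, with \eqref{eq:path:dp:cap} handled via \cref{obs:path:dp:clique}. Your side remark is also right: since $\InS_i$ and $\OutS_i$ are disjoint subsets of $\Pset(v_i)$, the per-index count is $\tau^{\occ(I)}$, which would sharpen the exponent from $4\occ(I)$ to $2\occ(I)$.
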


\begin{proof}
 We precompute for each~$i\in\set[2]{n}$
 the set~$\Pset(v_i)$ and its subsets~$\InS_i$ and~$\OutS_i$
 in~$O(|\Pset|)$ time
 by relating the indices of each path's endpoints with~$i$.
 In total,
 we can compute all such sets in~$O(n\cdot |\Pset|)$ time.

 For each~$i\in\set[2]{n}$,
 we need to compute at most~$\tau^{2\cdot|\Pset(v_i)|}$ entries.
 For each such entry,
 we need to compare with at most~$\tau^{2\cdot|\Pset(v_{i-1})|}$ values of~$T[i-1,\cdot,\cdot]$
 (for~$i=2$ we only need one entry to compare with).
 Thus,
 in total,
 we consider at most $\tau^{2\cdot(|\Pset(v_{i-1})|+|\Pset(v_i)|)}$ entries.
 For two fixed pairs of assignments $\InF_i,\OutF_i$ and $\InF_{i-1},\OutF_{i-1}$
 with entries~$T[i,\InF_i,\OutF_i]$ and $T[i-1,\InF_{i-1},\OutF_{i-1}]$,
 we can verify \eqref{eq:path:dp:monoton:arr}--\eqref{eq:path:dp:cap:end} in~$O(|\Pset(v_i)|^2)$.
 Due to~\cref{obs:path:dp:clique},
 we know that we can verify \eqref{eq:path:dp:cap} in~$O(|\Pset(v_{i-1})| \cdot \log(|\Pset(v_{i-1})|)$ time.
 Altogether,
 we fill all entries~$T[i,\cdot,\cdot]$ in~$O(\tau^{2\cdot(|\Pset(v_{i-1})|+|\Pset(v_i)|)}\cdot (|\Pset(v_i)|^2 + |\Pset(v_{i-1})| \cdot \log(|\Pset(v_{i-1})|))$ time.
 It follows that after precomputing~$\Pset(v_i)$ for all~$i\in\set{n}$,
 all entries from~$T$ can be computed in~$O(\tau^{4\occ(G)}\cdot \occ(G)^2 \cdot n)$ time.
\end{proof}

\begin{lemma}
 \label{lem:path:dp:correct}
 For every~$i\in\set{n}$,
 $T[i,\InF_i,\OutF_i] = \true$ if and only if there is a valid temporalization~$\TER$ for instance~$(\DG_i,\Pset_i)$ such that for all~$P\in \InS_i$ we have~$\TER(P,v_{i-1}) = \InF_i(P) -\trt(\ora{e_{i-1}})$ and for all~$P\in \OutS_i$ we have~$\TER(P,v_i) = \OutF_i(P)$.
\end{lemma}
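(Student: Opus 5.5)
We prove \cref{lem:path:dp:correct} by induction on~$i$. For~$i=1$ the instance~$(\DG_1,\Pset_1)$ has no paths (every path needs two vertices in~$V_1$), so the empty temporalization is valid. By convention~$T[1,\InF_1,\OutF_1]=\true$ for the unique pair of empty assignments, so both directions are trivial. For the step we assume the statement for~$i$ and prove it for~$i+1$. We read~$\InF_{i+1}(P)$ as the arrival time of~$P$ at~$v_{i+1}$ over~$\ora{e_i}$, so its departure on~$\ora{e_i}$ is~$\InF_{i+1}(P)-\trt(\ora{e_i})$. We read~$\OutF_{i+1}(P)$ as the departure time of~$P$ from~$v_{i+1}$ over~$\ola{e_i}$, so its arrival at~$v_i$ is~$\OutF_{i+1}(P)+\trt(\ola{e_i})$.

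For the direction ($\Leftarrow$), suppose~$T[i+1,\InF_{i+1},\OutF_{i+1}]=\true$. Then there are~$\InF_i,\OutF_i$ with~$T[i,\InF_i,\OutF_i]=\true$ such that \eqref{eq:path:dp:monoton:arr}--\eqref{eq:path:dp:cap} hold. By induction we get a valid temporalization~$\TER_i$ of~$(\DG_i,\Pset_i)$ matching~$\InF_i,\OutF_i$. We extend it to~$\TER_{i+1}$ by assigning the departure time on~$\ora{e_i}$ or~$\ola{e_i}$ prescribed by~$\InF_{i+1},\OutF_{i+1}$ to every path of~$\Pset_{i+1}$ using the connection between~$v_i$ and~$v_{i+1}$. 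Paths that start at~$v_{i+1}$ and go left receive their first edge here. Paths whose restriction to~$V_i$ had sink~$v_i$ but which continue to~$v_{i+1}$ are extended by one edge.

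We then check validity of~$\TER_{i+1}$ constraint by constraint.
\begin{itemize}
 \item Adequacy follows from \eqref{eq:path:dp:monoton:arr}, \eqref{eq:path:dp:monoton:dep}, \eqref{eq:path:dp:dl:arr} and~\eqref{eq:path:dp:dl:dep}. For right-to-left paths, $\OutF_i(P)$ is their departure from~$v_i$, which must come after their arrival there.
 \item Temporal edge-disjointness on the new connection follows from \eqref{eq:path:dp:td:arr}--\eqref{eq:path:dp:td}. Connections inside~$\DG_i$ are unchanged.
 \item Capacity of~$v_{i+1}$ needs care, because in~$\Pset_{i+1}$ the vertex~$v_{i+1}$ is an endpoint of every path through it. Each path sits there at a single time step: $\InF_{i+1}(P)$ for arriving paths and~$\OutF_{i+1}(Q)$ for departing ones. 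By \eqref{eq:path:dp:td:arr} and~\eqref{eq:path:dp:td:dep}, at most one arriving and at most one departing path share a time step. So at most two paths sit at~$v_{i+1}$ at any time. This is fine when~$c(v_{i+1})\ge 2$, and \eqref{eq:path:dp:cap:end} excludes the collision when~$c(v_{i+1})=1$.
 \item Capacity of~$v_i$ is the delicate part. Its occupation interval in~$\TER_{i+1}$ differs from that in~$\TER_i$ for exactly the paths that now pass through or newly end or start there. We check that the six cases defining~$S_i(P)$ are precisely the occupation intervals in~$\TER_{i+1}$: passing paths get a full interval, endpoints a single step. Then \eqref{eq:path:dp:cap} is exactly the capacity condition at~$v_i$.
 \item Vertices~$v_j$ with~$j<i$ keep their occupation sets, so their capacities remain respected.
\end{itemize}

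For the direction ($\Rightarrow$), let~$\TER$ be a valid temporalization of~$(\DG_{i+1},\Pset_{i+1})$ consistent with~$\InF_{i+1},\OutF_{i+1}$. We restrict~$\TER$ to~$(\DG_i,\Pset_i)$ by dropping the edge~$e_i$ from each path; paths with only the vertex~$v_{i+1}$ beyond~$V_i$ then end or start at~$v_i$. Restricting preserves adequacy, edge-disjointness and capacities, since occupation sets only shrink (a passing path at~$v_i$ becomes an endpoint). We define~$\InF_i,\OutF_i$ from the arrival and departure times at~$v_i$ in~$\TER$. Induction gives~$T[i,\InF_i,\OutF_i]=\true$. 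Each of \eqref{eq:path:dp:monoton:arr}--\eqref{eq:path:dp:cap} is a direct consequence of validity of~$\TER$, with the same identification of~$S_i(P)$ with the true occupation intervals. Hence~$T[i+1,\InF_{i+1},\OutF_{i+1}]=\true$.

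The main obstacle is the bookkeeping at~$v_i$: we must show that the intervals~$S_i(P)$ exactly capture occupation of~$v_i$ under the combined schedule, including paths newly ending there. We must also fix the indexing conventions, namely the meaning of~$\TER(P,v_{i-1})$ versus arrival times and the direction of travel of paths in~$\OutS$. I would first fix these conventions explicitly and then do a case table over the six cases.
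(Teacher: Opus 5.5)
Your overall route is the same as the paper's. You use induction on $i$: extend the temporalization given by the witness entry $T[i,\InF_i,\OutF_i]$ and check adequacy, edge-disjointness and capacities constraint by constraint; conversely, restrict a valid temporalization of $(\DG_{i+1},\Pset_{i+1})$ and read off $\InF_i,\OutF_i$. Your direction labels are swapped relative to the statement, which is harmless, and your treatment of the capacity of $v_{i+1}$ is more explicit than the paper's.

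The step that fails is ``Adequacy follows from \eqref{eq:path:dp:monoton:arr}--\eqref{eq:path:dp:dl:dep}''. A temporalization takes values in $T=\set{\tau}$, so every departure time must be at least $1$. For $P\in\InS_{i+1}\cap\InS_i$ this follows from \eqref{eq:path:dp:monoton:arr}. For $P\in\InS_{i+1}\setminus\InS_i$, a path whose source is $v_i$ and which travels right, nothing among \eqref{eq:path:dp:monoton:arr}--\eqref{eq:path:dp:cap} forces $\InF_{i+1}(P)-\trt(\ora{e_i})\geq 1$. Constraint \eqref{eq:path:dp:dl:arr} is only an upper bound. Constraint \eqref{eq:path:dp:cap} only counts $t\in\set{\tau}$, so a singleton $S_i(P)=\{0\}$ is invisible to it.

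Concretely, take $\tau=2$, two vertices joined by the arc $(v_1,v_2)$ with $\trt=1$ and $\dl=1$, any capacities, and the single path $P=(v_1,v_2)$. No valid temporalization exists, since departing at time $\geq 1$ forces arrival at time $\geq 2>1$. Yet every constraint holds for $\InF_2(P)=1$, so $T[2,\InF_2,\OutF_2]=\true$. Hence the lemma as stated is false for such entries, and the algorithm of \cref{thm:path:dp} answers incorrectly on this instance.

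The paper's own proof has the identical oversight: it concludes ``Thus, $\TER$ is a temporalization'' after checking only monotonicity and deadlines. So this is a defect of the recurrence, not of your strategy. It is repaired by adding the constraint $\InF_{i+1}(P)\geq 1+\trt(\ora{e_i})$ for all $P\in\InS_{i+1}\setminus\InS_i$, or equivalently by restricting the range of $\InF_{i+1}$ on these paths. This constraint holds automatically when $\InF_{i+1}$ is read off a valid temporalization, so your restriction direction is unaffected. With the fix, your extension direction goes through exactly as you wrote it.
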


\begin{proof}
 We prove this by induction on~$i$.
 For~$i=1$, the statement is trivially correct.
 Now suppose it holds for~$i\geq 1$,
 and consider~$T[i+1,\InF_{i+1},\OutF_{i+1}]$ for some~$\InF_{i+1},\OutF_{i+1}$.

 \RD{}
 Let $T[i+1,\InF_{i+1},\OutF_{i+1}]=\true$.
 Then,
 there is~$T[i,\InF_{i},\OutF_{i}]=\true$ that witnesses this.
 By induction,
 there is a valid temporalization~$\TER'$ for~$(\DG_i,\Pset_i)$ such that for all~$P\in \InS_i$ we have~$\TER'(P,v_{i-1}) = \InF_i(P) -\trt(\ora{e_{i-1}})$ and for all~$P\in \OutS_i$ we have~$\TER'(P,v_i) = \OutF_i(P)$.
 Let~$\TER$ be the assignment~$\TER'$ where additionally
 for all~$P\in \InS_{i+1}$ we have~$\TER(P,v_i) = \InF_{i+1}(P) -\trt(\ora{e_i})$ and for all~$P\in \OutS_{i+1}$ we have~$\TER(P,v_{i+1}) = \OutF_{i+1}(P)$.
 We claim that~$\TER$ is a valid temporalization for~$(\DG_{i+1},\Pset_{i+1})$.
 Since~$\TER'$ is a valid temporalization for~$(\DG_i,\Pset_i)$,
 we only need to check paths in $\InS_{i+1}\cup \OutS_{i+1}$ for the deadline and in~$(\InS_i\cap \InS_{i+1})\cup (\OutS_i\cap \OutS_{i+1})$ for monotonicity.
 This hence follows by~\eqref{eq:path:dp:monoton:arr}--\eqref{eq:path:dp:dl:dep}:
 \begin{align*}
  \forall\, P\in \InS_i\cap \InS_{i+1}: && \TER(P,v_{i-1}) + \trt(\ora{e_{i-1}})
  &= \InF_i(P) \\
  &&&\stackrel{\eqref{eq:path:dp:monoton:arr}}\leq \InF_{i+1}(P)-\trt(\ora{e_i})
  \\
  &&&= \TER(P,v_i)
  \\
  \forall\, P\in \OutS_i\cap \OutS_{i+1}: && \TER(P,v_i)
  &= \OutF_i(P)
  \\
  &&&\stackrel{\eqref{eq:path:dp:monoton:dep}}\geq \OutF_{i+1}(P)+\trt(\ola{e_i})
  \\
  &&&= \TER(P,v_{i+1}) +\trt(\ola{e_i})
 \end{align*}

 \begin{align*}
  \forall\, P\in \InS_{i+1}: && \dl(\ora{e_i}) &\stackrel{\eqref{eq:path:dp:dl:arr}}\geq \InF_{i+1}(P) = \TER(P,v_i) + \trt(\ora{e_i})
  \\
  \forall\, P\in \OutS_{i+1}: && \dl(\ola{e_i}) &\stackrel{\eqref{eq:path:dp:dl:dep}}\geq \OutF_{i+1}(P)+\trt(\ola{e_i}) = \TER(P,v_{i+1})+\trt(\ola{e_i})
 \end{align*}

 Thus,
 $\TER$ is a temporalization,
 and we proceed to show that its valid.
 For temporal disjointness,
 we only need to check paths in $\InS_{i+1}\cup \OutS_{i+1}$,
 and thus,
 this follows from \eqref{eq:path:dp:td:arr}--\eqref{eq:path:dp:td}:

 \begin{align*}
  \forall\, P,Q\in \InS_{i+1}: && \TER(P,v_i) + \trt(\ora{e_i}) = \InF_{i+1}(P) &\stackrel{\eqref{eq:path:dp:td:arr}}\neq \InF_{i+1}(Q)
  \\
  &&&= \TER(Q,v_i) + \trt(\ora{e_i})
  \\
  \forall\, P,Q\in \OutS_{i+1}: && \TER(P,v_{i+1}) = \OutF_{i+1}(P) &\stackrel{\eqref{eq:path:dp:td:dep}}\neq \OutF_{i+1}(Q)
  \\
  &&&= \TER(Q,v_{i+1})
\end{align*}

\begin{align*}
	e_i=\{v_i,v_{i+1}\}\in E \implies \big(\forall P&\in \InS_{i+1}, Q\in \OutS_{i+1}:
  \\
	\max\{1,\trt(e_i)\}
  &\stackrel{\eqref{eq:path:dp:td}}\leq  |\InF_{i+1}(P) - \trt(e_i) - \OutF_{i+1}(Q)|
  \\
  &=  |\TER(P,v_i) - \TER(Q,v_{i+1})| \big)
\end{align*}

 We only need to check whether the capacities of~$v_i$ and~$v_{i+1}$ are respected,
 and this follows from \eqref{eq:path:dp:cap:end} and \eqref{eq:path:dp:cap}.

 \begin{align*}
 c(v)=1 \implies \forall P\in \InS_{i+1}, Q\in \OutS_{i+1}:\: \TER(P,v_i) + \trt(\ora{e_i}) &= \InF_{i+1}(P)
 \\
 &\stackrel{\eqref{eq:path:dp:cap:end}}\neq \OutF_{i+1}(Q)
 \\
 &= \TER(Q,v_{i+1})
 \end{align*}

Finally,
for \eqref{eq:path:dp:cap}
we just argue that the time intervals of each of the paths agree.
For all~$P\in\Pset(v_i)$,
we have the following correspondences for $S_i(P)$.
\begin{align*}
	P\in \InS_i\cap \InS_{i+1}&:\: [\InF_i(P), \InF_{i+1}(P)-\trt(\ora{e_i})]
	\\
	&\qquad=[\TER(P,v_{i-1}) + \trt(\ora{e_{i-1}}), \TER(P,v_{i})]
	\\
	P\in \OutS_i\cap \OutS_{i+1}&:\: [\OutF_{i+1}(P)+\trt(\ola{e_i}), \OutF_{i}(P)]
	\\
	&\qquad=[\TER(P,v_{i+1})+\trt(\ola{e_i}), \TER(P,v_{i})]
	\\
	P\in \InS_i\setminus \InS_{i+1} &:\: [\InF_i(P), \InF_i(P)]
	\\
	&\qquad= [\TER(P,v_{i-1}) + \trt(\ora{e_{i-1}}), \TER(P,v_{i-1}) + \trt(\ora{e_{i-1}})]
	\\
	P\in \InS_{i+1} \setminus \InS_i&:\: [\InF_{i+1}(P)-\trt(\ora{e_i}), \InF_{i+1}(P)-\trt(\ora{e_i})] \\
	&\qquad= [\TER(P,v_{i}), \TER(P,v_{i})]
	\\
	P\in \OutS_i\setminus \OutS_{i+1} &:\: [\OutF_i(P), \OutF_i(P)] = [\TER(P,v_{i}), \TER(P,v_{i})]
	\\
	P\in \OutS_{i+1} \setminus \OutS_i&:\: [\OutF_{i+1}(P)+\trt(\ola{e_i}), \OutF_{i+1}(P)+\trt(\ola{e_i})]
	\\
	&\qquad =  [\TER(P,v_{i+1})+\trt(\ola{e_i}), \TER(P,v_{i+1})+\trt(\ola{e_i})]
\end{align*}

 \LD{}
 Let~$\TER$ be a valid temporalizaton for~$(\DG_{i+1},\Pset_{i+1})$.
 Let for all~$P\in \InS_{i+1}$,
 $\InF_{i+1}(P) = \TER(P,v_i) + \trt(e_i)$ and for all~$P\in \OutS_{i+1}$,
 $\OutF_{i+1}(P) = \TER(P,v_{i+1})$.
 Moreover,
 let for all for all~$P\in \InS_i$,
 $\InF_i(P) = \TER(P,v_{i-1}) + \trt(e_{i-1})$ and for all~$P\in \OutS_i$,
 $\OutF_i(P) = \TER(P,v_i)$.
 Note that the restriction~$\TER'$ of~$\TER$ with respect to $(\DG_i,\Pset_i)$
 is a valid temporalization for $(\DG_i,\Pset_i)$.
 By induction,
 we know that~$T[i,\InF_i,\OutF_i] = \true$.
 We claim that $T[i+1,\InF_{i+1},\OutF_{i+1}]=\true$ with $T[i,\InF_i,\OutF_i]$ as its witness.
 This goes analogously to~($\Rightarrow$).
\end{proof}

\begin{proof}[Proof of \cref{thm:path:dp}]
	Let $I=(\DG, \Pset)$ be an instance of \simevacAcr{}
	consisting of a decaying graph~$\DG=\DGtuple$
	and
	a set~$\Pset=\{P_1,\dots,P_p\}$ of paths in~$G=(V,E,A)$.
 We compute all entries of~$T$
 (see \cref{constr:path:dp})
 iteratively in
 $O(n\cdot |\Pset| + \tau^{4\occ(G)}\cdot \occ(G)^2 \cdot n)$ time
 (see~\cref{lem:path:dp:runtime}).
 Then,
 due to \cref{lem:path:dp:correct},
 we return~\yes{}
 if there are~$\InF_n,\OutF_n$ such that~$T[n,\InF_n,\OutF_n] = \true$,
 and otherwise we safely return~\no{}
 since we tested all possible assignments~$\InF_n,\OutF_n$.
\end{proof}

\subsection{On capacitated stars}
\label{sec:nphard:stars}

For every instance~$I$ on a decaying star with lifetime~$\tau$ and set~$\Pset$ of paths,
since every path contains center~$v^*$,
we have~$|\Pset|=|\Pset(v^*)|=\occ(I)$ ($\occ$ is defined in \Cref{sec:nphard:paths:constdl}).
By guessing all paths' arrival and departure times at~$v^*$,
we get the following.

\begin{theorem}%
 \label{obs:nphard:stars:occ}
 \simevacAcr{} on decaying stars is solvable in~$O(n\cdot \occ(I) + \tau^{2\cdot \occ(I)}\cdot\occ(I)^2)$ time,
 where~$I$ denotes any input instance with~$n$ vertices and lifetime~$\tau$.
\end{theorem}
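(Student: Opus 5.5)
The plan is brute force over the center $v^*$. On a decaying star every path $P\in\Pset$ has either one or two connections, and each contains $v^*$. So a temporalization is fully determined by two numbers per path: the arrival time of $P$ at $v^*$ (if $\src(P)\neq v^*$) and its departure time from $v^*$ (if $\snk(P)\neq v^*$).

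First, in $O(n\cdot\occ(I))$ time, we precompute for each path its at most two connections, their traversal times and deadlines, and whether each connection is an edge or an arc. Since $|\Pset|=\occ(I)$, this amounts to reading the input. We then enumerate all pairs of maps $\InF\colon\Pset\to\set{\tau}$ (arrival at $v^*$) and $\OutF\colon\Pset\to\set{\tau}$ (departure from $v^*$). Components that are meaningless for a path, because it starts or ends at $v^*$, are fixed to a dummy value, so there are at most $\tau^{2\occ(I)}$ candidates. A candidate corresponds to the temporalization with $\TER(P,(\src(P),v^*))=\InF(P)-\trt(\cdot)$ and $\TER(P,(v^*,\snk(P)))=\OutF(P)$. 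Conversely, every valid temporalization arises from some candidate in this way. It follows that the instance is a \yes-instance if and only if some candidate passes the checks below.

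For a fixed candidate we verify the following conditions in $O(\occ(I)^2)$ time:
\begin{enumerate}[(i)]
\item adequacy: $\InF(P)-\trt\geq 1$, $\InF(P)\leq\OutF(P)$, and the deadlines of both connections hold;
\item temporal edge-disjointness: for every pair $P,Q$ sharing a connection in the same direction, their departure times differ, and for every pair using an undirected edge in opposite directions, the gap is at least $\max\{1,\trt(e)\}$;
\item capacities at the leaves: a leaf $u$ is visited by paths only at single time steps (departure from $u$ or arrival at $u$), so we count coincidences pairwise, or by sorting;
\item capacity at $v^*$: the intervals $[\InF(P),\OutF(P)]$ (degenerate when $P$ starts or ends at $v^*$) must have maximum overlap at most $c(v^*)$.
\end{enumerate}
Condition (iv) can be checked exactly as in \cref{obs:path:dp:clique}, via the maximum clique in an interval graph, in $O(\occ(I)\log\occ(I))$ time, which is within $O(\occ(I)^2)$. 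All pairwise checks are trivially $O(\occ(I)^2)$.

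The total running time is $O(n\cdot\occ(I)+\tau^{2\occ(I)}\cdot\occ(I)^2)$. The only delicate point is the capacity accounting at the leaves. Several paths may share a leaf $u$, and a path that both starts and ends at a leaf visits two distinct leaves. We have to make sure each leaf's load is computed only from the $O(\occ(I))$ relevant time stamps, without iterating over all $n$ vertices per candidate. We handle this by grouping the paths by leaf in the precomputation, so that per candidate the leaf checks cost $O(\occ(I)^2)$ overall. Since leaves not touched by any path need no check, this keeps the $n$-dependence confined to the preprocessing term.
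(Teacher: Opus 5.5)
Your proposal is correct and takes the same route as the paper. Both guess every path's arrival and departure time at the center $v^*$, giving at most $\tau^{2\cdot\occ(I)}$ candidates since $|\Pset|=\occ(I)$ on stars, and check each resulting temporalization for validity in $O(\occ(I)^2)$ time. Your explicit handling of the leaf capacities and of the center's capacity via the interval-clique argument of \cref{obs:path:dp:clique} fills in details the paper leaves implicit.
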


\begin{proof}
 For each path,
 guess the arrival and/or departure times at~$v^*$
 such that adequacy holds true---these are most $\tau^{2\cdot \occ(I)}$ many.
 Since each path consists of at most two connections,
 this yields a temporalization
 (note that each path is fully described since the arrival time at~$v^*$ immediately determines its starting time).
 Similar as in~\cref{sec:nphard:paths:constdl},
 we can check in $\occ(I)^2$ time
 (quadratic in the number of paths)
 whether each temporalization is valid.
 Yet,
 in contrast,
 we have to check all edges incident with~$v^*$ that are contained in a path---these are, however, at most~$2|\occ(I)|$ many edges.
\end{proof}

In contrast to decaying paths,
we cannot upper bound $\occ$ by the lifetime
(see~\cref{thm:nphard:stars:cap}).
Yet,
with maximum capacity~$c^*$ and deadline~$d^*$,
there can be at most~$c^*\cdot d^*$ many paths
(in each of the~$d^*$ time steps, at most~$c^*$ paths can be located at~$v^*$).
Since $|\Pset|=\occ(I)$ on decaying stars,
it follows that
	\simevacAcr{} is linear-time solvable on decaying stars with constant capacities and deadlines.

\subsubsection{General capacities and constant deadlines}
\label{sec:nphard:stars:constdl}

\begin{theorem}
 \label{thm:nphard:stars:cap}
 \simevacAcr{} is \NP-hard even on capacitated decaying stars with constant lifetime.
\end{theorem}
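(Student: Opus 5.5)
The plan is a polynomial-time many-one reduction from \prob{3-Coloring}, which is \NP-hard. The reduction produces a decaying star with lifetime $\tau$ a fixed constant (say $\tau\le 6$). The color of a graph vertex is encoded as a time step in $\{1,2,3\}$. Conflicts between adjacent graph vertices are enforced by temporal edge-disjointness (i) on a shared connection to a leaf. Consistency among the several copies of one graph vertex is enforced by vertex capacities together with a counting argument at the center $v^*$.

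The construction is as follows. Let $H=(U,F)$ be the graph to be colored. For every edge $f=\{u,w\}\in F$ we add a leaf $y_f$ with arc $(v^*,y_f)$, traversal time $0$ and deadline $3$. For each of the two endpoints we add a path, $P_{u,f}$ and $P_{w,f}$, both ending in $y_f$. By (i), these two paths must depart $v^*$ towards $y_f$ at different times. For every graph vertex $u$ we add a leaf $z_u$, from which all copies $P_{u,f}$, $f\ni u$, start. The arc $(z_u,v^*)$ gets deadline $3$, and the copies are to depart $v^*$ at the same time they enter it; this is to be forced by tight blocking at $v^*$, as described below.

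Since at most one path can use an arc per time step, we cannot send $\deg(u)$ copies over a single arc $(z_u,v^*)$. Instead, every copy $P_{u,f}$ gets its own leaf $z_{u,f}$, and consistency is routed through the capacities. We introduce a \emph{selector} path $S_u$ and \emph{filler} paths. The intended effect is that $S_u$ occupies $v^*$ at exactly one time $t_u\in\{1,2,3\}$, and that the total number of paths at $v^*$ per time step, together with a global capacity $c(v^*)$ computed from the filler counts, leaves exactly enough room that all copies of $u$ must pass $v^*$ simultaneously with $S_u$. If the center-level counting is too coarse to tie copies to a specific $S_u$, a fallback is to replace the center-based counting by pairing: each copy $P_{u,f}$ shares an undirected edge $\{v^*,z_{u,f}\}$ with a reverse-direction partner that is itself tied to $S_u$. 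Condition (ii) with $\trt=1$ then forces the copy into a specific offset relative to its partner.

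The correctness argument has two directions. ($\Rightarrow$) Given a proper $3$-coloring $\chi$, schedule every copy of $u$ at time $\chi(u)$ and place the fillers in the remaining slots. Edge paths $P_{u,f}$ and $P_{w,f}$ then leave for $y_f$ at $\chi(u)\ne\chi(w)$, and the capacities match exactly by the counting. ($\Leftarrow$) Show first, analogously to \cref{obs:nphard:paths:capone:xytau}, that in any valid temporalization the filler paths are pinned to their intended slots. This follows by an iterated tightness argument, since the deadlines are constant and every slot is fully used. Consequently, all copies of $u$ share one time $t_u$, and disjointness on each arc $(v^*,y_f)$ gives $t_u\neq t_w$ for every edge $\{u,w\}$. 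Hence $u\mapsto t_u$ is a proper $3$-coloring. The instance has size polynomial in $|H|$, and the lifetime is constant.

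I expect the main obstacle to be the consistency gadget, namely forcing the $\deg(u)$ copies of a vertex to pick the same time. In a star, each path touches only two connections, and a single arc carries at most $\tau$ paths. The natural tool is capacity, but the only shared vertex is $v^*$, whose capacity is a single global number. I would first try the pairing gadget on undirected edges with $\trt=1$ sketched above, since (ii) gives local, pairwise timing control. Only if that fails to chain copies would I try a global counting argument at $v^*$ with carefully chosen filler multiplicities.
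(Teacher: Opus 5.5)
\textbf{There is a genuine gap: the consistency gadget.} The whole reduction depends on it, yet you only describe it as an ``intended effect'' with two fallbacks, and neither works as stated.

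Without the gadget nothing links $P_{u,f}$ to $P_{u,f'}$. Once each copy has its own leaf $z_{u,f}$, the copies choose their departure times from $v^*$ independently. So the constraints on the arcs $(v^*,y_f)$ can be met without any coloring.

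\emph{Capacity at $v^*$.} This is a single aggregate bound. Counting can force tightness, i.e.\ how many paths sit at $v^*$ per step, or that no path waits. It can never force \emph{which} paths share a step, so it cannot bind the copies of $u$ to $S_u$.

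\emph{Your two directions also conflict.} Under a proper coloring $\chi$, the number of copies at $v^*$ at step $t$ is $\sum_{u:\chi(u)=t}\deg(u)$, which depends on $\chi$. Fillers that are pinned (as your $(\Leftarrow)$ needs) together with a fixed $c(v^*)$ therefore cannot match ``exactly'' for every coloring (as your $(\Rightarrow)$ claims).

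\emph{The pairing fallback.} Condition (ii) only gives a separation $|x-y|\geq\max\{1,\trt(e)\}$, never an exact offset. Each leaf connection carries at most $O(\tau)$ paths, so $S_u$ interacts directly with only constantly many paths. Relaying its time through intermediaries would require preventing them from waiting at $v^*$, which you do not do.

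\emph{The appeal to \cref{obs:nphard:paths:capone:xytau}.} It has no counterpart here. That lemma relies on the chain $x_1,\dots,x_\tau$ with deadlines $1,\dots,\tau-1$, which your star lacks.

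\textbf{How the paper avoids this.} It never needs copies to share an exact time. It reduces from \prob{Cubic Independent Set}, where each vertex makes a binary early/late choice. A binary choice tolerates the fact that paths sharing a connection must use distinct times. The three verifier paths $P_{v,e}$ leave the same leaf $v$ at steps $1,2,3$. The single vertex path $P_v$ traverses the undirected edge $\{v,v^*\}$ (traversal time $4$) in reverse, and via (ii) it decides for all three at once: leaving $v^*$ at step $1$ pushes them past the early window, while leaving at step $7$ lets them in. Blocker paths, pinned by traversal time $i$ and deadline $i+1$, fix the free capacity of $v^*$ per step. A counting argument then forces exactly $k$ early vertices. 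Independence is checked by the simultaneous departures at step $7$ towards the leaves $v_e$.

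\textbf{A simpler fix with your own ingredients.} Colors as time steps, disequality on shared leaf connections, and a global count at $v^*$ already suffice if you color \emph{edges} instead of vertices. Given a cubic graph on $n$ vertices, build the following instance:
one leaf per vertex, joined to $v^*$ by an undirected edge with traversal time $0$ and deadline $\tau=3$;
one path per graph edge, between the leaves of its endpoints;
$c(v^*)=n/2$ and leaf capacities $1$.
Every path occupies $v^*$ for at least one step, and $3\cdot n/2$ equals the number of paths, so no path waits and each path has a single time. Conditions (i) and (ii) at each leaf then force its three incident paths onto distinct steps, which is a proper $3$-edge-coloring (NP-hard on cubic graphs by Holyer). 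Conversely, in a proper $3$-edge-coloring each color class is a perfect matching of size $n/2$, so the capacity of $v^*$ is respected. No copies, and hence no consistency gadget, are needed.
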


We give a polynomial-time many-one reduction from the \NP-hard~\cite{GAREY1976237} problem \prob{Cubic Independent Set},
where,
 given an undirected graph~$G$,
 where each vertex has exactly three neighbors,
 and an integer~$k$,
 the question is whether there are at least~$k$ vertices
 that are pairwise not adjacent.

\begin{construction}
 \label{const:nphard:stars:cap}
 Let~$I=(G=(V,E),k)$ be an instance of \prob{Cubic Independent Set}
 with $m$~edges and $n$~vertices
 (recall that~$2m=3n$).
 We construct an instance~$I'$ of \simevacAcr{}
 on star~$G'=(V',E')$ with center vertex~$v^*$  as follows
 (see~\Cref{fig:nphard:star:cap} for an illustration).
 \begin{figure}[t]
  \centering
  \begin{tikzpicture}
    \tikzpreamble{}
    \def\xr{0.96}
    \def\yr{1}
    \def\xsh{4.56}%

    \def\n{7} %
		\def\r{1.525} %
		\def\offset{20} %

    \newcommand{\mylbl}[3]{\node at (#1*\xr,#2*\yr)[]{(#3)};}

    \newcommand{\thestar}{
			\node (c) at (0,0)[xcnode]{$C$};

			\foreach \i in {1,...,\n} {
					\pgfmathsetmacro{\angle}{360/\n * (\i - 1) - 141}
					\pgfmathsetmacro{\angleA}{\angle - \offset}
					\pgfmathsetmacro{\angleB}{\angle + \offset}

					\coordinate (A) at (\angleA:\r);
					\coordinate (B) at (\angleB:\r);
					\coordinate (M) at (\angle:\r);
					\coordinate (ML) at (\angle:1.2*\r);

					\draw[-, fill=gray!10!white, draw=none, rounded corners]
							(c) -- (A)
							arc[start angle=\angleA, end angle=\angleB, radius=\r]
							-- (c);

					\node (x\i) at (M)[xcnode]{$2$};
					\ifnum\i=1 %
						\node at (ML)[]{$v$};
						\tikzESd{c/x\i/{$\trtd{4}{19}$}}
					\fi
					\ifnum\i=2
						\node at (ML)[anchor=south east]{$b^i_j$\phantom{b}};%
						\tikzESd{c/x\i/{$\trtd{i}{i+1}$}}
					\fi
					\ifnum\i=3
						\node at (ML)[]{$v_e^\dagger$};
						\tikzESd{c/x\i/{$\trtd{7}{8}$}}
					\fi
					\ifnum\i=4
						\node at (ML)[]{$v_e$};
						\tikzESd{c/x\i/{$\trtd{6}{27}$}}
					\fi
					\ifnum\i=5
						\node at (ML)[anchor=north west]{$v_e'$};
						\tikzESd{c/x\i/{$\trtd{0}{7}$}}
					\fi
					\ifnum\i=6
						\node at (ML)[anchor=north east]{$v_e''$};
						\tikzESd{c/x\i/{$\trtd{0}{20}$}}
					\fi
					\ifnum\i=7
						\node at (ML)[]{$v'$};
						\tikzESd{c/x\i/{$\trtd{0}{1}$}}
					\fi
			}
			\node  at (c)[xcnode, label={[xshift=1pt,label distance=0.0pt]90:$v^*$}]{$C$};
		}

    \begin{scope}[xshift=\xsh*\xr cm]
		 \mylbl{-1.25*\r}{0.95*\r}{a}
     \thestar{}
     \draw[xpathA] ($(x7)+(0*\xr,-0.1*\yr)$) to ($(c)+(-0.3*\xr,-0.05*\yr)$) to ($(x1)+(0*\xr,+0.1*\yr)$);
     \draw[xpathB] ($(x1)+(0.05*\xr,-0.1*\yr)$) to ($(c)+(0.1*\xr,0*\yr)$) to ($(x4)+(0*\xr,-0.1*\yr)$);
     \draw[xpathC] ($(x2)+(0.1*\xr,0*\yr)$) to ($(c)+(0.1*\xr,-0.2*\yr)$);
     \node at ($(x7)+(0.2*\xr,-0.5*\yr)$)[xtypeA]{$P_v$};
     \node at ($(x4)+(-0.3*\xr,-0.6*\yr)$)[xtypeB]{$P_{v,e}$};
     \node at ($(x2)+(0.5*\xr,0.2*\yr)$)[xtypeC]{$P_{b_i^j}$};
    \end{scope}

    \begin{scope}[xshift=2*\xsh*\xr cm]
     \mylbl{-1.25*\r}{0.95*\r}{b}
     \thestar{}
     \draw[xpathA] ($(x3)+(0*\xr,0.1*\yr)$) to ($(c)+(0.25*\xr,-0.1*\yr)$) to ($(x4)+(0*\xr,-0.1*\yr)$);
     \draw[xpathB] ($(x4)+(-0.05*\xr,0.225*\yr)$) to ($(c)+(0.225*\xr,0.25*\yr)$) to ($(x5)+(0.1*\xr,0*\yr)$);
     \draw[xpathC] ($(x4)+(0*\xr,0.1*\yr)$) to ($(c)+(0.1*\xr,0.1*\yr)$) to ($(x6)+(0.1*\xr,0*\yr)$);
     \node at ($(x3)+(0.1*\xr,0.6*\yr)$)[xtypeA]{$P_e^\dagger$};
     \node at ($(x4)+(-0.4*\xr,0.45*\yr)$)[xtypeB]{$P_e'$};
     \node at ($(x6)+(0.65*\xr,-0.2*\yr)$)[xtypeC]{$P_e''$};
    \end{scope}

	\end{tikzpicture}
	\caption{Illustration to~\cref{const:nphard:stars:cap}.
	For each edge~$e$ we indicate the traversal time and deadline as~$\trtd{\trt(e)}{d(e)}$.
	Here,
	we picked $v$ and~$e$ with~$v\in e$ for illustration.}
	\label{fig:nphard:star:cap}
 \end{figure}

 Let~$\tau=27$.
 For each vertex~$v\in V$,
 add a vertex~$v$ and~$v'$ to~$V'$.
 Make~$v$ adjacent with~$v^*$ with traversal time 4 and deadline 19.
 Make~$v'$ adjacent with~$v^*$ with traversal time 0 and deadline 1.
 For every edge~$e\in E$,
 add the vertex~$v_e$ to~$V'$ and
 make it adjacent with~$v^*$ with travel time 6 and deadline 27,
 add the vertex~$v_e'$ to~$V'$ and
 make it adjacent with~$v^*$ with travel time 0 and deadline 7,
 add the vertex~$v_e''$ to~$V'$ and
 make it adjacent with~$v^*$ with travel time 0 and deadline 20,
 add the vertex~$v_e^\dagger$ to~$V'$ and
 make it adjacent with~$v^*$ with travel time 7 and deadline 8,
 For each~$i\in\set{\tau}$,
 add the set~$B_i\ceq \{b_1^i,\dots,b_{n_i}^i\}$ of \emph{blocker} vertices to~$V'$,
 where~$n_i$ is defined in~\Cref{tab:ni},
 and
 make each vertex from~$B_i$ adjacent with~$v^*$ with travel time~$i$ and deadline~$i+1$
 (thus,
 every path with source from~$B_i$ must start at time step 1 and arrive at time step~$i+1$ at vertex~$v^*$).
 Set~$B\ceq \bigcup_{i=1}^\tau B_i$.
 Set~$c(v^*)=C$ with~$C\ceq 2m+3n+4k$
 and for every node~$x\in V'\setminus \{v^*\}$,
 set~$c(x)=2$ (this enables
 every temporal edge-disjoint temporalization
 to respect each leaf's capacity constraint,
 since at every time step,
 at most one path departs and at most one path arrives in a leaf).
 \begin{table}[t]
  \centering
  \caption{The values for~$n_i$ in \cref{const:nphard:stars:cap}.
  Since each~$n_i$ is of the form~$n_i = C - n_i'$, we give the~$n_i'$ values that correspond to the number of paths that can be located  on~$v^*$ at time step~$i$ next to the paths starting in all vertices from~$B_i$.}
  \label{tab:ni}
		\begin{tabular}{@{}ll || ll || ll@{}}
		\toprule
		$i$ & $C-n_i$ & $i$ & $C-n_i$ & $i$ & $C-n_i$ \\
		\midrule
		1 & $n$ & 7 & $4k+m$ & 18 & $2(n-k)$ \\
		$2,3,4$ & $k$ & 8 & $m$ & 19 & $3(n-k)$ \\
		5 & $2k$ & $9,\dots,16$ & $0$ & 20 & $3(n-k)+m$ \\
		6 & $3k$ & 17 & $n-k$ & $21,\dots,27$ & $3(n-k)$ \\
		\bottomrule
		\end{tabular}
 \end{table}

 For the path set,
 for each~$v\in V$,
 add the \emph{vertex} path~$P_v = (v',v^*,v)$.
 For each~$e\in E$,
 add paths~$P_e'=(v_e,v^*,v_e')$,
 $P_e''=(v_e,v^*,v_e'')$,
 and~$P_e^\dagger=(v_e^\dagger,v^*,v_e)$,
 and for each~$v\in e$,
 add the \emph{verifier} path~$P_{v,e}=(v,v^*,v_e)$.
 For each~$b\in B$,
 add the \emph{blocker} path~$P_b = (b,v^*)$.
\end{construction}

We get the following directly from the construction.

\begin{observation}
 \label{obs:nphard:stars:cap}
 Let~$I'$ be a \yes-instance.
 Then,
 for every solution,
 the following hold:
 \begin{inparaenum}[(i)]
  \item Path~$P_v$ for every~$v\in V$ arrives at time step~1 at~$v^*$
 and at least $n-k$ of the vertex paths leave~$v^*$ also on time step 1.
  \item Path~$P_e^\dagger$ for every~$e\in E$ starts at time step~$1$ and arrives and leaves~$v^*$ at time step~$8$.\label{obs:nphard:stars:cap:dagger}
  \item Path~$P_e'$ for every~$e\in E$ starts at time step~$1$ and arrives and leaves~$v^*$ at time step~$7$.\label{obs:nphard:stars:cap:prime}
  \item Every path that arrives on~$v^*$ latest at time step~$7$ departs from~$v^*$ latest on time step~$7$.\label{obs:nphard:stars:cap:gone}
 \end{inparaenum}
\end{observation}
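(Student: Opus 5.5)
Each part of the statement follows from the deadlines and traversal times of the connections used, plus the fact that, since every vertex path $P_v=(v',v^*,v)$ must reach $v^*$ at time step 1, the connection $\{v^*,v\}$ is unusable by incoming verifier paths at certain times. The only part that is not purely local is~(i)'s counting claim, which needs the capacity row $i=1$ of \Cref{tab:ni}. I will treat the parts in the order (ii), (iii), (i), (iv).

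For \eqref{obs:nphard:stars:cap:dagger}, the connection $\{v_e^\dagger,v^*\}$ has traversal time~7 and deadline~8. Since time steps start at~1, $P_e^\dagger$ must depart at time step~1 and arrive at $v^*$ exactly at time step~8. It then continues on $\{v^*,v_e\}$, which has traversal time~6 and deadline~27, so I still have to show that it leaves $v^*$ at time step~8. I would argue via the table. At time step~8 the capacity left over next to the blockers is $C-n_8=m$, and at time steps $9,\dots,16$ it is~0. The blockers in $B_i$ sit on $v^*$ exactly at time step $i+1$ (deadline $i+1$, travel time $i$). So at time steps $9,\dots,16$ no other path may be at $v^*$, and in particular $P_e^\dagger$ cannot wait past time step~8. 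Hence it departs at time step~8.

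For \eqref{obs:nphard:stars:cap:prime}, $P_e'$ travels $\{v_e,v^*\}$ with traversal time~6 and then $\{v^*,v_e'\}$ with traversal time~0 and deadline~7. It therefore reaches $v^*$ at time step $\ge 7$ and must leave by time step~7. This forces departure from $v_e$ at time step~1, arrival at $v^*$ at~7, and departure at~7.

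For~(i), $\{v',v^*\}$ has traversal time~0 and deadline~1, so $P_v$ departs from $v'$ at time step~1 and is at $v^*$ at time step~1. At time step~1, the blockers of $B_0$ do not exist, and the remaining capacity is $C-n_1=n$ (the blockers in $B_i$ with $i\ge1$ arrive only at time $i+1\ge2$). This capacity is exactly filled by the $n$ vertex paths.

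For the claim that at least $n-k$ vertex paths leave at time step~1, I would use the capacities at time steps $2,3,4$, where only $k$ slots remain. Every $P_v$ still at $v^*$ at time step~2 occupies a slot at time step~2. Hence at most $k$ vertex paths remain after time step~1, i.e.\ at least $n-k$ leave at time step~1. Here I must also check that no other path can be at $v^*$ at time step~1. The paths $P_{v,e}$ have traversal time~4 on $\{v,v^*\}$, the paths $P_e',P_e''$ have traversal time~6, and $P_e^\dagger$ has traversal time~7, so each of them reaches $v^*$ at time step~5 or later.

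For \eqref{obs:nphard:stars:cap:gone}, the capacity count is again the tool. A path arriving at $v^*$ by time step~7 and still present at time step~8 would occupy a slot at time step~8. At time step~8 the remaining capacity is $m$, and all $m$ slots are taken by the $P_e^\dagger$ paths arriving at time step~8 by~(ii). That is a contradiction, so every such path departs by time step~7.

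The main obstacle I expect is making sure the table values are used consistently, namely that blockers occupy $v^*$ only at time step $i+1$ and that the counting at time steps~1, 2 and~8 is exact. To keep the argument non-circular, I would prove~(ii) first using the zero slack at time steps 9--16, and only then prove~(iv).
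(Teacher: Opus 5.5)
Your strategy is the natural one, and the paper itself only says the observation follows ``directly from the construction''. The first half of (i), all of (iii), and the first half of (ii) are purely local consequences of traversal times and deadlines, and you argue them correctly. The remaining claims are capacity counts at $v^*$ against the blockers, and there your proof mixes two incompatible readings of the construction.

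You explicitly adopt the literal timing: a blocker from $B_i$ (traversal time $i$, deadline $i+1$) is on $v^*$ only at time step $i+1$. Under that timing the free capacity at $v^*$ is $C$ at time step $1$ and $C-n_{t-1}$ at time step $t\ge 2$, not $C-n_t$. So your sentence ``no blockers exist at time step 1, and the remaining capacity is $C-n_1=n$'' contradicts itself, and each capacity inference is off by one step:
\begin{itemize}
\item At time step $2$ the slack would be $C-n_1=n$, not $k$, so the bound ``at least $n-k$ vertex paths leave at time step 1'' fails.
\item The zero-slack window would be time steps $10,\dots,17$, while step $9$ has slack $C-n_8=m$. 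Then $P_e^\dagger$ could wait until step $9$, and the ``leaves at 8'' part of (ii) does not follow.
\item At time step $8$ the slack would be $C-n_7=4k+m$, not $m$. Then the $P_e^\dagger$ paths do not saturate $v^*$, and (iv) does not follow.
\end{itemize}

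The fix is to commit explicitly to the reading the paper actually uses: row $i$ of \Cref{tab:ni} is the free capacity at time step $i$, so the blockers of $B_i$ are located on $v^*$ exactly at time step $i$. This matches the table's caption and the later arguments, which say only $m$ extra paths fit at step $8$ and capacity $3(n-k)$ is available at step $19$. Equivalently, the $B_i$-edges should have traversal time $i-1$ and deadline $i$. This correction also keeps $i=\tau$ inside the allowed ranges $\{0,\dots,\tau-1\}$ and $\{1,\dots,\tau\}$, which the literal version violates.

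With that reading, every step you wrote goes through unchanged:
\begin{itemize}
\item Slack $n$ at step $1$ and $k$ at step $2$ give (i).
\item Zero slack at steps $9,\dots,16$ forces each $P_e^\dagger$ to leave at step $8$, giving (ii).
\item The $m$ forced arrivals of the $P_e^\dagger$ paths at step $8$ fill the slack $m$, giving (iv).
\end{itemize}
Part (iv) needs only the arrival half of (ii), so there is no circularity in whatever order you prove the parts.
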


Assume~$I'$ to be a \yes-instance and consider an arbitrary solution.
For every~$e\in E$,
path~$P_e''$ must start latest at time step~14 by construction.
Indeed,
due to \partref{obs:nphard:stars:cap}{prime} \& \eqref{obs:nphard:stars:cap:dagger}
and the fact that only~$m$ paths can be located at~$v^*$ at time step 8,
path~$P_e''$ must start exactly at time step~$14$.

\begin{lemma}%
 \label{obs:nphard:stars:cap:blocker:edge}
 Let~$I'$ be a \yes-instance.
 Then,
 for every solution and for every~$e\in E$,
 path~$P_e''$ starts at time step~$14$ and arrives and leaves~$v^*$ at time step~$20$.
\end{lemma}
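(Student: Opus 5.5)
Fix a solution and some $e\in E$. Path $P_e''=(v_e,v^*,v_e'')$ uses the edge $\{v_e,v^*\}$, which has traversal time $6$ and deadline $27$, and then the edge $\{v^*,v_e''\}$, which has traversal time $0$ and deadline $20$.

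\textbf{Step 1 (start at most $14$).} By adequacy, $P_e''$ departs $v^*$ at some time $t\le 20$. It arrives at $v^*$ at time $\TER(P_e'',\{v_e,v^*\})+6\le t\le 20$. Hence its start time is at most $14$.

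\textbf{Step 2 (start at least $14$).} This is the main step, and it rests on the edge $\{v_e,v^*\}$. By \partref{obs:nphard:stars:cap}{prime}, $P_e'$ departs $v_e$ at time $1$ and occupies this edge during $[1,7]$. By \partref{obs:nphard:stars:cap}{dagger}, $P_e^\dagger$ arrives at $v^*$ at time $8$ and leaves it at time $8$ towards $v_e$, so it occupies the edge in the opposite direction during $[8,14]$. Since the edge is undirected with traversal time $6$, condition (ii) of temporal edge-disjointness forces the departure time $s$ of $P_e''$ to satisfy $|s-8|\ge 6$, that is, $s\le 2$ or $s\ge 14$. Condition (i) with $P_e'$ (same direction) only forbids $s=1$.

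\textbf{Step 3 (exclude $s=2$).} If $s=2$, then $P_e''$ reaches $v^*$ at time $8$. This is where the capacity at time $8$ enters, as in the remark before the lemma. By \Cref{tab:ni}, only $m$ paths besides the blockers fit at $v^*$ at time $8$, and the $m$ paths $P_{e'}^\dagger$ for $e'\in E$ already sit there at time $8$. Any $P_e''$ arriving at time $8$ would push the load above $C$, so $s\ne 2$. I also need to check that the $B_8$ blockers really occupy $v^*$ at time $8$. They are forced to start at time $1$ and arrive at time $9$, so they contribute at time $9$, not $8$. The intended reading of \Cref{tab:ni} is that the blockers present at time $i$ leave room for exactly $C-n_i$ others, and I would verify that the blocker paths present at $v^*$ at time $8$ total $C-m$.

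\textbf{Step 4 (conclusion).} Steps 1–3 give $s=14$. Then $P_e''$ arrives at $v^*$ at time $20$ and must depart by $20$ because of the deadline, so it arrives and leaves $v^*$ at time $20$.

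The main obstacle is the bookkeeping in Step 3, namely confirming the exact blocker occupancy at time $8$ from \Cref{tab:ni}. A fallback, if the capacity count at time $8$ turned out not to be tight, is to argue via \partref{obs:nphard:stars:cap}{gone} instead. Under that fallback, a path arriving at time $8$ would have to stay at $v^*$ until its departure, and that departure must be at most $20$.
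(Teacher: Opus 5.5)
Your proposal is correct and matches the paper's proof step for step. The deadline $20$ and traversal time $6$ cap the start at $14$, the opposite-direction constraint with $P_e^\dagger$ (which leaves $v^*$ at time $8$) rules out starts $3,\dots,13$, $P_e'$ rules out start $1$, and the capacity of $v^*$ at time $8$ (room for only the $m$ paths $P_{e'}^\dagger$ besides the blockers) rules out start $2$. Your bookkeeping worry is justified: as literally written, the $B_i$-paths reach $v^*$ at time $i+1$, and for $i=\tau$ the traversal time would not even be admissible. However, the entries of \Cref{tab:ni} show that $B_i$ is meant to occupy $v^*$ exactly at time $i$ (traversal time $i-1$, deadline $i$), and the paper's proof uses this same reading. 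So Step~3 goes through, and your fallback via \partref{obs:nphard:stars:cap}{gone} is unnecessary; it only speaks about arrivals up to time $7$ and would not give a contradiction anyway.
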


\begin{proof}
 Note that path~$P_e''$ for each~$e\in E$ cannot depart later than~$14$
 since edge~$\{v_e,v^*\}$ has traversal time~6 and edge~$\{v^*,v_e''\}$ has deadline 20.
 Suppose towards a contradiction that there is a path~$P_e''$ that departs from~$v_e$ at~$t<14$.
 Due to~\partref{obs:nphard:stars:cap}{dagger},
 we know that
 $P_e^\dagger$ leaves~$v^*$ to~$v_e$ at time step~$8$.
 This implies that~$t$ must be at most~$2$,
 otherwise they collide
 (recall that edge~$\{v_e,v^*\}$ has traversal time~$6$);
 It is exactly~$2$ since by \partref{obs:nphard:stars:cap}{prime},
 we know that $P_e'$ leaves~$v_e$ at time step~$1$.
 Hence,
 next to all path~$P_e^\dagger$ arriving at~$v^*$ at time step~$8$,
 also~$P_e'$ arrives at time step~$8$ at~$v^*$,
 implying that more than~$m$ paths are located at~$v^*$ at time step~$8$.
 Since the capacity of~$v^*$ allows at most~$m$ paths next to the blocker paths at time step~$8$,
 this yields a contradiction.
\end{proof}

By \cref{obs:nphard:stars:cap}(i),
verifier paths for at most~$k$ different vertices from~$V$ can arrive at~$v^*$ latest at time step 7.
Now,
basically due to \cref{obs:nphard:stars:cap:blocker:edge},
it follows that these are exactly $3k$ verifier paths:
If not,
then more than~$3(n-k)$ verifier path must be located at~$v^*$ at time step~19,
which would violate the capacity constraint of~$v^*$.

\begin{lemma}%
 \label{obs:nphard:stars:cap:kverifier}
 Let~$I'$ be a \yes-instance.
 Then,
 for every solution,
 from exactly~$k$ different vertices all $3k$ verifier paths
 arrive at~$v^*$ latest at time step~$7$.
\end{lemma}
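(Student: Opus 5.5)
The plan is to prove matching upper and lower bounds on the number of verifier paths that arrive at~$v^*$ by time step~$7$. The upper bound uses the vertex paths and \cref{obs:nphard:stars:cap}(i); the lower bound uses a capacity count at~$v^*$ at time step~$19$, building on \cref{obs:nphard:stars:cap:blocker:edge}.

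\emph{Upper bound.} By \cref{obs:nphard:stars:cap}(i), every vertex path~$P_v$ arrives at~$v^*$ at time step~$1$, and at least~$n-k$ of them also leave~$v^*$ at time step~$1$. Such a~$P_v$ occupies the undirected edge~$\{v^*,v\}$ (traversal time~$4$) during~$[1,5]$. Each verifier path~$P_{v,e}$ traverses this edge in the opposite direction, so temporal edge-disjointness~(ii) gives $|\TER(P_{v,e},\{v,v^*\})-1|\geq 4$. Hence $P_{v,e}$ departs from~$v$ no earlier than time step~$5$ and reaches~$v^*$ no earlier than time step~$9$. So only the at most~$k$ vertices whose vertex path lingers at~$v^*$ can have verifier paths arriving at~$v^*$ by time step~$7$. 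Since every vertex of the cubic graph has exactly three verifier paths, at most~$3k$ verifier paths arrive by time step~$7$.

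\emph{Lower bound.} By \partref{obs:nphard:stars:cap}{gone}, every verifier path arriving by time step~$7$ also leaves~$v^*$ by time step~$7$, so it is not at~$v^*$ at time step~$19$. Now consider a verifier path~$P_{v,e}$ that arrives later than~$7$; it arrives by the deadline~$19$ of~$\{v,v^*\}$. I will show that it cannot leave~$v^*$ before time step~$20$, so it is located at~$v^*$ at time step~$19$. For this I combine three facts about the edge~$\{v^*,v_e\}$ (traversal time~$6$), on which $P_{v,e}$ travels in the direction opposite to all three paths below:
\begin{itemize}
\item by \cref{obs:nphard:stars:cap:blocker:edge}, $P_e''$ departs from~$v_e$ at time step~$14$;
\item by \partref{obs:nphard:stars:cap}{prime}, $P_e'$ departs from~$v_e$ at time step~$1$;
\item by \partref{obs:nphard:stars:cap}{dagger}, $P_e^\dagger$ arrives at~$v^*$ at time step~$8$.
\end{itemize}
Condition~(ii) then requires the departure time~$t$ of~$P_{v,e}$ from~$v^*$ to be at distance at least~$6$ from each of these. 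Together with arrival at~$v^*$ after time step~$7$, this leaves only~$t\geq 20$.

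\emph{Counting at time step~$19$.} The blockers and the other fixed paths use all but~$3(n-k)$ units of the capacity~$C$ at~$v^*$ at time step~$19$; this is the table entry for~$i=19$, and I will verify it by listing which blocker sets~$B_i$ and which of $P_e',P_e'',P_e^\dagger,P_v$ are present at that time. Hence at most~$3(n-k)$ verifier paths can be at~$v^*$ at time step~$19$. Since there are~$3n$ verifier paths in total, at least~$3k$ of them must arrive by time step~$7$.

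\emph{Conclusion.} Combining the bounds, exactly~$3k$ verifier paths arrive by time step~$7$. They come from at most~$k$ vertices, each contributing at most three. The only way to reach~$3k$ is therefore exactly~$k$ vertices each contributing all three of their verifier paths.

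The main obstacle is the step showing that every late verifier path is at~$v^*$ at time step~$19$. It depends on the precise edge-disjointness inequalities and on the exact blocker count at time step~$19$, so the arithmetic with the table entries (in particular how the paths arriving at~$v^*$ at time step~$20$ are counted) must be checked carefully.
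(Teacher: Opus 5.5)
\paragraph{Verdict.} Your proof follows the paper's own argument. The upper bound is the same: a vertex path that leaves $v^*$ at time $1$ blocks $\{v,v^*\}$ for verifiers until time $5$. The lower bound is the same capacity count at time step $19$, using \cref{obs:nphard:stars:cap:blocker:edge}. There is one step-level difference. To show that a late verifier is still at $v^*$ at time $19$, the paper first confines late arrivals to $17,18,19$ (time $8$ is filled by the paths $P_e^\dagger$, times $9$--$16$ by blockers) and then uses $P_e''$ to forbid departures at $17$--$19$. You instead forbid every departure time $t\in\{8,\dots,19\}$ directly, whatever the arrival time. That route is valid and skips the capacity check at $8$--$16$.

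\paragraph{The step at time $8$ is mis-justified.} $P_e^\dagger=(v_e^\dagger,v^*,v_e)$ traverses $\{v^*,v_e\}$ in the \emph{same} direction as $P_{v,e}$. By \partref{obs:nphard:stars:cap}{dagger} it \emph{departs} $v^*$ towards $v_e$ at time $8$; its arrival at $v^*$ is over $\{v_e^\dagger,v^*\}$, not over $\{v^*,v_e\}$. So condition~(ii) does not apply to it. Had it really been an opposite-direction path reaching $v^*$ at time $8$ over that edge, it would have left $v_e$ at time $2$, and (ii) would give only $t\ge 8$, leaving $t=8$ open. The correct constraint is condition~(i), which gives $t\neq 8$. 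Combined with $|t-14|\ge 6$ from $P_e''$ and $t\ge 8$, this yields $t\ge 20$ as you want. Alternatively, rule out arrival at time $8$ by capacity, as the paper does.

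\paragraph{The count at time $19$.} None of $P_e',P_e'',P_e^\dagger,P_v$ is at $v^*$ at that time: $P_e''$ arrives at $20$, and the others are gone by $8$. So the blockers alone use $C-3(n-k)$, and the $3(n-k)$ bound applies directly to the verifiers. The time-$20$ entry plays no role. When you carry out the verification you promise, read $n_i$ as the number of blockers at $v^*$ at time $i$, as the paper's proof does. The construction's literal ``travel time $i$, deadline $i+1$'' would place $B_i$ at $v^*$ at time $i+1$, shifting every entry by one step.
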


\begin{proof}
 By \cref{obs:nphard:stars:cap}(i),
 at most~$3k$ verifier paths from at most~$k$ different vertices from~$v\in V$
 arrive at~$v^*$ latest at time step~$7$:
 $n-k$ of the vertex paths leave~$v^*$ at time step 1 to its destination,
 and hence,
 the corresponding edge cannot be used by a verifier path before time step 5.
 Suppose towards a contradiction that there are~$k'<3k$ many verifier paths
 arriving at~$v^*$ latest at time step~$7$.
 Then,
 due to the blocker paths and \partref{obs:nphard:stars:cap}{dagger},
 no verifier path can arrive at~$v^*$ at time steps~$8,\dots,16$.
 Since the deadline of edge~$\{v,v^*\}$ for each~$v\in V$ is 19,
 all of the remaining verifier paths must arrive at time steps~$17$, $18$, and~$19$.
 Due to~\cref{obs:nphard:stars:cap:blocker:edge},
 we know that none of these verifier paths can leave~$v^*$ at any of these time steps.
 Hence,
 at time step~$19$,
 there are~$3n-k'>3n-3k$ verifier paths located at~$v^*$.
 Since~$v^*$ has only a capacity of~$3(n-k)$ for paths additional to the blocker paths,
 this yields a contradiction.
\end{proof}

\begin{proof}[Proof of~\cref{thm:nphard:stars:cap}]
 Let~$I'$ be the instance obtained from instance~$I=(G,k)$ with~$G=(V,E)$ of~\prob{Cubic Independent Set}
 using~\cref{const:nphard:stars:cap},
 which can be done in time polynomial in~$|I|$.
 We prove that~$I$ is a \yes-instance if and only if~$I'$ is a \yes-instance.

 \RD{}
 Let~$W\subseteq V$ be an independent set of size~$k$.
 We route all paths addressed in~\cref{obs:nphard:stars:cap}\eqref{obs:nphard:stars:cap:prime}--\eqref{obs:nphard:stars:cap:dagger}
 and \cref{obs:nphard:stars:cap:blocker:edge}
 as described therein.
 Path~$P_v$ for every~$v\in V\setminus W$ arrives and leaves~$v^*$ at time step~$1$,
 and for every~$v\in W$ arrives at $v^*$ at time step~$1$ and leaves~$v^*$ at time step~$7$.
 Also,
 for every~$v\in W$,
 the verifier paths for~$v$ start
 (in arbitrary order)
 at time steps~$1$, $2$, and~$3$,
 and stay at~$v^*$ until time step~$7$.
 Since~$W$ is an independent set,
 for every two paths~$P_{v,e}$ and~$P_{w,e'}$ with~$v,w\in W$,
 $v\in e$, and~$w\in e'$,
 we have that~$e\neq e'$.
 Thus,
 all verifier paths for vertices in~$W$ leave at time step~$7$
 (to a different sink).
 The remaining $3(n-k)$ verifier paths start at time steps~$13$--$15$,
 arrive at~$v^*$ at $17$--$19$,
 and from time step 20 onwards,
 successively and earliest possible leave to their sinks.

 \LD{}
 Let~$\TER$ be a valid temporalization to~$I'$.
 Due to~\cref{obs:nphard:stars:cap:kverifier},
 exactly~$3k$ verifier paths from a set~$W\subseteq V$ of exactly~$k$ vertices
 arrive latest at time step~$7$ at~$v^*$.
 Due
 to \partref{obs:nphard:stars:cap}{gone},
 they all leave~$v^*$ at time step~$7$.
 We claim that~$W$ is an independent set.
 Suppose towards a contradiction that there are~$u,v\in W$ such that~$e=\{u,v\}\in E$.
 Then the verifier paths~$P_{v,e}$ and~$P_{u,e}$
 depart~$v^*$ towards~$v_e$ at the same time step~$7$,
 a contradiction.
\end{proof}

\subsubsection{Constant capacities and general deadlines}
\label{sec:nphard:stars:cap:one}

\begin{theorem}%
 \label{thm:nphard:stars:cap:one}
 \simevacAcr{} is \NP-hard even on decaying stars with maximum capacity one.
\end{theorem}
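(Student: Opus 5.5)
We propose to reduce from \misuigAcr{}, reusing the interval-based idea behind \cref{thm:nphard:paths:capone}, but replacing the path structure by a star with center~$v^*$ of capacity one. Capacity one at~$v^*$ means that the sets of time steps during which the paths are located at~$v^*$ must be pairwise disjoint. The intervals of a solution to \misuigAcr{} should appear as exactly such occupation sets. Since leaves have capacity one as well, each leaf may host at most one path per time step. We therefore intend to give every path its own leaves where convenient, so that leaf capacities only matter where we want them to.

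\textbf{Gadgets.} For each color~$i$ and each interval~$v\in V_i$ with $S_v=[a_v,b_v]$, we will use leaves whose deadlines and traversal times force arrival and departure times at~$v^*$.
\begin{itemize}
\item A leaf~$s_v$ with edge to~$v^*$ of traversal time~$a_v-1$ and deadline~$a_v$ forces any path starting at~$s_v$ to start at time~$1$ and arrive at~$v^*$ exactly at~$a_v$. This mirrors the blocker leaves~$B_i$ of \cref{const:nphard:stars:cap}.
\item For the departure, an edge from~$v^*$ to a leaf~$z_v$ with deadline~$b_v$ and traversal time~$0$ forces leaving by time~$b_v$.
\end{itemize}

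\textbf{Color choice.} For each color~$i$ we add one validation path~$Q_i$. To let~$Q_i$ choose an interval of color~$i$, we route it from a color leaf~$s_i$ to a sink leaf~$z_i$. Since a path in a star is determined by source and sink, the choice cannot be encoded by the route. It must be encoded by timing: $Q_i$ is free in principle, and the time steps at which it may sit at~$v^*$ are restricted by filler paths. Concretely, we plan to add single-time-step blocker paths $(b_t,v^*)$, with traversal time~$t-1$ and deadline~$t$, that occupy~$v^*$ at every time step not covered by any interval endpoint structure. We also give the edges $\{s_i,v^*\}$ and $\{v^*,z_i\}$ appropriate traversal times and deadlines.

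\textbf{Forcing a single interval per color.} As in Klobas~\etal{}, we add $m_i-1$ auxiliary color-$i$ paths that must occupy the other intervals' left endpoints. Because~$v^*$ has capacity one and the edge $\{s_i,v^*\}$ admits only one departure per time step (temporal edge-disjointness), the $m_i$ color-$i$ paths must use all start times $\{a_v: v\in V_i\}$ on that edge. Exactly one of them, the validation path, then remains at~$v^*$. Symmetrically, the exits on $\{v^*,z_i\}$ are restricted to $\{b_v:v\in V_i\}$. Non-validation paths pass through~$v^*$ instantly, with zero waiting, into a sink edge with deadline ensuring immediate departure. The validation path, by contrast, waits from~$a_v$ to some~$b_w$. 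The argument of the ($\Leftarrow$) direction of \cref{thm:nphard:paths:capone} then shows we may take $v=w$, because $V_i$ is independent. Capacity one at~$v^*$ finally makes the chosen intervals pairwise disjoint, giving a multicolored independent set. The converse direction routes everything as in that proof.

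\textbf{Main obstacle.} The hardest step is the analogue of \cref{obs:nphard:paths:capone:val}. Without the path's nested label sets, we must force the color-$i$ paths to depart exactly at times~$a_v$ and to pass through~$v^*$ instantly, while not blocking the validation paths of other colors that sit at~$v^*$ during their intervals. Capacity one at~$v^*$ conflicts with instantaneous passes of other paths during a validation interval. I would resolve this by shifting: the auxiliary paths of color~$i$ pass~$v^*$ only at the endpoint times themselves, and the validation path for color~$j$ is located at~$v^*$ only on the open part of its interval. To make room, all times are scaled, e.g.\ by a factor of~$3$, so that endpoint slots are separate time steps from interior slots. If this bookkeeping fails, my fallback is to pick the times so that auxiliary passes use only endpoints not belonging to the chosen interval. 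I would then verify by a counting argument, in the style of \cref{obs:nphard:paths:capone:xytau}, that all remaining slots at~$v^*$ are saturated by blockers.
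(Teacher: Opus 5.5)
There is a genuine gap, located at the step you call the main obstacle, and your proposed remedy does not close it. In a star every path contains $v^*$, visits it once, and occupies a contiguous block of time steps there. With $c(v^*)=1$, while $Q_i$ waits at $v^*$ from $a_u$ to $b_u$, no other path can be at $v^*$. Your own forcing argument pins the auxiliary color-$j$ paths to pass $v^*$ at (a fixed translate of) the left endpoints of the non-chosen intervals of color $j$. Such endpoints can lie inside the chosen interval of another color. For instance, let color~$1$ have only $[10,20]$ and color~$2$ have $[15,25]$ and $[30,40]$. The unique multicolored independent set is $\{[10,20],[30,40]\}$, but then the single auxiliary color-$2$ path must be at $v^*$ at time $15$ while $Q_1$ is there. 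So your instance is a no-instance although the source instance is a yes-instance.

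Scaling time by $3$ and confining $Q_i$ to the ``open part'' of its interval change nothing, because the scaled endpoint still lies strictly inside the contiguous block occupied by $Q_1$. The fallback of choosing times so that auxiliary passes avoid the chosen intervals makes the instance depend on the unknown solution, which is not a reduction. The same tension breaks your timing gadgets:
\begin{itemize}
\item blockers at $v^*$ that restrict departures from $s_i$ to $\{a_v : v\in V_i\}$ would have to occupy interval interiors, which the validation paths need free;
\item the claimed restriction of exits on $\{v^*,z_i\}$ to $\{b_v : v\in V_i\}$ has no mechanism, since only $Q_i$ uses that edge;
\item a deadline on a sink edge cannot force immediate departure of interchangeable auxiliary paths that arrive at different times.
\end{itemize}

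The missing idea is that on a capacity-one star, long waiting at $v^*$ cannot serve as the conflict mechanism. Each path should use $v^*$ for essentially one time step, so that $v^*$ becomes a pure slot resource, and the combinatorial conflicts must be encoded on the edges. The paper does this via a reduction from Vertex Cover. The opposite-direction rule on the undirected edge $\{v,v^*\}$ with traversal time $m+1$ lets a vertex path $P(v,v')$ that passes $v^*$ early block every verifier path from leaving $v^*$ towards $v$ in the window $2,\dots,m+1$. Hence only the $k$ vertices whose vertex paths are pushed to the late slots $4m+1,\dots,4m+k$ can receive verifier paths in that window. Pairwise distinct traversal times $\theta(\{v_{e_i},v^*\})=i$, together with the edge paths, leave the two verifier paths of $e_i$ only the shared window or the single private slot $5m+k+2+i$. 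So one of them must use the window, which yields a vertex cover.
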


For \cref{thm:nphard:stars:cap:one} we reduce
from the \NP-hard~\cite{Karp72} problem \prob{Vertex Cover},
where,
given an undirected graph~$G=(V,E)$ and an integer~$k$,
the question is whether there is a set~$W\subseteq V$ with~$|W|\leq k$ such that every edge~$e\in E$ has an endpoint in~$W$,
i.e.,
$e\cap W\neq\emptyset$.
Similar to the proof of \cref{thm:nphard:stars:cap},
from the input instance
of \prob{Vertex Cover},
for each~$v\in V$ and~$e\in E$ there are leaves~$v$ and~$v_e$,
respectively,
in the constructed instance.
Yet,
in contrast,
now there are large traversal times:
for every edge in~$E$ there is a unique traversal time.
Then,
on a high-level,
for every edge~$e\in E$ there are exactly two time windows
for the \emph{verifier} path that starts at~$v_e$ and terminates at~$v$ with~$v\in e$:
one window of size~$|E|$ that they all share,
and another unique window of size one.
By this,
for each edge one of its verifier paths must arrive in the first window
(this corresponds to the covering).
Additionally constructed leaves and paths make only~$k$ vertices available as endpoints of these verifier paths in the first window,
witnessing
a vertex cover of size~$k$.

We give a polynomial-time many-one reduction from the \NP-hard~\cite{Karp72} \prob{Vertex Cover} problem,
where,
given an undirected graph~$G=(V,E)$ and an integer~$k$,
the question is whether there is a set~$W\subseteq V$ with~$|W|\leq k$ such that every edge~$e\in E$ has an endpoint in~$W$,
i.e.,
$e\cap W\neq\emptyset$.

\begin{construction}
 \label{constr:nphard:stars:cap:one}
 Let~$I=(G,k)$ with $G=(V,E)$ be an instance of~\prob{Vertex Cover}
 with~$n=|V|$ and~$m=|E|$.
 We construct an instance~$I'$ of \simevacAcr{} as follows
 on decaying star~$\DG=(V',E',\emptyset,c,\trt,d,\tau)$ with center vertex~$v^*$ and all vertex capacities equal to one
 (see~\Cref{fig:nphard:star:cap:one} for an illustration).
 \begin{figure}[t]
  \centering
  \begin{tikzpicture}
    \tikzpreamble{}
    \def\xr{0.96}
    \def\yr{1}
    \def\xsh{7}

    \def\n{5} %
	\def\r{2.75} %
	\def\offset{20} %

    \newcommand{\mylbl}[3]{\node at (#1*\xr,#2*\yr)[]{(#3)};}

    \newcommand{\thestar}{
			\node (c) at (0,0)[xcnode]{$1$};

			\foreach \i in {1,...,\n} {
					\pgfmathsetmacro{\angle}{360/\n * (\i - 1) - 162}
					\pgfmathsetmacro{\angleA}{\angle - \offset}
					\pgfmathsetmacro{\angleB}{\angle + \offset}

					\coordinate (A) at (\angleA:\r);
					\coordinate (B) at (\angleB:\r);
					\coordinate (M) at (\angle:\r);
					\coordinate (ML) at (\angle:1.14*\r);

					\draw[-, fill=gray!10!white, draw=none, rounded corners]
							(c) -- (A)
							arc[start angle=\angleA, end angle=\angleB, radius=\r]
							-- (c);

					\node (x\i) at (M)[xcnode]{$1$};
					\ifnum\i=1 %
						\node at (ML)[]{$v'$};
						\tikzESd{c/x\i/{$\trtd{0}{4m+k}$}}
					\fi
					\ifnum\i=2
						\node at (ML)[]{\phantom{B}$b_j$};
						\tikzESd{c/x\i/{$[[m+n-k+j]]$}}
					\fi
					\ifnum\i=3
						\node at (ML)[]{$v_{e_i}'$};
						\tikzESd{c/x\i/{$[[5m+k-i]]$}}
					\fi
					\ifnum\i=4
						\node at (ML)[]{$v_{e_i}$};
						\tikzESd{c/x\i/{$\trtd{i}{5m+k+2+i}$}}
					\fi
					\ifnum\i=5
						\node at (ML)[]{$v$};
						\tikzESd{c/x\i/{$\trtd{m+1}{\tau}$}}
					\fi
			}
			\node  at (c)[xcnode, label={[xshift=1pt,label distance=0.0pt]90:$v^*$}]{$1$};
		}

    \begin{scope}[xshift=\xsh*\xr cm]
     \thestar{}
     \draw[xpathA] ($(x4)+(-0.2*\xr,0.0*\yr)$) to ($(c)+(-0.0*\xr,0.2*\yr)$) to ($(x5)+(0.2*\xr,0.0*\yr)$);
     \draw[xpathB] ($(x2)+(0.2*\xr,0*\yr)$) to ($(c)+(0.2*\xr,-0.2*\yr)$);
     \draw[xpathC] ($(x5)+(0*\xr,-0.2*\yr)$) to ($(c)+(-0.2*\xr,0.1*\yr)$) to ($(x1)+(0*\xr,0.2*\yr)$);
     \draw[xpathD] ($(x3)+(0*\xr,0.2*\yr)$) to ($(c)+(0.2*\xr,0.1*\yr)$) to ($(x4)+(0.1*\xr,-0.2*\yr)$);
    \end{scope}

	\end{tikzpicture}
	\caption{Illustration to~\cref{constr:nphard:stars:cap:one}.
	For each edge~$e$ we indicate the traversal time and deadline as~$\trtd{\trt(e)}{d(e)}$
	and short as~$[[\trt(e)]]$ if~$d(e)=\trt(e)+1$
	(i.e., any path on such an edge has to depart on time step 1).
	Here,
	we picked $v$ and~$e_i$ with~$v\in e_i$ for illustration.}
	\label{fig:nphard:star:cap:one}
 \end{figure}
 For each vertex~$v\in V$,
 add two vertices~$v$ and~$v'$ to~$V'$ adjacent to~$v^*$.
 Let the edges~$E=\{e_1,\dots,e_m\}$ be (arbitrarily) enumerated.
 For each edge~$e_i\in E$,
 add two vertices~$v_{e_i}$ and $v_{e_i}'$ to~$V'$ adjacent to~$v^*$.
 Finally,
 add~$b_j$ for every~$j\in\{1,\dots,3m-n+k-1\}$ to~$V'$ adjacent to~$v^*$.
 For each~$v\in V$,
 edge~$\{v,v^*\}$ has traversal time~$m+1$ and deadline~$\tau\ceq 7m+k+3$
 and edge~$\{v',v^*\}$ has traversal time~$0$ and deadline~$4m+k$.
 For each edge~$e_i\in E$,
 edge~$\{v_{e_i},v^*\}$ has traversal time~$i$ and deadline~$5m+k+2+i$
 and edge~$\{v_{e_i}',v^*\}$ has traversal time~$5m+k-i$ and deadline~$5m+k-i+1$.
 For each~$j\in\{1,\dots,3m-n+k-1\}$,
 edge~$\{b_j, v^*\}$ has traversal time~$m+n-k+j$ and deadline~$m+n-k+j+1$.
 For each~$v\in V$,
 we add the path~$P(v,v')$ (which we call \emph{vertex} path).
 For each edge~$e_i$,
 we add the path~$P(v_{e_i}',v_{e_i})$ (which we call the \emph{edge} path of~$e_i$) and for each~$v\in e_i$,
 we add the path~$P(v_{e_i},v)$ (which we call the \emph{verifier} paths).
 For each~$j\in\{1,\dots,3m-n+k-1\}$,
 we add the path~$P(b_j,v^*)$ (which we call \emph{blocker} path).
\end{construction}

\begin{observation}
 Let~$I'$ be a \yes-instance.
 Then,
 at each time step~$m+n-k+2,\dots,4m$,
 vertex~$v^*$ is occupied by a blocker path.
\end{observation}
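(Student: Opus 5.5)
The statement follows from adequacy alone: every blocker path is forced to depart at time step~$1$, and its arrival time at~$v^*$ is therefore fixed. In particular, neither capacities nor temporal edge-disjointness are needed. The plan is to compute, for each $j\in\{1,\dots,3m-n+k-1\}$, the unique time at which the blocker path $P(b_j,v^*)$ is located at~$v^*$. We then check that these times are exactly $m+n-k+2,\dots,4m$.

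\emph{Forced departure.} Fix~$j$ and consider any valid temporalization~$\TER$ of~$I'$. The blocker path $P(b_j,v^*)$ consists of the single edge $e=\{b_j,v^*\}$ with $\trt(e)=m+n-k+j$ and $\dl(e)=m+n-k+j+1$. Adequacy requires $\TER(P(b_j,v^*),e)+\trt(e)\le\dl(e)$, which gives $\TER(P(b_j,v^*),e)\le 1$. Since $\TER$ takes values in $T=\set{\tau}$, we also have $\TER(P(b_j,v^*),e)\ge 1$, so the departure is exactly time step~$1$.

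\emph{Location at $v^*$.} Since $v^*$ is the sink of $P(b_j,v^*)$, the path is located at $v^*$ at time step $1+\trt(e)=m+n-k+j+1$. As $j$ runs from $1$ to $3m-n+k-1$, this time runs from $m+n-k+2$ to $1+m+n-k+(3m-n+k-1)=4m$. Because consecutive values of~$j$ give consecutive arrival times, every time step in $\{m+n-k+2,\dots,4m\}$ is hit by exactly one blocker path. Hence $v^*$ is occupied by a blocker path at each of these time steps.

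It remains to confirm that these times actually lie in $\set{\tau}$ with $\tau=7m+k+3$. This holds because $4m<\tau$. If the range is empty, i.e.\ $3m-n+k-1<1$, the statement is vacuous. I expect no real obstacle here. The only point needing care is to use the correct definition of being located at the sink, namely only at time $\TER(P,e_k)+\trt(e_k)$.
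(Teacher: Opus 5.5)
Your proposal is correct and matches the reasoning the paper leaves implicit (the observation is stated without proof). Since $\dl(\{b_j,v^*\})=\trt(\{b_j,v^*\})+1$, each blocker path must depart at time step~$1$ and is located at $v^*$ only at time $m+n-k+j+1$. As $j$ ranges over $1,\dots,3m-n+k-1$, these times cover $m+n-k+2,\dots,4m$, which the paper's forward direction also recalls.
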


\begin{observation}
 \label{cor:nphard:star:cap:one:vertexpaths}
 Let~$I'$ be a \yes-instance.
 Then,
 the vertex paths can only be located at~$v^*$ at the time steps~$m+2,\dots,m+n-k+1$
 and~$4m+1,\dots,4m+k$.
 Thus,
 at each of these time steps,
 exactly one vertex path is located at~$v^*$.
 Moreover,
 exactly~$n-k$ paths start from their source to~$v^*$ before time step~$m+1$.
\end{observation}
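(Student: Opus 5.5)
The plan is a counting argument: first confine every vertex path's stay at~$v^*$ to a window forced by traversal times and deadlines, then remove the time steps already taken by blocker paths (previous observation), and finally use the capacity one of~$v^*$.

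First I pin down when a vertex path~$P(v,v')$ can be at~$v^*$. It traverses~$\{v,v^*\}$, which has traversal time~$m+1$, and every departure time is at least~$1$. So it arrives at~$v^*$ no earlier than time step~$m+2$. It leaves~$v^*$ along~$\{v^*,v'\}$, which has traversal time~$0$ and deadline~$4m+k$, so it departs no later than~$4m+k$. Hence the set of time steps at which~$P(v,v')$ is located at~$v^*$ is a nonempty interval~$[\alpha_v,\beta_v]$ with $m+2\le\alpha_v\le\beta_v\le 4m+k$.

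Next I remove the blocker slots. By the preceding observation, $v^*$ is occupied by a blocker path at every time step~$m+n-k+2,\dots,4m$. Since~$c(v^*)=1$, no interval~$[\alpha_v,\beta_v]$ can contain any of these steps. Each interval is contiguous, so it lies entirely in~$\{m+2,\dots,m+n-k+1\}$ or entirely in~$\{4m+1,\dots,4m+k\}$. This gives the first claim.

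For the second claim I count. These two windows together contain $(n-k)+k=n$ time steps. By capacity one, the~$n$ intervals~$[\alpha_v,\beta_v]$ are pairwise disjoint, and each contains at least one step. Pigeonhole then forces each interval to be a single step, with every step of both windows used by exactly one vertex path. In particular, exactly~$n-k$ vertex paths lie in the first window.

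Finally I translate arrival times into start times. A path whose slot is in the first window arrives at most at~$m+n-k+1$, so it starts from~$v$ at time at most~$n-k$. A path whose slot is in the second window arrives at least at~$4m+1$, so it starts at time at least~$3m$.

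The only delicate step, which I expect to be the main obstacle, is checking that these two groups really split at time~$m+1$. The second group is fine, since~$3m\ge m+1$. The first group needs~$n-k\le m$. I would justify this by assuming, without loss of generality, that~$G$ has no isolated vertices and that~$k<m$ (otherwise the instance is trivially a yes-instance), and I expect to have to make these assumptions explicit at the start of \cref{constr:nphard:stars:cap:one}. I would also double-check with the same reasoning that every~$b_j$ in the construction has a positive index range, i.e., $3m-n+k-1\ge 1$.
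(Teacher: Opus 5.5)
Your first three steps are the argument the paper relies on; it states this observation without proof. The traversal time $m+1$ on $\{v,v^*\}$ and the deadline $4m+k$ on $\{v^*,v'\}$ confine each vertex path to an interval in $[m+2,4m+k]$. The blocker slots and $c(v^*)=1$ leave two windows with $n$ steps in total, and pigeonhole gives the first two claims. You are also right that the ``Moreover'' part needs $n-k\le m$, which the paper never states.

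\textbf{The gap is your justification of $n-k\le m$.} ``No isolated vertices'' together with ``$k<m$'' does not imply it. A perfect matching with $m$ edges has $n=2m$ and no isolated vertices, yet $k=m-1<m$ gives $n-k=m+1>m$.

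Some assumption really is needed, because the statement can fail for a yes-instance. Take the path $a$--$b$--$c$--$d$ plus two isolated vertices and $k=2$, so $n=6$ and $m=3$. The schedule from the forward direction of the theorem's proof, with $W=\{b,c\}$, is valid. Yet the first-window vertex paths start at $1,2,3,4$, so only $3$ of the $n-k=4$ start before $m+1=4$.

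\textbf{Fix: normalize to $k\ge n-m$.} First delete isolated vertices; this does not change the answer. Then any vertex cover $W$ satisfies $n-|W|\le m$. To see this, pick for each $v\notin W$ an incident edge. Its other endpoint lies in $W$, so distinct such $v$ get distinct edges, giving an injection into $E$. Hence if $k<n-m$ the reduction can output a fixed no-instance. Otherwise $n-k\le m$, so the first group starts at times at most $n-k\le m$, and also $3m-n+k-1\ge 2m-1\ge 1$. With this replacement your argument is complete.
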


\begin{lemma}
 \label{obs:occbyedgepaths}
 Let~$I'$ be a \yes-instance.
 Then $v^*$ is occupied by edge paths at all time steps~$4m+k+1,\dots,5m+k$
 and every edge path arrives at its destination earliest at~$5m+k+1$.
\end{lemma}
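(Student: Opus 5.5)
The plan is to derive everything from the timing forced by the traversal times and deadlines on the edges $\{v_{e_i}',v^*\}$, together with the capacity of one at~$v^*$.

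\emph{Step 1: arrival times at~$v^*$ are fixed.}
Fix a valid temporalization and an edge $e_i\in E$. The edge path $P(v_{e_i}',v_{e_i})$ first traverses $\{v_{e_i}',v^*\}$, which has traversal time $5m+k-i$ and deadline $5m+k-i+1$. Adequacy therefore forces departure from~$v_{e_i}'$ at time step~$1$ and arrival at~$v^*$ at exactly time step $5m+k-i+1$. As $i$ ranges over $\set{m}$, these arrival times are pairwise distinct and cover exactly the time steps $4m+k+1,\dots,5m+k$.

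\emph{Step 2: each edge path leaves~$v^*$ immediately (except the first).}
Since $c(v^*)=1$, the sets of time steps during which different paths are located at~$v^*$ must be pairwise disjoint. Consider the edge path of~$e_i$ with $i\geq 2$. It is located at~$v^*$ from its arrival time $5m+k-i+1$ until its departure time~$t_i$. The edge path of~$e_{i-1}$ arrives at~$v^*$ at time step $5m+k-i+2$. Hence $t_i\leq 5m+k-i+1$, and by adequacy $t_i$ equals its arrival time. It follows that each time step $5m+k-i+1$, for $i\in\set{m}$, is a step at which the edge path of~$e_i$ is at~$v^*$. Consequently, $v^*$ is occupied by edge paths at all time steps $4m+k+1,\dots,5m+k$. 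The path for~$e_1$ may remain longer, but this does not matter for the claim.

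\emph{Step 3: earliest arrival at the destination.}
Each edge path departs~$v^*$ towards~$v_{e_i}$ no earlier than its arrival time $5m+k-i+1$. The edge $\{v^*,v_{e_i}\}$ has traversal time~$i$, so the path reaches~$v_{e_i}$ no earlier than $5m+k-i+1+i=5m+k+1$.

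The only slightly delicate step is Step~2. The point to get right is the convention for location at~$v^*$: a path departing at time~$t$ is still located at~$v^*$ at time~$t$. This is exactly what forces each departure to coincide with the arrival, because the next edge path arrives one step later.
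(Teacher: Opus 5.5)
Your proof is correct and follows the paper's argument: the deadline $5m+k-i+1$ and traversal time $5m+k-i$ on $\{v_{e_i}',v^*\}$ fix the arrival at $v^*$ to $5m+k-i+1$, these arrival times cover $4m+k+1,\dots,5m+k$, and the traversal time $i$ on $\{v^*,v_{e_i}\}$ gives earliest arrival $5m+k+1$. Your Step~2 (immediate departure forced by $c(v^*)=1$) is valid but not needed here, since a path is already located at $v^*$ at its arrival time; the paper uses that capacity argument only in the subsequent lemma on verifier paths.
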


\begin{proof}
 For each~$i\in\set{m}$,
 the edge path for~$e_i$ arrives at~$v^*$ at time step~$5m+k-i+1$ by construction,
 and if immediately departing,
 arriving at~$v_{e_i}$ at time step~$5m+k+1$
 (recall that the edge~$\{v^*,v_{e_i}\}$ has traversal time~$i$).
\end{proof}

\begin{lemma}
 \label{obs:nphard:stars:cap:one:verif}
 Let~$I'$ be a \yes-instance.
 Then for each~$i\in\set{m}$,
 the verifier paths of edge~$e_i$ can only be located at~$v^*$ at the time steps~$2,\dots,m+1$
 and at time step~$5m+k+2+i$.
 Hence,
 there must be at least one verifier path of edge~$e_i$ be located at~$v^*$ at the time steps~$2,\dots,m+1$.
\end{lemma}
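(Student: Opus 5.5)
I would prove \cref{obs:nphard:stars:cap:one:verif} by tracking when a verifier path $P(v_{e_i},v)$ can physically be at $v^*$, using only the traversal times and deadlines on its two edges, and then excluding the intermediate time window using the occupancy facts established just before.

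\textbf{Step 1: the time window from the two edges.} Edge $\{v_{e_i},v^*\}$ has traversal time $i$ and deadline $5m+k+2+i$. Since departures happen at times in $T=\{1,\dots,\tau\}$, the path departs $v_{e_i}$ at some $t\ge 1$ and arrives at $v^*$ at $t+i$. Hence it is at $v^*$ no earlier than $i+1\ge 2$, and arrives at $v^*$ no later than $5m+k+2+i$.

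The edge $\{v^*,v\}$ has traversal time $m+1$ and deadline $\tau=7m+k+3$. So the path must leave $v^*$ by time $6m+k+2$. This upper bound is weak; the real restriction comes from collisions on $\{v_{e_i},v^*\}$, handled next.

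\textbf{Step 2: excluding the middle window.} I use temporal edge-disjointness with the edge path $P(v_{e_i}',v_{e_i})$. By \cref{obs:occbyedgepaths}, the edge path of $e_i$ is at $v^*$ exactly at time $5m+k-i+1$. It traverses $\{v^*,v_{e_i}\}$ in the opposite direction to the verifier paths. Write $D$ for the departure time of the edge path from $v^*$ on this edge, so $D\ge 5m+k-i+1$, and $t$ for a verifier path's departure time from $v_{e_i}$. Condition (ii) requires $|t-D|\ge i$. Thus either $t\le D-i$, meaning the verifier path arrives at $v^*$ by time $D$, or $t\ge D+i$, meaning it arrives at $v^*$ no earlier than $D+2i$.

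\textbf{Step 3: combining with occupancy of $v^*$.} The capacity of $v^*$ is one. The blocker paths occupy every time step $m+n-k+2,\dots,4m$. By \cref{cor:nphard:star:cap:one:vertexpaths}, the vertex paths occupy every step $m+2,\dots,m+n-k+1$ and $4m+1,\dots,4m+k$. By \cref{obs:occbyedgepaths}, the edge paths occupy every step $4m+k+1,\dots,5m+k$. Any path is located at $v^*$ over a contiguous interval, so a verifier path cannot sit at $v^*$ across a blocked step. Hence a verifier path located at $v^*$ at some time in $[2,m+1]$ must leave $v^*$ by time $m+1$. A verifier path that arrives after $m+1$ must arrive at or after $5m+k+1$.

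In the late case, the arrival bound $D+2i\ge 5m+k+1+i$ from Step 2 combines with the deadline bound $5m+k+2+i$ from Step 1. The arrival lies in $\{5m+k+1+i,\,5m+k+2+i\}$. Pinning it to exactly $5m+k+2+i$ is the delicate step. I would exclude $5m+k+1+i$ either by a parity or non-simultaneity argument, or by noting that edge path $e_{i-1}$ reaches $v_{e_{i-1}}$ at $5m+k+1$ and then checking who occupies $v^*$ at each time $5m+k+1+i$. This exact pinning, including careful bookkeeping of the boundary times $m+1$ and $m+2$ and whether $D$ can exceed $5m+k-i+1$, is where I expect the main obstacle.

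\textbf{Step 4: the ``hence'' part.} Both verifier paths of $e_i$ go through the same edge $\{v_{e_i},v^*\}$. If both used the late slot, both would be at $v^*$ at time $5m+k+2+i$ simultaneously, which violates capacity one. Therefore at least one of them is located at $v^*$ during $2,\dots,m+1$.
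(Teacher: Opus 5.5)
Your framework matches the paper's: the traversal time and deadline of $\{v_{e_i},v^*\}$, the blocked steps $m+2,\dots,5m+k$ at $v^*$, and temporal edge-disjointness with the opposite-direction edge path $P(v_{e_i}',v_{e_i})$. The gap is the step you leave open, and it is the actual content of the lemma. None of your suggested fixes closes it. Condition (ii) only forces $|t-D|\ge i$, so it allows $t=D+i$. That is, the verifier path may leave $v_{e_i}$ at the very step the edge path reaches $v_{e_i}$, which (with $D=5m+k-i+1$) gives arrival $5m+k+1+i$ at $v^*$. Nothing at $v^*$ excludes this. If every late verifier path departs $v_{e_i}$ at $5m+k+1$, its arrivals $5m+k+2,\dots,6m+k+1$ at $v^*$ are pairwise distinct, and every constraint you invoke holds. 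So neither an occupancy check at $v^*$ nor a parity argument can rule it out.

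The missing ingredient is the capacity one of the \emph{leaf} $v_{e_i}$. The edge path is located at its sink $v_{e_i}$ exactly at time $D+i$. A verifier path is located at its source $v_{e_i}$ exactly at its departure time $t$. Hence $t\neq D+i$. In the late case this gives $t\ge D+i+1\ge 5m+k+2$, and the deadline $t+i\le 5m+k+2+i$ then forces $t=5m+k+2$ (and $D=5m+k-i+1$). This is what the paper uses when it concludes that the verifier path ``meets $P(v_{e_i},v_{e_i}')$ either at $v_{e_i}$ or on the edge''.

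The gap also breaks your Step 4, which needs a single late slot. Without $t\neq D+i$, the two verifier paths of $e_i$ could depart $v_{e_i}$ at $5m+k+1$ and $5m+k+2$. They would reach $v^*$ at $5m+k+1+i$ and $5m+k+2+i$ and leave immediately toward their distinct sinks, so neither would have to visit $v^*$ during $2,\dots,m+1$.

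Once you add the leaf-capacity argument, your version with a general departure time $D$ is tidier than the paper's. The paper first pins $D=5m+k-i+1$ by a case split: $i>1$ via the next edge path's arrival at $v^*$, and $i=1$ via a separate collision argument. In your version $D$ is pinned automatically. The early case is then handled by noting that the edge path occupies $v^*$ throughout $[5m+k-i+1,D]$.
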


\begin{proof}
 The traversal time of edge~$\{v_{e_i},v^*\}$ equals~$i$ for each~$i\in\set{m}$.
 Thus, the earliest time a verifier path can be located at~$v^*$ is 2.
 Due to \cref{obs:occbyedgepaths},
 $v^*$ is occupied for all time steps~$4m+k+1,\dots,5m+k$.
 Suppose towards a contradiction that~$P(v_{e_i},v)$ for some~$i$ and some~$v\in e_i$ arrives at time step~$t\in\{5m+k+1,\dots,5m+k+1+i\}$ at~$v^*$.
 The path~$P(v_{e_i}, v_{e_i}')$ is located at~$5m+k-i+1$ at~$v^*$
 and arrives earliest at~$v_{e_i}$ at time step~$5m+k+1$.

 We first show that each arrives exactly at time step~$5m+k+1$.
 If~$i>1$,
 then $P(v_{e_i}, v_{e_i}')$ arrives at~$v_{e_i}$ exactly at time step~$5m+k+1$:
 By construction,
 $P(v_{e_{i-1}}, v_{e_{i-1}}')$ arrives at~$v^*$ just one time step after $P(v_{e_i}, v_{e_i}')$ does,
 and hence,
 since~$v^*$ has capacity of only~$1$,
 $P(v_{e_i}, v_{e_i}')$ must leave~$v^*$ at time step~$5m+k-i+1$.
 Next consider~$i=1$.
 the edge path of~$e_1$ arrives at~$v^*$ at time step~$5m+k$.
 Recall that edge~$\{v_{e_1},v^*\}$ has traversal time~$1$ and deadline~$5m+k+2+1$.
 Since the edge path is already located at~$v^*$,
 no verifier path from~$v_{e_1}$ can depart
 (since~$v^*$ has capacity~$1$).
 Thus,
 assume the edge path departs from~$v^*$ at time step~$5m+k+j$, $j\geq 1$,
 then it arrives at $v_{e_1}$ at time step~$5m+k+j+1\geq 5m+k+2$.
 Since~$v_{e_i}$ has capacity~$1$,
 any verifier path can leave~$v_{e_i}$ earliest at time step~$5m+k+3$,
 which is too late due to the traversal time of~$1$;
 Thus,
 either the two paths meet on the edge or in~$v_{e_i}$,
 which has capacity of only one;
 a contradiction to~$I'$ being a \yes-instance.

 It follows that each path~$P(v_{e_i}, v_{e_i}')$ arrives exactly at~$v_{e_i}$ at time step~$5m+k+1$.
 Thus,
 when $P(v_{e_i},v)$
 arrives at time step~$t\in\{5m+k+1,\dots,5m+k+1+i\}$ at~$v^*$,
 it meets $P(v_{e_i}, v_{e_i}')$ either at $v_{e_i}$ or on the edge~$\{v^*,v_{e_i}\}$;
 a contradiction to~$I'$ being a \yes-instance.
\end{proof}

\begin{lemma}
 \label{obs:nphard:stars:cap:one:exactlyone}
 Let~$I'$ be a \yes-instance.
 Then for every~$i\in\set{m}$,
 there is exactly one verifier path of edge~$e_i$ located at~$v^*$ at one of the time steps~$2,\dots,m+1$.
\end{lemma}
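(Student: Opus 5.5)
By \cref{obs:nphard:stars:cap:one:verif}, for every~$i\in\set{m}$ at least one verifier path of~$e_i$ is located at~$v^*$ at some time step in~$\{2,\dots,m+1\}$. To upgrade ``at least one'' to ``exactly one'', I would use a counting argument on~$v^*$, which has capacity one, in the window~$\{2,\dots,m+1\}$. The claim is that this window can host at most~$m$ verifier paths in total, and that each of them occupies at least one time step of the window.

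\textbf{First step: who can use the window.} Blocker paths arrive at~$v^*$ at time steps~$m+n-k+j+1\geq m+n-k+2$. Vertex paths are located at~$v^*$ only at~$m+2,\dots,m+n-k+1$ and~$4m+1,\dots,4m+k$ by \cref{cor:nphard:star:cap:one:vertexpaths}. Edge paths reach~$v^*$ no earlier than~$4m+k+1$. Hence the only paths that can be at~$v^*$ during~$\{2,\dots,m+1\}$ are verifier paths.

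\textbf{Second step: verifier paths must leave by~$m+1$.} A verifier path~$P(v_{e_i},v)$ located at~$v^*$ in the window must depart toward~$v$ before time step~$m+2$. Otherwise it would still be at~$v^*$ at time~$m+2$, which is occupied by a vertex path by \cref{cor:nphard:star:cap:one:vertexpaths}. Since~$v^*$ has capacity one, a verifier path arriving at time~$t$ blocks~$v^*$ at least at~$t$. The window has exactly~$m$ time steps, so at most~$m$ verifier paths can be located at~$v^*$ within it, each using at least one distinct time step.

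\textbf{Third step: conclude by pigeonhole.} There are~$m$ edges, and each contributes at least one verifier path in the window. With at most~$m$ slots, each edge contributes exactly one.

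The main obstacle is the second step. I must make sure that a verifier path which is at~$v^*$ in the window cannot linger until the second admissible slot~$5m+k+2+i$ without breaking the counting. Staying at~$v^*$ would block the vertex paths at~$m+2$, so that case is excluded. I would also double-check that being located at~$v^*$ means occupying a time step there; this holds even for paths that arrive and depart in the same step.
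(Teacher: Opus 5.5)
Your proof is correct and is essentially the paper's argument: the paper also combines \cref{obs:nphard:stars:cap:one:verif} with a pigeonhole count over the $m$ time steps $2,\dots,m+1$ of the capacity-one center $v^*$, only phrased as a contradiction (two verifier paths of some $e_i$ in the window force some $e_j$ to have none there, so both verifier paths of $e_j$ would have to be at $v^*$ at time step $5m+k+2+j$). Your Steps~1 and~2 and your worry about paths lingering are unnecessary: a path located at $v^*$ in the window occupies a time step of the window by definition, and capacity one alone makes these occupied time steps pairwise distinct.
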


\begin{proof}
 Suppose towards a contradiction that the statement is not true,
 that is,
 there are two verifier paths of edge~$e_i$ located at~$v^*$ at the time steps~$2,\dots,m+1$
 (the only remaining case due to \cref{obs:nphard:stars:cap:one:verif}).
 By the pigeon hole principle,
 it follows that there is some~$j\in\set{m}$ such that no verifier path of edge~$e_j$ is located at~$v^*$ at one of the time steps~$2,\dots,m+1$.
 Then,
 both verifier paths for~$e_j$ can only be located at~$v^*$ at time step~$5m+k+2+j$;
 a contradiction to~$I'$ being a \yes-instance.
\end{proof}

\begin{proof}[Proof of~\cref{thm:nphard:stars:cap:one}]
 Let~$I'=(\DG,\Pset)$ be the instance of \simevacAcr{}
 on decaying star~$\DG=(V',E',\emptyset,c,\trt,d,\tau)$ with center vertex~$v^*$
 obtained via \cref{constr:nphard:stars:cap:one}
 from a given instance $I=(G,k)$ of~\prob{Vertex Cover}
 with $G=(V,E)$, $n=|V|$, and~$m=|E|$.
 We claim that~$I$ is a \yes-instance if and only if~$I'$ is a \yes-instance.

 \RD{}
 Let~$W\subseteq V$ be a vertex cover of~$G$ of size~$k$.
 We construct a valid temporalization~$\TER$ as follows.
 Let~$\bar{W}=V\setminus W$ and let~$\sigma\colon W\to\set{|W|}$ and~$\bar\sigma\colon \bar{W}\to\set{|\bar{W}|}$ be two bijections.
 Let $\TER(P(b_j,v^*),b_j)=1$ for each~$j\in\{1,\dots,3m-n+k-1\}$.
 Recall that these paths arrive exactly at all time steps~$m+n-k+2,\dots,4m$.
 Let~$e_i=\{v,w\}$.
 Let~$x_i$ be the smallest element in~$e_i\cap W$ with respect to~$\sigma$
 and $y_i\ceq e_i\setminus \{x_i\}$.
 Set
 \begin{align*}
  \TER(P(v_{e_i},x_i),v_{e_i}) &= 1, \\
 \TER(P(v_{e_i},x_i),v^*) &= 1+i, \\
 \TER(P(v_{e_i},y_i),v_{e_i}) &= 5m+k+2, \\
 \TER(P(v_{e_i},y_i),v^*) &= 5m+k+2+i,\text{ and }\\
 \TER(P(v_{e_i}',v_{e_i}),v_{e_i}') &= 1,\\
 \TER(P(v_{e_i}',v_{e_i}),v^*) &= 5m+k-i+1.
 \end{align*}
 Thus,
 these paths are located at~$v^*$ at each time step~$2,\dots,m+1$ in the first phase,
 and at each time step~$4m+k+1,\dots,5m+k$ and~$5m+k+3,\dots,6m+k+2$
 in the second phase.
 Also note that the paths of the first phase
 arrive at their endpoints at the time steps~$m+2,\dots,2m+1$.
 For~$v\in \bar{W}$,
 let~$\TER(P(v,v'),v)=\bar{\sigma}(v)$
 and~$\TER(P(v,v'),v^*)=\bar{\sigma}(v)+m+1$.
 Recall that~$|\bar{W}|=n-k$,
 and thus all these paths arrive and depart exactly at the time steps~$m+2,\dots,m+n-k+1$.
 For~$v\in W$,
 let~$\TER(P(v,v'),v)=3m-1+\sigma(v)$
 (this is possible,
 since all verifier paths of the first phase arrive latest at~$2m+1$ at their endpoints)
 and~$\TER(P(v,v'),v^*)=4m+\sigma(v)$
 (this is possible,
 since the last blocker path arrives at time step~$4m$).
 Recall that~$|W|=k$,
 and thus all these paths arrive and depart exactly at the time steps~$4m+1,\dots,4m+k$.

 \LD{}
 Let~$\TER$ be a valid assignment.
 We claim that~$W=\{v\in V\mid \TER(P(v,v'),v^*)\geq 4m+1\}$
 is a vertex cover of~$G$.
 First note that~$|W|=k$ due to~\cref{cor:nphard:star:cap:one:vertexpaths}.
 Further,
 note that by~\cref{obs:nphard:stars:cap:one:exactlyone},
 we know that exactly one verifier path for each edge with endpoint in~$W$
 arrive at and depart from~$v^*$ at some time step~$2,\dots,m+1$
 (again, due to~\cref{cor:nphard:star:cap:one:vertexpaths},
 none of their endpoints can be in~$V\setminus W$).
 Thus,
 we have that every edge has one endpoint in~$W$,
 proving that~$W$ is a vertex cover.
\end{proof}

\subsection{On uncapacitated trees}
\label{sec:nphard:trees}

\begin{theorem}
\label{thm:nphard:trees}
\simevacAcr{} is \NP-hard even on uncapacitated,
\exogenous{}
decaying trees with lifetime four.
\end{theorem}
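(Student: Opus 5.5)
We reduce from \prob{3-SAT}; a restricted variant in which every variable occurs in a bounded number of clauses would also do. Because the target graph is uncapacitated, the only constraints we may exploit are temporal edge-disjointness (distinct departure times on shared connections, and the gap condition for opposite traversals of undirected edges) together with deadlines. With lifetime four, every path of length $\ell$ with all traversal times zero can still choose its departure times monotonically within $\{1,\dots,4\}$. The plan is to make the \emph{time} at which a designated path crosses a designated edge encode a truth value, and to let collisions on shared edges propagate that choice.

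\textbf{Variable gadgets.} For each variable $x_j$ we build a subtree rooted at a hub vertex $h_j$. We route one variable path $X_j$ over an edge $f_j$ with deadline~$2$, so that it must use time~$1$ or~$2$ there; time $1$ means \emph{true} and time~$2$ means \emph{false}. We then attach a ``forcing'' path that shares $f_j$ or a neighbouring edge with $X_j$ and occupies the other time slot. Deadlines on leaf connections (the exogenous requirement is easy to satisfy because deadlines only need to shrink toward the leaves) ensure the forcing path is used at exactly one of the two slots. For each occurrence of $x_j$ in a clause $C$, we then create a literal path that shares an edge with $X_j$ in the appropriate direction. If the literal is satisfied, the literal path can cross that edge early, at time~$1$ or~$2$; otherwise it is delayed to time~$3$ or later.

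\textbf{Clause gadgets.} For each clause $C_i$ we use a star-like subtree centred at $c_i$, with an edge $g_i$ of deadline~$4$ towards a common root. The three literal paths of $C_i$ must all traverse $g_i$, so they need three distinct departure times on $g_i$. We choose the arrangement so that, after the delays, only times $\{3,4\}$ remain free on $g_i$ for literal paths that arrived late. Two slots suffice for at most two late literal paths, so a clause is satisfiable in the gadget iff at least one of its literals is true. Variable subtrees and clause subtrees are glued along the root, and every edge gets traversal time~$0$ except where a traversal time of~$1$ is needed to create the one-step delay. Capacities are set to $c(v)=|\Pset(v)|$, which makes the instance uncapacitated.

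\textbf{Correctness.} Given a satisfying assignment, we schedule each $X_j$ at the slot for its truth value and push the literal paths greedily; the free slots on each $g_i$ then suffice. Conversely, from a valid temporalization we read off the assignment from the departure time of $X_j$ on $f_j$ and argue that a clause with three false literals would need three late paths on $g_i$ but has only two late slots. I expect the main obstacle to be designing the consistency part of the variable gadget: all occurrence paths of $x_j$ must be delayed uniformly by $X_j$, without giving them an alternative escape slot, while staying within $\tau=4$ and keeping the tree exogenous. To achieve this, I would first try to give each literal path its own edge shared pairwise with $X_j$ in the opposite direction, using the gap rule $|\TER(P,e)-\TER(P',e)|\ge\max\{1,\trt(e)\}$ with $\trt(e)=1$, so that a single choice of $X_j$ blocks a whole time step on every one of these edges simultaneously.
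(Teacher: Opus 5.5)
Your clause counting is fine and mirrors the paper's argument: three paths on a deadline-$4$ clause edge, with only two ``late'' slots. The proof, however, stands or falls with the variable gadget, and that is what is missing. You never specify it, and the mechanism you suggest cannot work as described.

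\textbf{Why the suggested mechanism fails.}
\begin{itemize}
\item A single path occupies exactly one departure time on each edge. For $\trt(e)\in\{0,1\}$ the opposite-direction rule only requires $|\TER(P,e)-\TER(X_j,e)|\geq\max\{1,\trt(e)\}=1$, so it again forbids just the coinciding step. In your description a literal path has the two early options $1$ and $2$, so it simply takes whichever one $X_j$ leaves free.
\item A second, truth-independent blocker on the same edge would remove the remaining early slot for both truth values, so it does not help.
\item You read the truth value off at $f_j$, but $X_j$ may wait. Its slot on $f_j$ therefore does not fix its slots on the other edges it shares with literal paths. The claim that one choice of $X_j$ blocks the same step on all these edges simultaneously, and hence consistency across occurrences, is unfounded.
\item For unrestricted 3-SAT you give no way for one variable to control arbitrarily many occurrences. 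The paper needs (2,2)-3SAT for exactly this reason.
\end{itemize}

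\textbf{What is missing.} Two ingredients.

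\emph{First, propagate the choice backwards.} With zero traversal times, a path that takes its last edge at time $1$ takes every edge at time $1$. The paper uses this as follows.
\begin{itemize}
\item Deadline-$1$ leaf edges force the blockers $P(u_i,u_i'')$ and $P(w_i,w_i'')$ to time $1$.
\item Together with the deadline-$3$ edges, this forces $P_1(u_i',w_i')$ and $P_2(u_i',w_i')$ to times $2$ and $3$. Only slots $1$ and $4$ remain on $\{z,u_i\}$ and $\{z,w_i\}$.
\item Each of these two edges is the last edge of two validity paths. $P(t_{i,1},u_i)$ and $P(f_{i,1},w_i)$ share $\{v_{i,1},z\}$, and $P(t_{i,2},w_i)$ and $P(f_{i,2},u_i)$ share $\{v_{i,2},z\}$.
\item Hence slot $1$ goes either to both $t$-paths or to both $f$-paths. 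That pair then sits at time $1$ on its leaf edges $\{t_{i,j},v_{i,j}\}$ resp.\ $\{f_{i,j},v_{i,j}\}$.
\end{itemize}

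\emph{Second, pin each literal path's early route to a single schedule, so that removing one slot really delays it.}
\begin{itemize}
\item The blocker $P(t_{i,j}',f_{i,j}')$ is forced to time $2$ on $\{t_{i,j},v_{i,j}\}$ and $\{v_{i,j},f_{i,j}\}$.
\item Slot $1$ of $\{v_{i,j},z\}$ is always taken by a validity path.
\item So a verifier path can be at $z$ at time $2$ only via slot $1$ on its leaf edge and slot $2$ on $\{v_{i,j},z\}$.
\item Consequently, all early verifier paths of $x_i$ come from the polarity whose validity pair is not at time $1$. This gives a consistent assignment.
\item Slot $1$ on $\{z,c_j\}$ is unusable for verifier paths, so the three verifier paths of a clause need slots $2,3,4$. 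Every clause therefore needs one early verifier path.
\end{itemize}
All deadlines are non-increasing away from $z$, which gives exogeneity.

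Without concrete gadgets of this kind, neither direction of your correctness argument is established, and neither is your exogeneity claim.
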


We give a polynomial-time many-one reduction from the following \NP-hard~\cite{DarmannD20} problem \prob{(2,2)-3SAT}:
Given a set~$X=\{x_1,\dots,x_N\}$ of variables and a boolean CNF-formula~$\phi=\bigland_{j=1}^M C_j$ over~$X$ with clauses $C_1,\dots,C_M$,
each consisting of exactly three distinct variables,
such that
every variable appears exactly twice negated and exactly twice unnegated
in~$\phi$,
the question is whether
there is a truth assignment~$\truthass\colon X\to\{\true,\false\}$ that satisfies~$\phi$?

\newcommand{\varapp}[4]{
	\node (#1A) at (#3*\xr+-2*\xr,-1*\yr)[xnode,label={[label distance=-3pt]{225-#2*90}:{$v_{i,#2}$}}]{};
	\node (#1T) at (#3*\xr+-3.5*\xr,-2*\yr)[xnode]{};
	\node (#1F) at (#3*\xr+-0.5*\xr,-2*\yr)[xnode]{};
	\ifnum#4<1
		\node at (#1T)[label=180:{$t_{i,#2}$}]{};
		\node at (#1F)[label=0:{$f_{i,#2}$}]{};
		\node (#1T1) at (#3*\xr+-3.5*\xr,-3.5*\yr)[xnode,label=180:{$t_{i,#2}'$\,\,}]{};
		\node (#1T2) at (#3*\xr+-3.5*\xr,-5*\yr)[xnode,label=0:{$t_{i,#2}''$}]{};
		\node (#1F1) at (#3*\xr+-0.5*\xr,-3.5*\yr)[xnode,label=0:{\phantom{f}$f_{i,#2}'$}]{};
		\node (#1F2) at (#3*\xr+-0.5*\xr,-5*\yr)[xnode,label=180:{$f_{i,#2}''$}]{};
		\tikzESd{#1T/#1T1/$2$,
			#1T1/#1T2/1,
			#1F/#1F1/$2$,
			#1F1/#1F2/1%
		}
	\else
		\node at (#1T)[label={[label distance=-2pt]-90:{$t_{i,#2}$}}]{};
		\node at (#1F)[label={[label distance=-2pt]-90:{$f_{i,#2}$}}]{};
	\fi

	\tikzES{z/#1A,
			#1A/#1T,
			#1A/#1F%
	}

	\ifnum#4=1
		\node (x) at (-3*\xr,0.875*\yr)[xnode,label=90:{$u_i$}]{};
		\node (y) at (3*\xr,0.875*\yr)[xnode,label=90:{$w_i$}]{};
		\tikzES{z/x,z/y}
	\fi

	\ifnum#4=0
		\node (x) at (-1.75*\xr,0.875*\yr)[xnode,label={[label distance=2pt]90:{$u_i$}}]{};
		\node (y) at (1.75*\xr,0.875*\yr)[xnode,label={[label distance=2pt]90:{$w_i$}}]{};
		\tikzES{z/x,z/y}
		\node (x1) at (-3.5*\xr,0.75*\yr)[xnode,label={$u_i'$}]{};
		\node (x2) at (-5.25*\xr,0.625*\yr)[xnode,label={$u_i''$}]{};
		\node (y1) at (3.5*\xr,0.75*\yr)[xnode,label={$w_i'$}]{};
		\node (y2) at (5.25*\xr,0.625*\yr)[xnode,label={$w_i''$}]{};
		\tikzESd{x/x1/$3$,
				y/y1/$3$,
				x1/x2/1,
				y1/y2/1%
		}
	\fi
}

\begin{construction}
\label{constr:nphard:trees}
Let~$I=(X,\phi)$ be an instance of~\prob{(2,2)-3SAT}
with~$X=\{x_1,\dots,x_N\}$ and~$\phi=\bigland_{j=1}^M C_j$.
Construct an instance~$I'=(\DG,\Pset)$ with~$\DG=(V,E,\emptyset,c,\trt,d,\tau)$ of~\simevacAcr{}
with~$\tau=4$ and uncapacitated~$c$
as follows
(see~\Cref{fig:nphard:trees:A,fig:nphard:trees:B} for an illustration).

\begin{figure}[t!]
	\centering
	\begin{tikzpicture}
	\tikzpreamble{}
	\def\xr{0.7375}
	\def\yr{0.5}
	\newcommand{\mylbl}[3]{\node at (#1*\xr,#2*\yr)[]{(#3)};}

	\begin{scope}
		\mylbl{-5.75}{1.75}{a}
		\node (z) at (0,1*\yr)[xnode,label={$z$}]{};
		\varapp{one}{1}{-1}{0}
		\varapp{two}{2}{5}{0}

		\draw[xpathB] ($(oneT1)+(0.25*\xr,0*\yr)$) to ($(oneT)+(0.25*\xr,0*\yr)$) to node[xemark]{$1\times$}($(oneA)+(-0.0*\xr,-0.15*\yr)$) to ($(oneF)+(-0.25*\xr,0*\yr)$) to ($(oneF1)+(-0.25*\xr,0*\yr)$);
		\draw[xpathA] ($(oneT)+(-0.25*\xr,-0.0*\yr)$) to node[xemark]{$1\times$}($(oneT2)+(-0.25*\xr,0*\yr)$);
		\draw[xpathA] ($(oneF)+(0.25*\xr,-0.0*\yr)$) to node[xemark]{$1\times$}($(oneF2)+(0.25*\xr,0*\yr)$);

		\draw[xpathB] ($(twoT1)+(0.25*\xr,0*\yr)$) to ($(twoT)+(0.25*\xr,0*\yr)$) to node[xemark]{$1\times$}($(twoA)+(0.0*\xr,-0.15*\yr)$) to ($(twoF)+(-0.25*\xr,0*\yr)$) to ($(twoF1)+(-0.25*\xr,0*\yr)$);
		\draw[xpathA] ($(twoT)+(-0.25*\xr,-0.0*\yr)$) to node[xemark]{$1\times$}($(twoT2)+(-0.25*\xr,0*\yr)$);
		\draw[xpathA] ($(twoF)+(0.25*\xr,-0.0*\yr)$) to node[xemark]{$1\times$}($(twoF2)+(0.25*\xr,0*\yr)$);

		\draw[xpathA] ($(x)+(0*\xr,-0.25*\yr)$) to node[xemark]{$1\times$}($(x2)+(0*\xr,-0.25*\yr)$);
		\draw[xpathA] ($(y)+(0*\xr,-0.25*\yr)$) to node[xemark]{$1\times$}($(y2)+(0*\xr,-0.25*\yr)$);
		\draw[xpathB] ($(x1)+(0*\xr,0.25*\yr)$) to ($(x)+(0*\xr,0.25*\yr)$) to node[xemark]{$2\times$}($(z)+(0.0*\xr,0.25*\yr)$) to ($(y1)+(0*\xr,0.25*\yr)$);

	\end{scope}

		\end{tikzpicture}
		\caption{Illustration to the construction regarding~$V_i$ and its blocker paths. Edges with deadlines~$<4$ are labeled.}
		\label{fig:nphard:trees:A}
	\end{figure}

	\begin{figure}[t!]
		\centering
		\begin{tikzpicture}
		\tikzpreamble{}
		\def\xr{0.3375}
		\def\yr{0.4}
		\def\xpd{0.4}
		\newcommand{\mylbl}[3]{\node at (#1*\xr,#2*\yr)[]{(#3)};}

		\begin{scope}[xshift=0*\xr cm, yshift=0*\yr cm]
		\def\xr{0.5}
		\def\yr{0.6}
		\def\xpd{0.4}
		\mylbl{-4.875}{2.675}{a}
		\node (z) at (0,1*\yr)[xnode,label={$z$}]{};
		\varapp{one}{1}{-1}{1}
		\varapp{two}{2}{5}{1}

		\draw[xpathA] ($(oneT)+(0.0*\xr,\xpd*\yr)$)
		to ($(oneA)+(-\xpd*\xr,0*\yr)$)
		to node[xemark]{$1\times$}($(z)+(-3*\xpd*\xr,-\xpd*\yr)$) to ($(x)+(0*\xr,-\xpd*\yr)$);
		\draw[xpathB] ($(oneF)+(0*\xr,\xpd*\yr)$) to ($(oneA)+(\xpd*\xr,0*\yr)$) to node[xemark]{$1\times$}($(z)+(2*\xpd*\xr,\xpd*\yr)$) to ($(y)+(-0.0*\xr,\xpd*\yr)$);
		\draw[xpathC] ($(twoF)+(0*\xr,\xpd*\yr)$)
		to ($(twoA)+(\xpd*\xr,0*\yr)$)
		to node[xemark]{$1\times$}($(z)+(-0.35*\xpd*\xr,\xpd*\yr)$) to ($(x)+(0.0*\xr,\xpd*\yr)$);
		\draw[xpathD] ($(twoT)+(0.0*\xr,\xpd*\yr)$) to ($(twoA)+(-\xpd*\xr,0*\yr)$) to node[xemark]{$1\times$}($(z)+(0.35*\xr,-\xpd*\yr)$) to ($(y)+(0*\xr,-\xpd*\yr)$);

	\end{scope}

		\begin{scope}[xshift=20*\xr cm, yshift=0*\yr cm]
		\def\xr{0.5}
		\def\yr{0.6}
		\def\xpd{0.4}
		\mylbl{-5.875}{2.675}{b}
		\node (z) at (0,1*\yr)[xnode,label={$z$}]{};
		\varapp{one}{1}{-1}{2}
		\varapp{two}{2}{5}{2}

		\newcommand{\cnodes}[2]{%
			\node (#1c) at (#2*\xr,2.5*\yr)[xnode,label=0:{$c_{#1}$}]{};
			\tikzES{z/#1c}

		}

		\cnodes{a}{-4.5}
		\cnodes{b}{-1.5}
		\cnodes{d}{1.5}
		\cnodes{e}{4.5}

		\draw[xpathA] ($(oneT)+(0.0*\xr,\xpd*\yr)$) to ($(oneA)+(-\xpd*\xr,0*\yr)$) to node[xemark]{$1\times$}($(z)+(-2*\xpd*\xr,-0.1*\yr)$) to ($(ac)+(\xpd*\xr,0*\yr)$);
		\draw[xpathB] ($(oneF)+(0*\xr,\xpd*\yr)$) to ($(oneA)+(\xpd*\xr,0*\yr)$) to node[xemark]{$1\times$}($(z)+(-0.125*\xr,-\xpd*\yr)$) to ($(bc)+(\xpd*\xr,0*\yr)$);
		\draw[xpathC] ($(twoF)+(0*\xr,\xpd*\yr)$) to ($(twoA)+(\xpd*\xr,0*\yr)$) to node[xemark]{$1\times$}($(z)+(2*\xpd*\xr,-0.1*\yr)$) to ($(ec)+(\xpd*\xr,0*\yr)$);
		\draw[xpathD] ($(twoT)+(0.0*\xr,\xpd*\yr)$) to ($(twoA)+(-\xpd*\xr,0*\yr)$) to node[xemark]{$1\times$}($(z)+(0.125*\xr,-\xpd*\yr)$) to ($(dc)+(\xpd*\xr,0*\yr)$);

		\end{scope}

		\end{tikzpicture}
		\caption{Illustration to the construction for (a) validation paths
		and (b) verifier paths.}
		\label{fig:nphard:trees:B}

	\end{figure}

Let~$V=\{z\}\cup \bigcup_{i=1}^N V_i\cup \bigcup_{j=1}^M \{c_j\}$,
where $V_i = \{u_i,u_i',u_i''\}\cup \{w_i,w_i',w_i''\}\cup \bigcup_{j\in\{1,2\}}\{\{v_{i,j}\}\cup\{t_{i,j},t_{i,j}',t_{i,j}''\}\cup\{f_{i,j},f_{i,j}',f_{i,j}''\}\}$
for each~$i\in\set{N}$.

	\smallskip\noindent\emph{Edge set.}
Construct edge set~$E$ with deadlines~$d$ as follows,
where every edge~$e\in E$ has traversal time~$\trt(e)=0$:

For each $i\in\set{N}$ (see~\Cref{fig:nphard:trees:A}),
add edges~$\{z,u_i\}$, $\{z,w_i\}$ with deadline~$4$,
$\{u_i,u_i'\}$, $\{w_i,w_i'\}$ with deadline~$3$,
and $\{u_i',u_i''\}$, $\{w_i',w_i''\}$ with deadline~$1$;
For each~$j\in\{1,2\}$,
add
$\{z,v_{i,j}\}$ with deadline~$4$,
$\{t_{i,j},t_{i,j}'\}$, $\{f_{i,j},f_{i,j}'\}$ with deadline~$2$,
and $\{t_{i,j}',t_{i,j}''\}$, $\{f_{i,j}',f_{i,j}''\}$ with deadline~$1$.

For each~$j\in\set{M}$ (see~\Cref{fig:nphard:trees:B}(b)),
add edge~$\{z,c_j\}$ with deadline~$4$.

	\smallskip\noindent\emph{Path set.}
Since~$G$ is a decaying tree,
we represent each path as~$P(s,t)$ with source~$s$ and sink~$t$.

For each $i\in\set{N}$,
add the following \emph{blocker} paths (see~\Cref{fig:nphard:trees:A}).
Add~$P(u_i,u_i'')$,
$P(w_i,w_i'')$,
$P_1(u_i',w_i')$ and $P_2(u_i',w_i')$.
For all~$j\in\{1,2\}$,
add
$P(t_{i,j},t_{i,j}'')$, $P(f_{i,j},f_{i,j}'')$,
and
$P(t_{i,j}',f_{i,j}')$.

Next (see~\Cref{fig:nphard:trees:B}(a)), add the \emph{validity} paths~$P(t_{i,1},u_i)$, $P(t_{i,2},w_i)$, $P(f_{i,1},w_i)$, and $P(f_{i,2},u_i)$.

Finally (see~\Cref{fig:nphard:trees:B}(b)),
when variable~$x_i$ first (resp.\ second) unnegated appearance is in clause~$C_a$ (resp.\ $C_d$),
add the \emph{verifier} path~$P(t_{i,1},c_a)$ (resp.\ $P(t_{i,2},c_d)$).
When variable~$x_i$ first (resp.\ second) negated appearance is in clause~$C_b$ (resp.\ $C_e$),
add the verifier path~$P(f_{i,1},c_b)$ (resp.\ $P(f_{i,2},c_e)$).
Finally add the following \emph{verifier} paths (see~\Cref{fig:nphard:trees:B}(b)).
When variable~$x_i$ appears unnegated in clauses~$C_a$ and~$C_d$ with~$a<d$,
add the \emph{verifier} paths~$P(t_{i,1},c_a)$ and $P(t_{i,2},c_d)$.
When variable~$x_i$ appears negated in clauses~$C_b$ and~$C_e$ with~$b<e$,
add the verifier paths~$P(f_{i,1},c_b)$ and $P(f_{i,2},c_e)$.

\end{construction}

By construction,
we immediately get the following.

\begin{observation}
	\label{obs:nphard:trees:blockingpath:A}
	Let~$I'$ be a \yes-instance.
	For every~$i\in\set{N}$,
	on their edges,
	$P(u_i,u_i'')$,
	$P(w_i,w_i'')$,
	$P(t_{i,j},t_{i,j}'')$ and $P(f_{i,j},f_{i,j}'')$ for $j\in\{1,2\}$,
	take exactly the time step~$1$,
	and $P_1(v_{i,j},v_{i,j}'')$ and $P_2(v_{i,j},v_{i,j}'')$
	take exactly the time steps~1 and~2.
\end{observation}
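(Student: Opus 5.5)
The plan is a direct counting argument based only on adequacy and temporal edge-disjointness. Every edge of the construction has traversal time $0$. So for a path $P=(e_1,\dots,e_k)$, adequacy reads $\TER(P,e_1)\le \TER(P,e_2)\le\dots\le \TER(P,e_k)\le \dl(e_k)$. Hence the departure times along any path are non-decreasing and bounded by the deadline of its last edge.

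\emph{Single blockers.} I first treat the single blocker paths $P(u_i,u_i'')$, $P(w_i,w_i'')$, $P(t_{i,j},t_{i,j}'')$ and $P(f_{i,j},f_{i,j}'')$. Each ends with an edge of deadline~$1$ ($\{u_i',u_i''\}$, $\{w_i',w_i''\}$, $\{t_{i,j}',t_{i,j}''\}$, $\{f_{i,j}',f_{i,j}''\}$). Monotonicity then forces every departure time on every edge of the path to be at most~$1$. Since $T=\set{\tau}$ starts at~$1$, each such path takes exactly time step~$1$ on all of its edges.

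\emph{The pair $P_1(v_{i,j},v_{i,j}'')$, $P_2(v_{i,j},v_{i,j}'')$.} These two paths have identical vertex sequences, so $\Rel(P_1)=\Rel(P_2)$. By condition~(i) of temporal edge-disjointness, they must depart on each common edge at different time steps. The route in question ends in the deadline-$2$ bottleneck of the respective gadget. By monotonicity, all departure times of both paths lie in $\{1,2\}$.

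I then show that each of the two paths uses a single time step throughout. Suppose some $P_a$ switches from~$1$ to~$2$ at some edge $e$. On $e$ itself, $P_a$ departs at~$2$, so the other path $P_b$ must depart at~$1$ on~$e$. By monotonicity, $P_b$ then departs at~$1$ on all earlier edges, including the edge before~$e$ where $P_a$ also departs at~$1$. This is a collision, so no switch is possible. It follows that one path uses time step~$1$ on all its edges and the other uses time step~$2$, so together they take exactly the time steps~$1$ and~$2$.

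\emph{Main obstacle.} The delicate point is the bookkeeping in this last step. I need to identify precisely which edge of the pair's route carries the binding deadline, so that the bound ``at most~$2$'' really applies to every edge. I also need to confirm that no opposite-direction path (condition~(ii)) from the same gadget interferes and removes one of the two time steps. I would check that by going through the blocker paths of the gadget listed in \cref{constr:nphard:trees} one by one. Since all lower-bound arguments are monotonicity plus pigeonhole on at most two available time steps, I expect the rest to be routine.
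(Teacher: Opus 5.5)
Your argument for the single blockers $P(u_i,u_i'')$, $P(w_i,w_i'')$, $P(t_{i,j},t_{i,j}'')$ and $P(f_{i,j},f_{i,j}'')$ is correct. It is exactly what the paper's one-line justification (``by construction'') amounts to. Your switching argument is also sound: two paths with identical routes, zero traversal times, and all departures in a two-element window $\{a,a+1\}$ must each be constant, one at $a$ and one at $a+1$.

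The gap is the premise that feeds this argument. You assert that the pair's route ``ends in the deadline-$2$ bottleneck'' and that no opposite-direction blocker removes a time step, and you postpone both checks as routine. They are not routine, and they do not go through. The construction defines no vertex $v_{i,j}''$ and no paths $P_k(v_{i,j},v_{i,j}'')$, so the statement names objects that do not exist. The only doubled blocker paths are $P_1(u_i',w_i')$ and $P_2(u_i',w_i')$. Their route $u_i',u_i,z,w_i,w_i'$ carries deadlines $3,4,4,3$, so it has no deadline-$2$ edge at all.

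For this actual pair, the window comes from two different mechanisms.
\begin{itemize}
\item The upper bound $3$ comes from the last edge $\{w_i,w_i'\}$.
\item The lower bound $2$ comes from condition~(ii) of temporal edge-disjointness. By your first part, $P(u_i,u_i'')$ departs $u_i\to u_i'$ at time step $1$. Since $\max\{1,\theta(\{u_i,u_i'\})\}=1$, neither $P_k(u_i',w_i')$ may depart $u_i'\to u_i$ at time step $1$.
\end{itemize}
So the opposite-direction interaction you expected to be harmless is exactly what shifts the window to $\{2,3\}$. Fed into your switching argument, this gives ``time steps $2$ and $3$'', not ``$1$ and $2$''. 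That corrected form is what the paper restates in item~(i) of its subsequent observation. It is also what the validity lemma relies on, since it leaves only time steps $1$ and $4$ on $\{z,u_i\}$ and $\{z,w_i\}$.

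A deadline-$2$ edge combined with a time-$1$ opposite blocker does occur in the gadget, but on the single path $P(t_{i,j}',f_{i,j}')$. The same two mechanisms pin that path to time step $2$.

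As written, the second half of your proposal argues about a route you postulated rather than derived. The missing step is computing the window from the actual gadget, including the condition-(ii) constraint.
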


By construction and~\cref{obs:nphard:trees:blockingpath:A},
we now get the following.

\begin{observation}
	\label{obs:nphard:trees:blockingpath}
	Let~$I'$ be a \yes-instance.
	For every~$i\in\set{N}$,
	on their edges,
	$P_k(u_i',w_i')$ take exactly the time steps~$\set[2]{\tau-1}$,
	$P_k(t_{i,j}',f_{i,j}')$, $j\in\{1,2\}$,
	take exactly the time steps~$\set[2]{\tau-2}$,
	and $P_k(v_{i,1}',v_{i,2}')$
	take exactly the time steps~$\set[3]{\tau-2}$.
\end{observation}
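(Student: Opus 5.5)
The plan is to derive all three claims from \cref{obs:nphard:trees:blockingpath:A} together with three facts: adequacy with zero traversal times (departure times along a path are non-decreasing and bounded by the deadlines), condition~(i) of temporal edge-disjointness for paths traversing an edge in the same direction, and condition~(ii) for opposite directions. Since $\trt\equiv 0$, condition~(ii) says two paths crossing an edge in opposite directions cannot depart at the same time step. Capacities play no role, since $c$ is uncapacitated. Throughout, $\tau=4$, so $\set[2]{\tau-1}=\{2,3\}$, $\set[2]{\tau-2}=\{2\}$ and $\set[3]{\tau-2}=\emptyset$.

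\emph{The paths $P_k(u_i',w_i')$.} Each such path traverses $u_i',u_i,z,w_i,w_i'$. On $\{u_i',u_i\}$ (deadline~$3$), \cref{obs:nphard:trees:blockingpath:A} says that $P(u_i,u_i'')$ departs from $u_i$ towards $u_i'$ at time~$1$. Because $P_k$ crosses this edge in the opposite direction, condition~(ii) excludes time~$1$, so its departure $a_k$ lies in $\{2,3\}$. By condition~(i), $a_1\neq a_2$, so $\{a_1,a_2\}=\{2,3\}$.

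The same argument applies to $\{w_i,w_i'\}$ (deadline~$3$). There $P(w_i,w_i'')$ occupies time~$1$ in the same direction, so condition~(i) gives a departure $b_k\in\{2,3\}$ with $b_1\neq b_2$. Monotonicity forces $a_k\le b_k$. The assignment $a_1=2,b_1=3$ would force $a_2=3,b_2=2$, which violates monotonicity, so $a_k=b_k$ for both paths. The departures on the middle edges $\{u_i,z\}$ and $\{z,w_i\}$ are sandwiched between $a_k$ and $b_k$, so they equal them as well. Hence one path uses time~$2$ on all its edges and the other uses time~$3$; together they take exactly $\set[2]{\tau-1}$.

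\emph{The paths $P(t_{i,j}',f_{i,j}')$.} By \cref{obs:nphard:trees:blockingpath:A}, $P(t_{i,j},t_{i,j}'')$ departs on $\{t_{i,j},t_{i,j}'\}$ at time~$1$ in the opposite direction, so condition~(ii) forces the first departure of $P(t_{i,j}',f_{i,j}')$ to be at least~$2$. Its last edge $\{f_{i,j},f_{i,j}'\}$ has deadline~$2$. By monotonicity every departure on this path, including those on $\{t_{i,j},v_{i,j}\}$, $\{v_{i,j},f_{i,j}\}$ and the deadline-$2$ edges, lies in $[2,2]$. It therefore takes exactly $\set[2]{\tau-2}=\{2\}$.

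\emph{The third claim.} For $\tau=4$ the set $\set[3]{\tau-2}$ is empty. \cref{constr:nphard:trees} contains no vertices $v_{i,1}',v_{i,2}'$ and hence no paths $P_k(v_{i,1}',v_{i,2}')$, so the claim holds trivially. I would state this explicitly rather than omit it. I would also note that any path of this shape, if it were added between the two $v_{i,j}$-branches through $z$ with deadlines analogous to those on the $t$/$f$-branches, would be squeezed by the same two arguments to an empty window, which is consistent with the claim. The main obstacle is not this vacuous case but the first claim, where the pairing argument ($a_k=b_k$) is needed to pin down both paths simultaneously. I would handle it exactly by the monotonicity exclusion described above.
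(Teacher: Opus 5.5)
Your proof is correct and is the paper's argument made explicit: the paper only says the claim follows ``by construction'' from the preceding observation that $P(u_i,u_i'')$, $P(w_i,w_i'')$, $P(t_{i,j},t_{i,j}'')$ and $P(f_{i,j},f_{i,j}'')$ use time step~$1$, and you combine this with conditions~(i) and~(ii), the deadlines $3$ and $2$, and monotonicity under zero traversal times to obtain $\{2,3\}$ and $\{2\}$. Your handling of the third claim is also right, since the construction has no vertices $v_{i,1}',v_{i,2}'$ and $\{3,\ldots,\tau-2\}=\emptyset$ for $\tau=4$.
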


By construction,
we immediately get the following.

\begin{observation}
\label{obs:nphard:trees:blockingpath}
Let~$I'$ be a \yes-instance.
For every solution
and
every~$i\in\set{N}$
we have:
\begin{inparaenum}[(i)]
\item on their edges,
$P(u_i,u_i'')$,
$P(w_i,w_i'')$,
take exactly the time step~$1$,
and $P_1(u_i',w_i')$ and~$P_2(u_i',w_i')$ take exactly the time steps~$2$ and~$3$.\label{obs:nphard:trees:blockingpath:A}
\item for every~$j\in\{1,2\}$,
on their edges,
$P(t_{i,j},t_{i,j}'')$,
and $P(f_{i,j},f_{i,j}'')$,
take exactly the time step~$1$,
and $P(t_{i,j}',f_{i,j}')$
take exactly the time step~$2$.\label{obs:nphard:trees:blockingpath:B}
\end{inparaenum}
\end{observation}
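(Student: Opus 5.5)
The plan is to read off forced departure times from the deadlines and then use the edge-disjointness conditions (i) and (ii) to exclude time step~$1$ for the paths leaving the primed vertices. Since every traversal time is $0$, adequacy reduces to two facts: the departure times along a path are non-decreasing, and each departure time is at most the deadline of its edge. Since every departure time is at least~$1$, any path that uses an edge of deadline~$1$ must depart on that edge, and on every earlier edge, at exactly time step~$1$.

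For (i), consider first $P(u_i,u_i'')=(u_i,u_i',u_i'')$. Its last edge $\{u_i',u_i''\}$ has deadline~$1$, so by monotonicity both of its edges are taken at time step~$1$; the same argument applies to $P(w_i,w_i'')$. Next consider $P_k(u_i',w_i')=(u_i',u_i,z,w_i,w_i')$ for $k\in\{1,2\}$. Its first edge $\{u_i',u_i\}$ is also traversed, in the opposite direction, by $P(u_i,u_i'')$ at time step~$1$. Condition~(ii) of temporal edge-disjointness with $\max\{1,0\}=1$ therefore forbids $P_k$ from departing on it at time~$1$, so $P_k$ departs there at time~$2$ or~$3$. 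Its last edge $\{w_i,w_i'\}$ has deadline~$3$, and by monotonicity all of its departure times therefore lie in $\{2,3\}$. Both $P_1$ and $P_2$ use $(u_i',u_i)$ in the same direction, so condition~(i) forces their departure times on it to differ: one departs at~$2$ and the other at~$3$. The path departing at~$3$ is then fixed to time~$3$ on all its edges. On the last edge, condition~(i) again requires the two paths to differ, so the other path must take time~$2$ there, and by monotonicity it takes time~$2$ on all its edges. Hence the two paths take exactly the time steps $2$ and~$3$, as claimed.

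For (ii), the same deadline-$1$ argument shows that $P(t_{i,j},t_{i,j}'')$ and $P(f_{i,j},f_{i,j}'')$ take time step~$1$ on both of their edges. The path $P(t_{i,j}',f_{i,j}')=(t_{i,j}',t_{i,j},v_{i,j},f_{i,j},f_{i,j}')$ traverses $\{t_{i,j}',t_{i,j}\}$ opposite to $P(t_{i,j},t_{i,j}'')$, which departs there at time~$1$, so condition~(ii) forces it to depart at time~$\ge 2$. Its last edge $\{f_{i,j},f_{i,j}'\}$ has deadline~$2$, so monotonicity pins every departure time to exactly~$2$.

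The only delicate step is in (i): the two parallel blockers must be separated by applying condition~(i) on both the first and the last edge, and this is what forces the "one path at~$2$, one path at~$3$" pattern. Everything else follows mechanically from monotonicity and the deadlines.
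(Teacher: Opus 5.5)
Your proof is correct and follows the argument the paper leaves implicit (the paper states this observation as immediate from the construction). The deadline-$1$ edges pin $P(u_i,u_i'')$, $P(w_i,w_i'')$, $P(t_{i,j},t_{i,j}'')$ and $P(f_{i,j},f_{i,j}'')$ to time~$1$; condition~(ii) on the shared first edge pushes the inner blockers to time~$\ge 2$; and the deadlines $3$ resp.\ $2$ on their last edges, together with condition~(i) on the first and last edges, force the patterns $\{2,3\}$ resp.\ $\{2\}$, which you simply spell out in full.
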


Assume~$I'$ to be a \yes-instance and consider an arbitrary solution.
For each~$i\in\set{N}$,
note that by construction,
only one of the two validity paths starting at~$t_{i,j}$ and~$f_{i,j}$,
$j\in\{1,2\}$,
can be located at~$z$ at time step 1.
There must be exactly two validity paths with sinks~$u_i$ and~$w_i$ located at~$z$ at time step 1
since,
by~\partref{obs:nphard:trees:blockingpath}{A},
each of the edges~$\{z,u_i\}$ and~$\{z,w_i\}$
is only available at the two time steps 1 and~$4$
for exactly two out of four validity paths for~$i$ containing it.
We thus have the following.

\begin{lemma}%
\label{obs:nphard:trees:validity}
Let~$I'$ be a \yes-instance.
For every solution and every~$i\in\set{N}$,
on time step~$1$,
edges~$\{z,u_i\}$,
$\{z,w_i\}$,~$\{z,v_{i,1}\}$,
and $\{z,v_{i,2}\}$,
are taken either by
the two validity paths~$P(t_{i,1},u_i)$ and $P(t_{i,2},w_i)$
or by
the two validity paths~$P(f_{i,1},w_i)$ and $P(f_{i,2},u_i)$.
\end{lemma}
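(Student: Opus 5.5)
We argue about one variable gadget $V_i$ at a time, by counting which time steps remain usable on the relevant edges. Fix a solution and $i\in\set{N}$. All traversal times are zero, so the departure time of a path on an edge is exactly the time step it occupies that edge. Temporal edge-disjointness of undirected edges then forbids any two paths from using the same edge at the same time step, in either direction.

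First we pin down the usable time steps. By \partref{obs:nphard:trees:blockingpath}{A}, the edge $\{z,u_i\}$ is used at time step $1$ by $P(u_i,u_i'')$. Hence every path that leaves $u_i$ to the inside, or ends at $u_i$, can use $\{z,u_i\}$ only at time steps $2,3,4$. The blocker paths $P_1(u_i',w_i')$ and $P_2(u_i',w_i')$ traverse $u_i'\to u_i\to z\to w_i\to w_i'$ with zero traversal times. By \partref{obs:nphard:trees:blockingpath}{A} they occupy $\{z,u_i\}$ and $\{z,w_i\}$ at time steps $2$ and $3$. So each of $\{z,u_i\}$ and $\{z,w_i\}$ has only time steps $1$ and $4$ free for validity paths, modulo the blocker at step $1$ on $\{u_i,u_i''\}$, which is a different edge.

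\emph{Correction.} Re-reading the construction, $P(u_i,u_i'')$ uses the edges $\{u_i,u_i'\}$ and $\{u_i',u_i''\}$, not $\{z,u_i\}$. So the free steps on $\{z,u_i\}$ are exactly $1$ and $4$. Each of $\{z,u_i\}$ and $\{z,w_i\}$ is contained in exactly two validity paths ($P(t_{i,1},u_i),P(f_{i,2},u_i)$ and $P(t_{i,2},w_i),P(f_{i,1},w_i)$). Each such edge therefore carries exactly one of its two validity paths at step $1$ and the other at step $4$.

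Second, we restrict the $v_{i,j}$ side. By \partref{obs:nphard:trees:blockingpath}{B}, the paths $P(t_{i,j},t_{i,j}'')$ and $P(f_{i,j},f_{i,j}'')$ occupy time step $1$, and $P(t_{i,j}',f_{i,j}')$ occupies time step $2$ on $\{t_{i,j},v_{i,j}\}$ and $\{v_{i,j},f_{i,j}\}$. The two validity paths starting at $t_{i,j}$ and at $f_{i,j}$ both use $\{z,v_{i,j}\}$. By edge-disjointness on $\{z,v_{i,j}\}$, at most one of them uses time step $1$. Since traversal times are zero, a path crossing $z$ at time step $1$ uses both of its edges at $z$ at time step $1$.

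Now combine the two facts. Exactly one validity path per sink edge is at $z$ at step $1$, so exactly two validity paths of gadget $i$ are at $z$ at step $1$. At most one comes from each $v_{i,j}$, so exactly one comes from $v_{i,1}$ and one from $v_{i,2}$, and the two have distinct sinks. The candidate pairs are $\{P(t_{i,1},u_i),P(t_{i,2},w_i)\}$, $\{P(f_{i,1},w_i),P(f_{i,2},u_i)\}$, $\{P(t_{i,1},u_i),P(f_{i,2},u_i)\}$ and $\{P(f_{i,1},w_i),P(t_{i,2},w_i)\}$. The last two share a sink edge, which the first fact forbids. This leaves exactly the two options of the statement. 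Those two paths then occupy all four edges $\{z,u_i\},\{z,w_i\},\{z,v_{i,1}\},\{z,v_{i,2}\}$ at step $1$, and the other two validity paths must use step $4$ there.

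The main obstacle is verifying carefully, against the construction (the observations' indices are somewhat garbled), that the blockers $P_1(u_i',w_i')$ and $P_2(u_i',w_i')$ really pass through $z$ at times $2,3$. We also need that no other path, for instance a verifier path, uses $\{z,u_i\}$ or $\{z,w_i\}$. This holds because verifier paths end at the clause vertices $c_j$.
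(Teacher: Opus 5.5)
Your proof is correct and is essentially the paper's argument. The blockers $P_1(u_i',w_i')$ and $P_2(u_i',w_i')$ leave only steps $1$ and $4$ on $\{z,u_i\}$ and $\{z,w_i\}$, which forces one validity path per sink edge through $z$ at step $1$. Zero traversal times together with edge-disjointness on $\{z,v_{i,j}\}$ then exclude the two mixed pairs; you enumerate by $v_{i,j}$ where the paper enumerates by sink, which is the same case check.

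In a write-up, delete the retracted first-paragraph claim that $P(u_i,u_i'')$ uses $\{z,u_i\}$. Also write ``departs $z$ at step~$1$'' instead of ``is at $z$ at step~$1$'', since being located at $z$ also covers waiting there.
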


\begin{proof}
By~\partref{obs:nphard:trees:blockingpath}{A},
we know that the edges~$\{z,u_i\}$ and $\{z,w_i\}$
are only available at time steps 1 and~$4$.
Thus,
if at time step~1 no path is assigned on one of these edges,
then there is only one time step left for two paths that need to traverse this edge,
yielding a contradiction.
Hence,
at time step 1,
the edges~$\{z,u_i\}$,
$\{z,w_i\}$,
and (thus)
$\{z,v_{i,j}\}$, $j\in\{1,2\}$,
are taken by validity paths.

We claim that for each~$i\in\set{N}$,
this can be established only by the pairs $P(t_{i,1},u_i)$ and $P(t_{i,2},w_i)$ on the one hand,
and $P(f_{i,1},w_i)$ and $P(f_{i,2},u_i)$ on the other hand.
Note that the pairs with same destination $P(t_{i,1},u_i)$  and $P(f_{i,2},u_i)$
or $P(t_{i,2},w_i)$ and $P(f_{i,1},w_i)$ are not possible.
Now suppose that~$P(t_{i,1},u_i)$ and $P(f_{i,1},w_i)$
are on the edges $\{z,u_i\}$ and~$\{z,w_i\}$ at time step~1.
Then,
both must be on edge~$\{z,v_{i,1}\}$
at time step~1,
a contradiction to~$I'$ being a \yes-instance.
The argument works analogously for $P(t_{i,2},w_i)$ and $P(f_{i,2},u_i)$.
Note that
if one pair takes time step~1,
the other pair takes time step~$4$.
\end{proof}

Combining \cref{obs:nphard:trees:validity} with \partref{obs:nphard:trees:blockingpath}{B},
we now get that if there are verifier paths at~$z$ at time step 2,
then they departed at time step~1 either from~$t_{i,1}$ and~$t_{i,2}$
or from~$f_{i,1}$ and~$f_{i,2}$.

\begin{lemma}%
\label{obs:nphard:trees:verif}
Let~$I'$ be a \yes-instance.
For every solution and every~$i\in\set{N}$,
there are at most two verifier paths for~$i$ located at~$z$ at time step 2,
and
if there are exactly two,
then they
start either from~$t_{i,1}$ and~$t_{i,2}$
or from~$f_{i,1}$ and~$f_{i,2}$.
\end{lemma}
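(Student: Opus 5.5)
The plan is to see how a verifier path of variable~$i$ could possibly be at~$z$ at time step~2, and to combine this with \cref{obs:nphard:trees:validity} and \partref{obs:nphard:trees:blockingpath}{B}.

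\emph{Verifier paths at~$z$ at step~2 left their source at step~1.} All traversal times are zero, and the verifier paths of~$i$ start at some $t_{i,j}$ or $f_{i,j}$ and travel via $v_{i,j}$ to~$z$ and then to some~$c_a$. By \partref{obs:nphard:trees:blockingpath}{B}, the blocker $P(t_{i,j}',f_{i,j}')$ occupies the edges $\{t_{i,j},v_{i,j}\}$ and $\{v_{i,j},f_{i,j}\}$ at time step~2; these are the only edges it could use at time~2 while crossing $v_{i,j}$. Hence a verifier path that starts at $t_{i,j}$ (or $f_{i,j}$) cannot depart on that edge at step~2. If it is located at~$z$ at step~2, it must therefore have departed from its source at step~1. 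It then either traversed $\{v_{i,j},z\}$ at step~1 and waited at~$z$, or traversed that edge at step~2.

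\emph{Which departures at step~1 are possible.} By \cref{obs:nphard:trees:validity}, at step~1 the edges $\{z,v_{i,1}\}$ and $\{z,v_{i,2}\}$ are used by a validity pair: either $P(t_{i,1},u_i),P(t_{i,2},w_i)$, which also use $\{t_{i,1},v_{i,1}\}$ and $\{t_{i,2},v_{i,2}\}$ at step~1, or the $f$-pair. Assume the $t$-pair. Then no verifier path from $t_{i,1}$ or $t_{i,2}$ can depart at step~1, by temporal edge-disjointness on the edge to~$v_{i,j}$. So any verifier path of~$i$ at~$z$ at step~2 must come from $f_{i,1}$ or $f_{i,2}$. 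Each of these sources has exactly one verifier path, so there are at most two such paths, and if there are exactly two they start from $f_{i,1}$ and $f_{i,2}$. The $f$-pair case is symmetric and forces $t_{i,1}$ and $t_{i,2}$.

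\emph{Main obstacle.} The delicate step is the first one: excluding a verifier departure from $t_{i,j}$ or $f_{i,j}$ at step~2 (and later). This needs the exact blocker timings of \partref{obs:nphard:trees:blockingpath}{B} and the rule that a path at~$z$ at step~2 cannot have reached~$v_{i,j}$ later than step~2. I would carefully check that $P(t_{i,j}',f_{i,j}')$ uses both edges incident to $v_{i,j}$ on the $t$- and $f$-sides at step~2. If instead it only blocks one side, I would fall back on the deadlines of $\{t_{i,j},t'_{i,j}\}$ together with the observation that a departure at step~2 would collide with that blocker on the shared edge.
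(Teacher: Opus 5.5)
Your proposal is correct and follows essentially the same route as the paper. The blocker $P(t_{i,j}',f_{i,j}')$ is forced onto all of its edges at step~2, including $\{t_{i,j},v_{i,j}\}$ and $\{v_{i,j},f_{i,j}\}$, so every verifier path located at~$z$ at step~2 must leave its source at step~1; \cref{obs:nphard:trees:validity} then leaves only the complementary pair of sources free at step~1. Your closing hedge is unnecessary: that blocker's unique tree path is $t_{i,j}',t_{i,j},v_{i,j},f_{i,j},f_{i,j}'$, and temporal edge-disjointness excludes simultaneous departures in both the same and the opposite direction (because $\max\{1,\trt(e)\}=1$ here).
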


\begin{proof}
By~\partref{obs:nphard:trees:blockingpath}{B},
we know that the paths $P(t_{i,1}',f_{i,1}')$ and $P(t_{i,2}',f_{i,2}')$
take exactly the time step~$2$
on their respective edges,
and hence,
every verifier path located at~$z$ at time step at most 2
must have been started at time step~1 from their source~$t_{i,j}$ or~$f_{i,j}$
to~$v_{i,j}$, $j\in\{1,2\}$.
Due to~\cref{obs:nphard:trees:verif},
we know that at least two out of the four edges connecting~$v_{i,j}$ with~$t_{i,j}$ and~$f_{i,j}$,
$j\in\{1,2\}$,
is taken by at least two validity paths.
Thus,
there are at most two verifier paths starting at time step~1
to~$v_{i,j}$,
$j\in\{1,2\}$.
Again by~\cref{obs:nphard:trees:validity},
on time step 1 the edges connecting~$z$ with~$v_{i,j}$,
$j\in\{1,2\}$,
are taken by two validity paths,
which are either
$P(t_{i,1},u_i)$ and $P(t_{i,2},w_i)$,
or $P(f_{i,1},w_i)$ and $P(f_{i,2},u_i)$.
Thus, there is no verifier path located at~$z$ at time step~1.

Now assume that there are two verifier paths
started at time step~1 from their source~$t_{i,j}$ or~$f_{i,j}$
to~$v_{i,j}$, $j\in\{1,2\}$,
and located at~$z$ at time step~2.
If the edges connecting~$z$ with~$v_{i,j}$,
$j\in\{1,2\}$,
are taken at time step~1 by the two validity paths~$P(t_{i,1},u_i)$ and $P(t_{i,2},w_i)$,
then the two verifier paths must have started from~$f_{i,1}$ and~$f_{i,2}$,
since otherwise two paths are assigned to the edge~$\{t_{i,j},v_{i,j}\}$ at the same time for some~$j\in\{1,2\}$,
yielding a contradiction to~$I'$ being a \yes-instance.
Similarly,
if the edges connecting~$z$ with~$v_{i,j}$,
$j\in\{1,2\}$,
are taken at time step~1 by the two validity paths~$P(f_{i,1},w_i)$ and $P(f_{i,2},u_i)$,
then the two verifier paths must have started from~$t_{i,1}$ and~$t_{i,2}$,
since otherwise two paths are assigned to the edge~$\{f_{i,j},v_{i,j}\}$ at the same time for some~$j\in\{1,2\}$,
yielding a contradiction to~$I'$ being a \yes-instance.
\end{proof}

Intuitively,
all verifier paths arriving at~$z$ at time step~$2$ give a satisfying truth assignment because of the following.
By \cref{obs:nphard:trees:verif},
every two validity paths at~$z$ at time step~$2$
cannot correspond to different truth assignments of the same variable.
By \cref{obs:nphard:trees:validity,obs:nphard:trees:blockingpath},
every verifier path can arrive at~$z$ only at time steps 2, 3, and~4.
Each clause vertex is the sink of three verifier paths,
and thus,
one of them must be at~$z$ at time step~2.

\begin{proof}[Proof of~\cref{thm:nphard:trees}]
	Let instance~$I'=(\DG,\Pset)$ of \simevacAcr{}
	with $\DG=(V,E,\emptyset,c,\trt,d,\tau)$ and~$\tau=4$
	be
	obtained from instance~$I=(X,\phi)$ of \prob{(2,2)-3SAT}
	with~$X=\{x_1,\dots,x_N\}$ and~$\phi=\bigland_{j=1}^M C_j$
	using~\cref{constr:nphard:trees} in time polynomial in~$|I|$.
	We claim that
	$I$ is a \yes-instance
	if and only if~$I'$ is a \yes-instance.

\RD{}
Let~$\truthass$ be a truth assignment satisfying~$\phi$.
Construct valid temporalization~$\TER$ as follows.
Route all blocker paths in accordance with~\cref{obs:nphard:trees:blockingpath}.

For each~$i\in\set{N}$,
if~$\truthass(x_i)=\true$,
then let~$S_i^{\rm val}\ceq \{f_{i,1},f_{i,2}\}$ and $S_i^{\rm ver}\ceq \{t_{i,1},t_{i,2}\}$,
and if $\truthass(x_i)=\false$,
then let $S_i^{\rm val}\ceq \{t_{i,1},t_{i,2}\}$ and $S_i^{\rm ver}\ceq \{f_{i,1},f_{i,2}\}$.
Move the validation paths starting in~$S_i^{\rm val}$
over~$z$ to~$u_i$ and~$w_i$,
all on time step 1;
Move the verifier paths starting on~$S_i^{\rm ver}$
on time step 1 to~$v_{i,1}$ and~$v_{i,2}$,
then on time step 2 to~$z$,
and finally
each to its final destinations at time step~$2+q_i$,
where~$q_i$ is the number of verifier paths for all~$i'<i$ located at~$z$ at time step~2 with the same destination.
By this,
for each~$j\in\set{M}$,
at least one verifier path with sink~$c_j$ finished at time step~2,
since each clause is satisfied by at least one variable's assignment.
At time step~$3$,
move the remaining verifier paths to~$z$ and then latest on time step~$4$ to their destination
(any path needs to wait for at most one time step).
At time step~$4$,
move the remaining validation paths to their destinations~$u_i$ and~$w_i$.

\LD{}
Let~$\TER$ be a valid temporalization.
Construct~$\truthass\colon X\to \{\true,\false\}$ as follows.
For each~$i\in\set{N}$,
set
$\truthass(x_i) = \true$
if there is no verifier path for~$i$ at~$z$ at time step~2
or there is a verifier path at~$z$ at time step~2 starting from~$t_{i,j}$,
$j\in\{1,2\}$.
Set
$\truthass(x_i) = \false$
if there is a verifier path at~$z$ at time step~2 starting from~$f_{i,j}$,
$j\in\{1,2\}$.
Note that~$\truthass$ is well-defined due to~\cref{obs:nphard:trees:verif}.
We claim that~$\truthass$ is a satisfying truth assignment.
Suppose towards a contradiction
that there is clause~$C_j$ not satisfied by~$\truthass$.
Then none of the three verifier paths ending in~$c_j$ are located at~$z$ at time step~2.
Thus,
each of the three verifier paths arrives
at~$z$ earliest at time step~$3$
(recall that no verifier path is at~$z$ at time step 1).
Hence,
only two time steps remain
for the three paths to proceed to~$c_j$,
a contradiction.
\end{proof}

We leave open whether \simevacAcr{} is still \NP-hard even on uncapacitated,
\exogenous{}
decaying trees with lifetime 2 and 3.

\section{Integer Linear Programming}
\label{sec:ilp}

\newcommand{\gap}{\gamma}
\newcommand{\ilpcore}{\eqref{ilp:vars}, \eqref{ilp:monoton}}
\newcommand{\capcoA}{\eqref{ilp:capcoA:vars}--\eqref{ilp:capcoA:ysum}}%
\newcommand{\capcoB}{\eqref{ilp:capcoB:vars}, \eqref{ilp:capcoB}, \eqref{ilp:capcoB:sum}}
\newcommand{\RelB}{\overleftarrow{\Rel}}
\newcommand{\Psetin}{\Pset_{\rm in}}

We give an integer linear program (ILP) for~\simevacAcr{}.
Let
\begin{align}
	\forall P\in\Pset, (v,w)\in\Rel(P): && x^P_{v,w} &\in\set{\tau} \label{ilp:vars}
\end{align}
be the time when~$P$ leaves~$v$ towards~$w$.
We first ensure adequacy.
We write $\theta(v,w)$ short for~$\theta((v,w))$ and $\theta(\{v,w\})$
(same for~$d$).
\begin{align}
 \forall P\in\Pset, (u,v),(v,w)\in\Rel(P): &&  x^P_{u,v} + \trt(u,v) &\leq x^P_{v,w} \label{ilp:monoton}
 \\
 \forall P\in\Pset, (v,w)\in\Rel(P): &&  x^P_{v,w} + \trt(v,w) &\leq d(v,w) \label{ilp:deadline}
\end{align}
Next we ensure that any two paths are temporally edge-disjoint.
Let~$\RelB(P)\ceq \{(w,v)\mid (v,w)\in\Rel(P)\}$.
\begin{align}
 \begin{aligned}
 \forall P,P'\in\Pset, (v,w)\in \Rel(P)\cap\Rel(P')&: |x^P_{v,w} - x^{P'}_{v,w}| \geq 1 %
 \\
 \forall P,P'\in\Pset, \{v,w\}\in E \text{~s.t.~}\qquad\qquad\,\,\,\, & \\
 (v,w)\in \Rel(P) \cap \RelB(P'): |x^P_{v,w} - x^{P'}_{w,v}&| \geq \max\{1,\trt(\{v,w\})\} .
 \label{ilp:ted}
 \end{aligned}
\end{align}
For the vertex-capacity constraints,
we follow the idea of a ``sweeping line'' to find the largest clique in an interval graph.
Observe that the time steps at which a path is located at some vertex~$v$
forms an interval.
We introduce the following variables.
\begin{align}
 && \forall v\in V, P,Q\in\Pset(v): &&  \alpha^{P,Q}_{v}, \beta^{P,Q}_{v}, \gamma^{P,Q}_{v} &\in\{0,1\} \label{ilp:capcoB:vars}
\end{align}
We define
the arrival and departure times
for a path~$P$ at its vertex~$v$
with predecessor~$u$ and successor~$w$
(if existent):
$t_{\rm arr}^P(v) = x^P_{u,v} + \trt(u,v)$ if $v\neq \src(P)$,
and $t_{\rm arr}^P(v) = x^P_{v,w}$ otherwise;
$t_{\rm dep}^P(v) = x^P_{v,u}$ if $v\neq \snk(P)$,
and~$t_{\rm dep}^P(v) = x^P_{u,v} + \trt(u,v)$ otherwise.
We ensure next that~$\gamma_v^{P,Q}=1$ if
$Q$ is located at~$v$ when~$P$ arrives at~$v$,
i.e.,
when $t_{\rm arr}^P(v)\in[t_{\rm arr}^Q(v),t_{\rm dep}^Q(v)]$.
Let
$M$ be a sufficiently large number.
\begin{align}
 \begin{aligned}
 \forall v\in V, P,Q\in\Pset(v):\
 t_{\rm arr}^P(v) &\leq t_{\rm arr}^Q(v) - 1 + M\cdot (1 - \alpha^{P,Q}_{v})
 \\
 t_{\rm arr}^P(v) &\geq t_{\rm dep}^Q(v) + 1 - M\cdot (1 - \beta^{P,Q}_{v})
 \\
 1 &= \gamma^{P,Q}_{v} + \alpha^{P,Q}_{v} + \beta^{P,Q}_{v}
 \label{ilp:capcoB}
 \end{aligned}
\end{align}
Note that if~$\gamma^{P,Q}_{v}=\gamma^{P,Q'}_{v}=1$,
then also~$Q$ and~$Q'$ must be intersecting (at least in~$t_{\rm arr}^P(v)$).
Observe that the arrival times are sufficient:
For every vertex~$v$ consider any two paths arriving at~$v$ at time steps~$t_1$ and~$t_2$ such that no other path arrives between~$t_1$ and~$t_2$.
Then the number of paths located at vertex~$v$ is not increasing in $\{t_1,\dots,t_2-1\}$,
since other paths can only depart at these time steps.
Hence,
to ensure that the capacity constraints are respected,
we only need to count all the overlaps for each path via~$\gamma$.
\begin{align}
 && \forall v\in V, P\in\Pset(v):  &&  \sum\nolimits_{Q\in \Pset(v)}\gamma^{P,Q}_{v} &\leq c(v) \label{ilp:capcoB:sum}
\end{align}
Our final ILP for~\simevacAcr{} is composed of
\eqref{ilp:vars}--\eqref{ilp:capcoB:sum}
and provides a valid temporalization for every feasible instance.

\paragraph*{Optimization goal.}

When an extreme event like flooding is predicted to happen,
practically we wish to know
the minimum in-advance starting time~$d^*$
to obtain a feasible evacuation schedule.
To this end,
we only need to adjust~\eqref{ilp:deadline} (and possibly~$M$) in our ILP.
\begin{align}
 \begin{aligned}
 \min \, d^* \text{~s.t.}  && \text{\ilpcore, \eqref{ilp:ted}}&\text{--\eqref{ilp:capcoB:sum}} \text{, and}
 \\
 \forall P\in\Pset, (v,w)\in\Rel(P): &&  x^P_{v,w} + \trt(v,w) &\leq d^* + d(v,w) \label{ilp:dstar}
 \end{aligned}
\end{align}

\begin{remark}
	The optimal~$d^*$ is upper bounded by the number of paths
	times the maximum independent travel time of any path,
	which in turn is upper bounded my the number of vertices and the maximum travel time of any connection.
	Thus,
	we can compute the minimum solution to \eqref{ilp:dstar}
	by polynomially many calls of~\simevacAcr{}.
\end{remark}

\section{Experiments}
\label{sec:experiments}

We implemented the ILP~\eqref{ilp:dstar} using Gurobi(py) 12.0. in two ways,
where we implemented~\eqref{ilp:capcoB}
as outlined,
referred to as \bigM{},
and using Gurobi indicator constraints instead,
referred to as \indc{}.
We set~$M$ for~\bigM{} as follows.
Greedily color paths such that
every two paths of the same color have disjoint vertex sets.
Let~$\Pset_1,\dots,\Pset_{\chi}$ be the resulting partition of the paths with~$\chi$ colors.
Set~$M=\sum_{x\in \set{\chi}} \max_{P\in \Pset_x} (1 + \sum_{(u,v)\in \Rel(P)}\trt(u,v))$,
which corresponds to iteratively,
one color after the other,
initiating a traversal without waiting for all paths from the same color starting at the same time
when the latest path from the preceding color finished.
For both implementation we give a warm start regarding~$d^*$:
for \bigM,
we use the ILP relaxation,
and for \indc{},
we take the largest violated deadline
when every path starts at time step 1 without waiting to its sink.
For each instance,
we let both implementations compete in parallel
and only recorded the runtime of the winner.

Our experiments ran on
Intel(R) Xeon(R) Silver 4310 CPU at 2.10GHz (12 cores),
125 GB RAM,
Ubuntu 22.04.3 LTS (x86\_64).

\subsection{Data}

We conducted experiments on artificial decaying paths and stars
and semi-artificial decaying graphs
where the networks are obtained from ten German cities
each located at at least one river.

\paragraph*{Artificial Data.}
For each number~$n\in\{8,12,16\}$ of vertices,
we create decaying stars and decaying paths.
Here,
for every two vertices that are supposed to be adjacent
(e.g., any leaf and the center for stars,
and~$v_i$ and $v_j$ with~$|i-j|=1$ for paths),
we equally likely connect them either with a single arc,
two bidirectional arcs,
or an undirected edge
(a single arc's direction is also equally likely drawn).
The traversal times are picked uniformly at random from the set~$\{5,\dots,20\}$
(corresponding to street segments of roughly \SI{70}{\meter} to \SI{280}{\meter} at \vkmh{50}).
We add~$p=\frc{p}\cdot n$ many paths to
each decaying path and decaying star,
where~$\frc{p}\in\{0.5,0.67,0.83,1\}$ for decaying paths and~$\frc{p}\in\{0.5,1,1.5,2\}$ for decaying stars.\footnote{Throughout, we round to the next integer if required and not specified otherwise.}
Herein, a source is picked uniformly at random,
and then extended.
For decaying paths,
we additionally restrict the paths length~$\ell=\frc{\ell}\cdot n$ with~$\frc{\ell}\in\{0.33,0.44,0.55,0.66\}$.
From the source,
two auxiliary paths start to the left and right,
each of which stops latest when having length~$\ell$.
Then, we only add the path of larger length,
and a fair coin decides when both have the same length.
For decaying stars,
when the source is a leaf,
a fair coin decides whether the end vertex is the center or another leaf picked uniformly at random from all reachable leaves.
For each vertex~$v\in V$,
we set~$c(v)=\max\{1,\lceil\frc{c}\cdot |\Pset(v)|\rceil\}$ with~$\frc{c}\in \{0.1,0.4,0.7,1\}$
(i.e., $\frc{c}=1$ yields an uncapacitated instance).
For each connection~$e$ we set the deadline~$d(e) = \frc{d}\cdot \ddlb(e)$ with~$\frc{d}\in\{0.2,0.47,0.73,1\}$,
where~$\ddlb(e)$ is $\trt(e)$ plus the maximum of the number of paths using connection~$e$
and the latest arrival time of any path using connection~$e$ when each paths is routed independently and earliest without waiting.
We constructed 1920 instances of the form~$I\_n\_\frc{p}\_\frc{c}\_\frc{d}\text{-}x$ for decaying stars
(referred to by \stars{})
and 7680 instances of the form~$I\_n\_\frc{p}\_\frc{c}\_\frc{d}\_\frc{\ell}\text{-}x$ for decaying paths (referred to by \paths{}),
with~$x\in\{0,1,\dots,9\}$
(i.e., 10 instances per constellation).
See \Cref{fig:artifical:violin} for more details on our constructed instances.

\begin{figure*}[t]
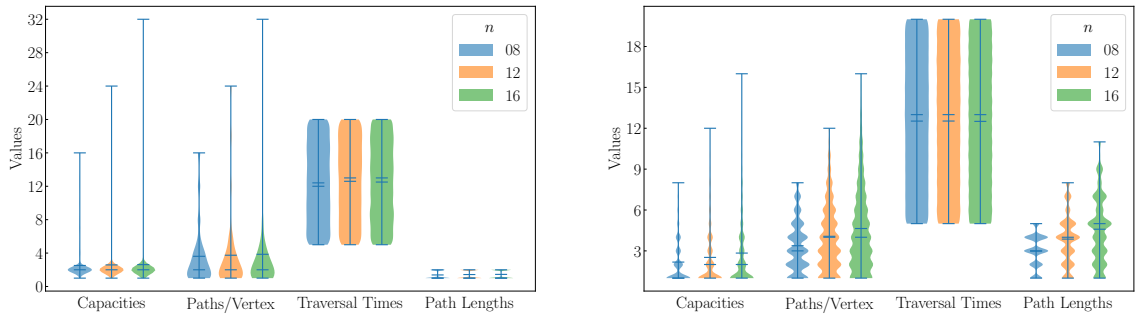

 \centering
 \includegraphics[width=0.475\textwidth]{fig/star_grouped_violin.pdf}
 \hfill
 \includegraphics[width=0.475\textwidth]{fig/path_grouped_violin.pdf}
 \caption{Overview (violin plots) for our artificial instances \stars{} (left) and \paths{} (right). We outlined the distributions of (from left to right) the capacities,
 number of paths per vertex,
 traversal times,
 and lengths of the paths.}
 \label{fig:artifical:violin}
\end{figure*}

\paragraph*{Semi-artificial Data.}
We chose the ten German cities that have,
according to~\cite{GDV2024_HochwasserAdressen},
the highest ratio of addresses under severe threat of flooding
(see~\Cref{fig:osm_german_cities},
\Cref{fig:osm_german_cities_snd} and \Cref{tab:osm:instances}).
\begin{figure}[t]
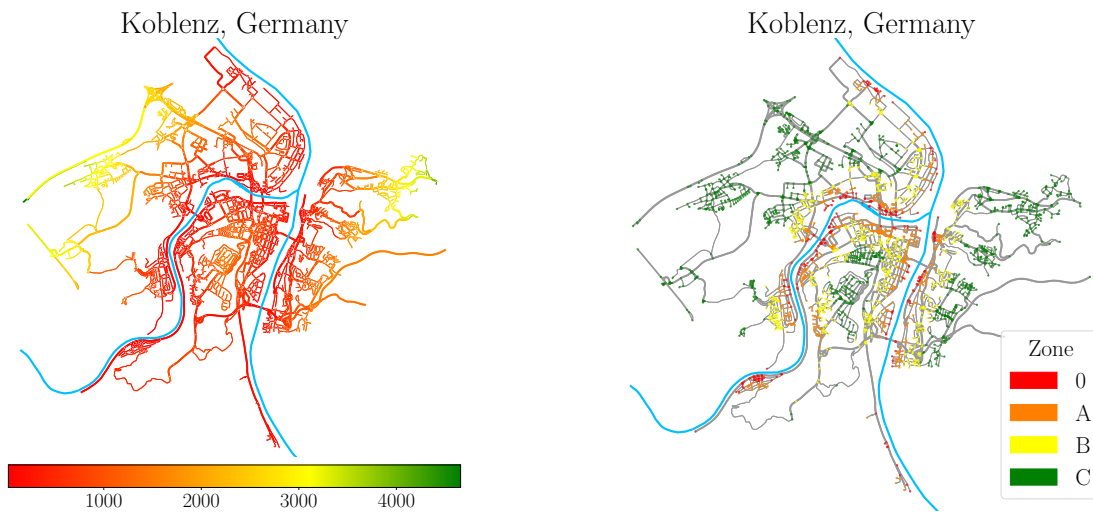

 \centering
 \includegraphics[width=0.45\textwidth]{fig/Koblenz,_Germany-edges.pdf}
 \hfill
 \includegraphics[width=0.45\textwidth]{fig/Koblenz,_Germany-nodes.pdf}
 \caption{Illustration for one \osm{} with waterways depicted blue and
 color-codings for (left) deadlines and (right) zones.}
 \label{fig:osm_german_cities}
\end{figure}%
\begin{figure*}[t]
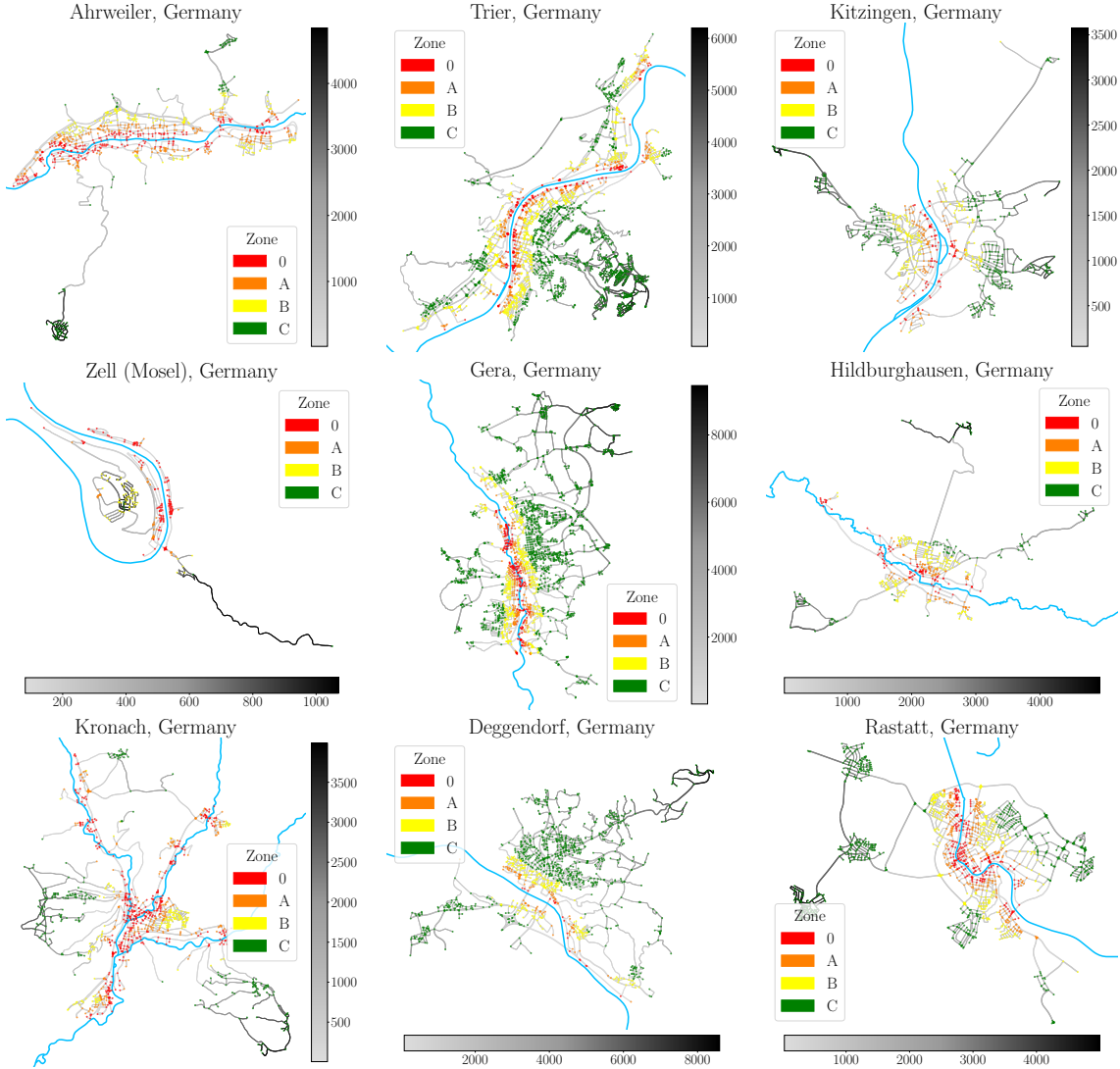

 \centering
 \includegraphics[width=0.32\textwidth]{fig/Ahrweiler,_Germany-all.pdf}
 \hfill
 \includegraphics[width=0.32\textwidth]{fig/Trier,_Germany-all.pdf}
 \hfill
 \includegraphics[width=0.32\textwidth]{fig/Kitzingen,_Germany-all.pdf}
 \hfill
 \includegraphics[width=0.32\textwidth]{fig/Zell__Mosel_,_Germany-all.pdf}
 \hfill
 \includegraphics[width=0.32\textwidth]{fig/Gera,_Germany-all.pdf}
 \hfill
 \includegraphics[width=0.32\textwidth]{fig/Hildburghausen,_Germany-all.pdf}
 \hfill
 \includegraphics[width=0.32\textwidth]{fig/Kronach,_Germany-all.pdf}
 \hfill
 \includegraphics[width=0.32\textwidth]{fig/Deggendorf,_Germany-all.pdf}
 \hfill
 \includegraphics[width=0.32\textwidth]{fig/Rastatt,_Germany-all.pdf}
 \caption{Overview on \osm{} (except for Koblenz, Germany; see \Cref{fig:osm_german_cities})
 where waterways are blue,
 and zones and deadlines are color-coded from red to green and in grayscale,
 respectively.}
 \label{fig:osm_german_cities_snd}
\end{figure*}

From OSM data,
we extracted the street network
and the main river(s) of each city
(assuming these to cause flooding).
For the network,
if two nodes are connected by a single lane traversable both-ways,
we made this connection an undirected edge.
For any connection,
we set its travel time (in seconds) to its length divided by the maximum allowed speed
of \SI{50}{\kilo\meter\per\hour};
Its deadline we set
to the smallest distance of its two endpoints
to the river(s)
divided by a flood speed of \SI{1}{\meter\per\second}
(i.e.,
the farther, the later it ceases).
For each vertex,
we set the capacity to the number of incident connections
(each serves as a waiting spot).
Moreover,
we assigned each vertex~$v$ to a zone,
determined by~$v$'s closest distance~$\dist(v)$ to the river(s):
vertex~$v$ is in zone 0 if~$\dist(v)\leq{}$\SI{250}{\meter},
in zone A if \SI{250}{\meter}${}<\dist(v)\leq{}$\SI{500}{\meter},
in zone B if \SI{500}{\meter}${}<\dist(v)\leq{}$\SI{1000}{\meter},
and in zone C if $\dist(v)>{}$\SI{1000}{\meter}.
Zone 0 will always be evacuated.
Let~$\zeta = (0,A,B,C)$.
The path set is constructed in the following way,
where $\frc{p}\in[0,1]$ determines the number of paths:
For a zone $\zeta_i\in \{A,B,C\}$ with~$i\geq 2$,
for~$\frc{p}\cdot |\bigcup_{j=1}^{i-1} \zeta_j|$
times
we pick a random source~$v$ from zones $\bigcup_{j=1}^{i-1} \zeta_j$,
then pick a random sink~$w$ from the 20\% of the vertices closest to~$v$ in zones $\bigcup_{j=i}^{4} \zeta_j$,
and add a shortest $(v,w)$-path.
We constructed 600 instances of the form~$I$\_$i$\_$\frc{p}$\_$z$-$x$
(referred to by~\osm),
where $x\in\{0,1,\dots,9\}$ and $i\in\{0,1,\dots,9\}$ corresponds to the index of the German city at hand,
for each tuple~$(\frc{p},z)$ in~$\{(0.1,X)\mid X\in\{A,B,C\}\}\cup \{(0.2,Y)\mid Y\in\{A,B\}\} \cup\{(0.3,A)\}$.

\subsection{Runtime Comparisons}

We analysed the mean over all constellations,
where for each constellation we either took the median
or the mean over all~$x\in\{0,1,\dots,9\}$.
For our results for \paths{} and for \stars{},
see \Cref{fig:exp:paths:medianandmean} and \Cref{fig:exp:starsandpaths:mean},
respectively.
\begin{figure}[t]
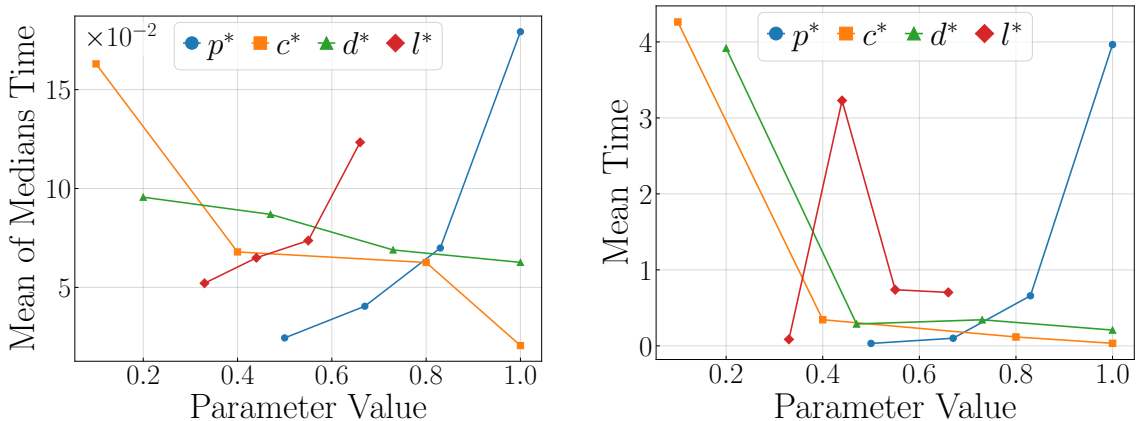

 \includegraphics[width=0.48\textwidth]{fig/mean_of_medians_times.pdf}
 \hfill
 \includegraphics[width=0.48\textwidth]{fig/mean_times.pdf}
 \caption{Mean of medians (left) and mean (right) over all runtimes [sec] for solving \paths{}
 when a parameter is fixed.}
 \label{fig:exp:paths:medianandmean}
\end{figure}
\begin{figure*}
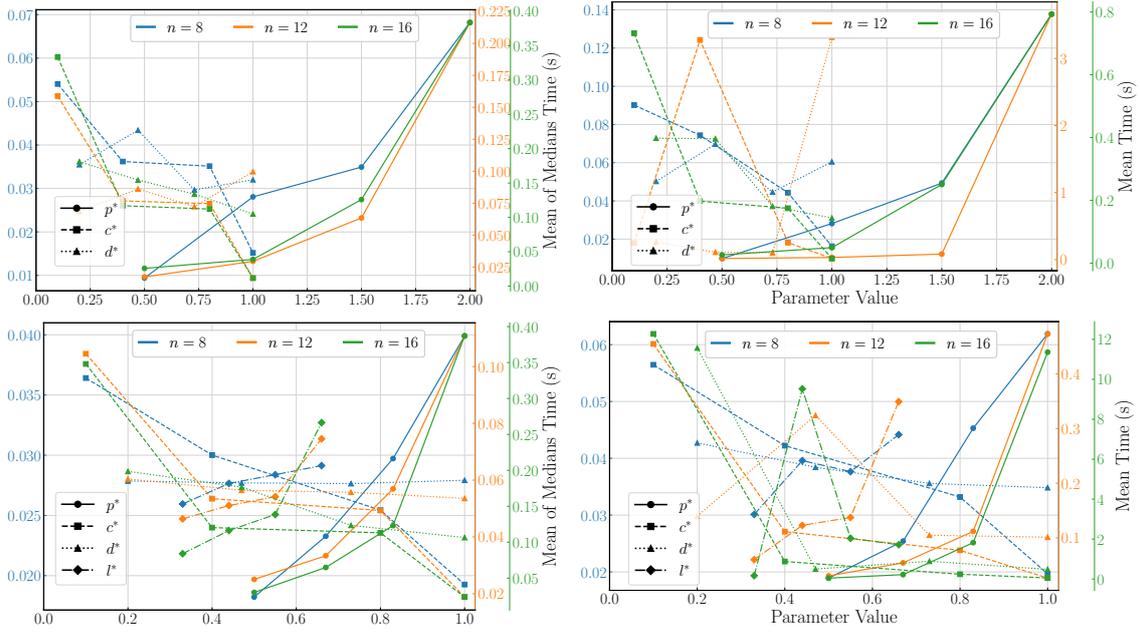

 \includegraphics[width=0.495\textwidth]{fig/star_mean_of_medians_times_by_n_pairs.pdf}
 \hfill
 \includegraphics[width=0.495\textwidth]{fig/star_mean_times_by_n_pairs.pdf}
 \hfill
 \includegraphics[width=0.495\textwidth]{fig/path_mean_of_medians_times_by_n_pairs.pdf}
 \includegraphics[width=0.495\textwidth]{fig/path_mean_times_by_n_pairs.pdf}
 \caption{Mean of medians (left) and mean (right) over all runtimes
 when a parameter is fixed,
 separated for each~$n$,
 for (top) \stars{} and (bottom) \paths.}
 \label{fig:exp:starsandpaths:mean}
\end{figure*}
Considering the median,
our results indicate that runtime increases strongly with higher~$p$,
strongly with smaller~$c$,
and moderately with higher $\ell$ (for \paths).
Our results detect no correlation with~$d$.
The correlation with~$p$ and~$\ell$ can be explained since the number of variables
depends quadratically and linearly on~$p$ and~$\ell$,
respectively.
The correlation with~$c$ follows our intuition that~$d^*$ is monotonically increasing with decreasing~$c$.
Considering the mean
reveals severe outliers,
showing that the random assignments of the traversal times and paths
can have a severe impact on the ILP's performance.
In fact,
for some few instances the computation time is very large
(up to two hours).
Implementation \indc{} is faster than \bigM{} on~54.8\% of \stars{}
and on 62.7\% of \paths{}.

For our results on \osm{},
see~\Cref{tab:osm:instances:results}.
Our results show that our ILP can handle, e.g.,
zone-C evacuation with 10\% of the nodes serving as evacuation paths
in less than 40 seconds on average
(note that for,
e.g., Koblenz, these 10\% correspond to 176 evacuation paths).
Our results for zone-A and zone-B evacuation indicate
that increasing the number of paths severely impacts the running time
(for zone-C evacuation,
even for 15\% our machines did not terminate in reasonable time).
Implementation \indc{} performed faster than \bigM{} on 73.3\% of \osm{}.

\subsection{Performance of the ILP Relaxation}

The ILP relaxation performs surprisingly good for \osm{},
see \Cref{fig:scatter} and \Cref{tab:exp:relaxed:stats}:
\begin{figure}[t]
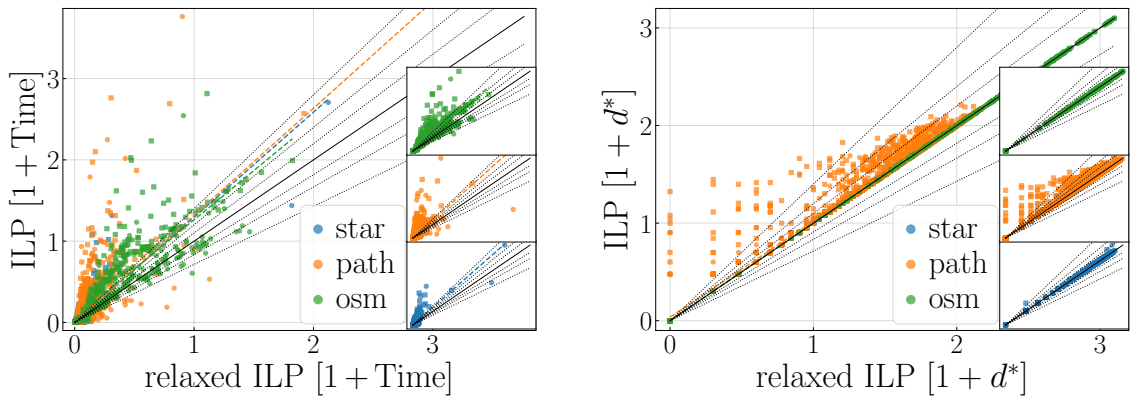

 \includegraphics[width=0.48\textwidth]{fig/scatter_relax_time-log.pdf}
 \hfill
 \includegraphics[width=0.48\textwidth]{fig/scatter_relax_dl-log.pdf}
 \caption{Scatter log-log plot for relaxed ILP versus ILP regarding runtime (left) and $d^*$ (right) on \stars, \paths, and \osm.}
 \label{fig:scatter}
\end{figure}
\begin{table*}[t]
 \centering
 \caption{Statistics of the ratios relaxed ILP over ILP regarding runtime and $d^*$.}
 \label{tab:exp:relaxed:stats}
 \begin{tabular}{rrrrrrr}\toprule
    & \multicolumn{3}{c}{\cellcolor{gray!10}time} & \multicolumn{3}{c}{\cellcolor{gray!20}$d^*$} \\
    & mean & std & median & mean & std & median \\
    \hline
    star & {\footnotesize [0.036, 25.390]} 2.297 & 3.840 & 1.018 & {\footnotesize [0.500, 1.000]} 0.998 & 0.017 & 1.000 \\
path & {\footnotesize [0.001, 77.933]} 1.560 & 1.749 & 1.060 & {\footnotesize [0.000, 1.000]} 0.977 & 0.101 & 1.000 \\
osm & {\footnotesize [0.018, 4.461]} 0.982 & 0.754 & 0.918 & {\footnotesize [1.000, 1.000]} 1.000 & 0.000 & 1.000 \\
\bottomrule\end{tabular}
\end{table*}
On average,
the ILP relaxation needs $98.2\%$ ($91.8\%$ in the median) of the running time of the best exact ILP
and yet computes on almost all instances the optimal $d^*$ from the exact ILP.
Also for \paths{} and \stars{} the solution quality is quite high with~$97.7\%$ and~$99.8\%$,
respectively.
Yet,
the standard deviation for \paths{} is significantly larger than for~\stars{}.
On both \paths{} and \stars{},
the average runtime of the ILP relaxation is even larger than the one of the best exact ILP,
which means that in many cases,
\indc{} performs fast.

\section{Epilogue}

We proved that smooth evacuation is a computationally hard task
even in well-structured decaying trees
under quite restrictive settings.
Given the complexity,
we formulated an ILP.
Our experiments demonstrate the practicality of our ILP.
Facing larger cities, larger path sets, or longer paths,
our recommendations are as follows:
(1)~Start both of our implementations and the relaxed ILP in parallel on separate machines.
(2)~Use powerful clusters (our machine already failed to handle instances slightly larger than our largest).

Our work opened several future pathways.
On the theoretical side,
we wonder about the computational complexity on
uncapacitated decaying paths and stars.
Further restrictions on the path set such as
only few vertices being eligible as sinks is also well-motivated.
On the experimental side,
we wish to handle larger instances.
For this,
reduction rules may be beneficial.
For both sides,
decaying caterpillars---generalizing both stars and paths---may be of interest,
in particular in terms of small deadlines.

{\begingroup
	\let\clearpage\relax
	\renewcommand{\url}[1]{\href{#1}{$\ExternalLink$}}
	\bibliography{simevac-bib}

\newcommand{\noopsort}[1]{}
\begin{thebibliography}{31}
\providecommand{\natexlab}[1]{#1}
\providecommand{\url}[1]{\texttt{#1}}
\expandafter\ifx\csname urlstyle\endcsname\relax
  \providecommand{\doi}[1]{doi: #1}\else
  \providecommand{\doi}{doi: \begingroup \urlstyle{rm}\Url}\fi

\bibitem[Akrida et~al.(2019)Akrida, Czyzowicz, Gasieniec, Kuszner, and
  Spirakis]{AkridaCGKS19}
Eleni~C. Akrida, Jurek Czyzowicz, Leszek Gasieniec, Lukasz Kuszner, and Paul~G.
  Spirakis.
\newblock Temporal flows in temporal networks.
\newblock \emph{J. Comput. Syst. Sci.}, 103:\penalty0 46--60, 2019.

\bibitem[Almagor et~al.(2024)Almagor, Kottinger, and Lahijanian]{AlmagorKL24}
Shaull Almagor, Justin Kottinger, and Morteza Lahijanian.
\newblock Temporal segmentation in multi agent path finding with applications
  to explainability.
\newblock \emph{Artif. Intell.}, 330:\penalty0 104087, 2024.

\bibitem[Atzmon et~al.(2020)Atzmon, Stern, Felner, Wagner, Bart{\'{a}}k, and
  Zhou]{AtzmonSFWBZ20}
Dor Atzmon, Roni Stern, Ariel Felner, Glenn Wagner, Roman Bart{\'{a}}k, and
  Neng{-}Fa Zhou.
\newblock Robust multi-agent path finding and executing.
\newblock \emph{J. Artif. Intell. Res.}, 67:\penalty0 549--579, 2020.

\bibitem[Boeckmann et~al.(2023)Boeckmann, Thielen, and Wittmann]{BoeckmannTW23}
Jan Boeckmann, Clemens Thielen, and Alina Wittmann.
\newblock Complexity of the temporal shortest path interdiction problem.
\newblock In \emph{Proceedings of the 2nd Symposium on Algorithmic Foundations
  of Dynamic Networks ({SAND} 2023)}, volume 257 of \emph{LIPIcs}, pages
  9:1--9:20. Schloss Dagstuhl - Leibniz-Zentrum f{\"{u}}r Informatik, 2023.

\bibitem[Casteigts et~al.(2021)Casteigts, Himmel, Molter, and
  Zschoche]{CasteigtsHMZ21}
Arnaud Casteigts, Anne{-}Sophie Himmel, Hendrik Molter, and Philipp Zschoche.
\newblock Finding temporal paths under waiting time constraints.
\newblock \emph{Algorithmica}, 83\penalty0 (9):\penalty0 2754--2802, 2021.

\bibitem[Chimani and Troost(2023)]{ChimaniT23}
Markus Chimani and Niklas Troost.
\newblock Multistage shortest path: Instances and practical evaluation.
\newblock In David Doty and Paul~G. Spirakis, editors, \emph{Proceedings of the
  2nd Symposium on Algorithmic Foundations of Dynamic Networks {SAND}}, volume
  257 of \emph{LIPIcs}, pages 14:1--14:19. Schloss Dagstuhl - Leibniz-Zentrum
  f{\"{u}}r Informatik, 2023.

\bibitem[Darmann and Döcker(2021)]{DarmannD20}
Andreas Darmann and Janosch Döcker.
\newblock On simplified {NP}-complete variants of {Monotone 3-Sat}.
\newblock \emph{Discrete Applied Mathematics}, 292:\penalty0 45--58, 2021.

\bibitem[Dhamala(2015)]{dhamala2015survey}
Tanka~Nath Dhamala.
\newblock A survey on models and algorithms for discrete evacuation planning
  network problems.
\newblock \emph{Journal of Industrial \& Management Optimization}, 11\penalty0
  (1):\penalty0 265, 2015.

\bibitem[Fluschnik et~al.(2019)Fluschnik, Morik, and Sorge]{FLUSCHNIK201969}
Till Fluschnik, Marco Morik, and Manuel Sorge.
\newblock The complexity of routing with collision avoidance.
\newblock \emph{Journal of Computer and System Sciences}, 102:\penalty0 69--86,
  2019.
\newblock ISSN 0022-0000.

\bibitem[Fluschnik et~al.(2023)Fluschnik, Niedermeier, Schubert, and
  Zschoche]{FluschnikNSZ23}
Till Fluschnik, Rolf Niedermeier, Carsten Schubert, and Philipp Zschoche.
\newblock Multistage s-t path: Confronting similarity with dissimilarity.
\newblock \emph{Algorithmica}, 85\penalty0 (7):\penalty0 2028--2064, 2023.

\bibitem[Gao et~al.(2024)Gao, Li, Li, Yan, Lin, and Wu]{GaoLLYLW24}
Jianqi Gao, Yanjie Li, Xinyi Li, Kejian Yan, Ke~Lin, and Xinyu Wu.
\newblock A review of graph-based multi-agent pathfinding solvers: From
  classical to beyond classical.
\newblock \emph{Knowl. Based Syst.}, 283:\penalty0 111121, 2024.

\bibitem[Garey et~al.(1976)Garey, Johnson, and Stockmeyer]{GAREY1976237}
M.R. Garey, D.S. Johnson, and L.~Stockmeyer.
\newblock Some simplified np-complete graph problems.
\newblock \emph{Theoretical Computer Science}, 1\penalty0 (3):\penalty0
  237--267, 1976.
\newblock ISSN 0304-3975.

\bibitem[{Gesamtverband der Deutschen Versicherungswirtschaft
  (GDV)}(2024)]{GDV2024_HochwasserAdressen}
{Gesamtverband der Deutschen Versicherungswirtschaft (GDV)}.
\newblock Amtliche zahlen zeigen: Mehr als 300 000 adressen in deutschland sind
  von hochwasser bedroht.
\newblock GDV Pressemitteilung, June 2024.
\newblock URL
  \url{https://gdv.de/gdv/medien/medieninformationen/amtliche-zahlen-zeigen-mehr-als-300000-adressen-in-deutschland-sind-von-hochwasser-bedroht-168828}.
\newblock Last accessed on August 29, 2025.

\bibitem[Gupta et~al.(1982)Gupta, Lee, and Leung]{GuptaLL82}
U.~I. Gupta, D.~T. Lee, and J.~Y.-T. Leung.
\newblock Efficient algorithms for interval graphs and circular-arc graphs.
\newblock \emph{Networks}, 12\penalty0 (4):\penalty0 459--467, 1982.

\bibitem[Higashikawa et~al.(2024)Higashikawa, Katoh, Teruyama, and
  Tokuni]{HigashikawaKTT24}
Yuya Higashikawa, Naoki Katoh, Junichi Teruyama, and Yuki Tokuni.
\newblock Faster algorithms for evacuation problems in networks with a single
  sink of small degree and bounded capacitated edges.
\newblock \emph{J. Comb. Optim.}, 48\penalty0 (3):\penalty0 18, 2024.

\bibitem[Karp(1972)]{Karp72}
Richard~M. Karp.
\newblock Reducibility among combinatorial problems.
\newblock In \emph{Proceedings of a symposium on the Complexity of Computer
  Computations, held March 20-22, 1972, at the {IBM} Thomas J. Watson Research
  Center, Yorktown Heights, New York, {USA}}, The {IBM} Research Symposia
  Series, pages 85--103. Plenum Press, New York, 1972.
\newblock URL \url{https://doi.org/10.1007/978-1-4684-2001-2\_9}.

\bibitem[Klobas et~al.(2023)Klobas, Mertzios, Molter, Niedermeier, and
  Zschoche]{KlobasMMNZ23}
Nina Klobas, George~B. Mertzios, Hendrik Molter, Rolf Niedermeier, and Philipp
  Zschoche.
\newblock Interference-free walks in time: temporally disjoint paths.
\newblock \emph{Auton. Agents Multi Agent Syst.}, 37\penalty0 (1):\penalty0 1,
  2023.

\bibitem[Kunz et~al.(2023)Kunz, Molter, and Zehavi]{KunzMZ23}
Pascal Kunz, Hendrik Molter, and Meirav Zehavi.
\newblock In which graph structures can we efficiently find temporally disjoint
  paths and walks?
\newblock In \emph{Proceedings of the Thirty-Second International Joint
  Conference on Artificial Intelligence, {IJCAI} 2023, 19th-25th August 2023,
  Macao, SAR, China}, pages 180--188. ijcai.org, 2023.

\bibitem[Litman(2006)]{Litman06}
Todd Litman.
\newblock Lessons from katrina and rita: What major disasters can teach
  transportation planners.
\newblock \emph{Journal of Transportation Engineering}, 132\penalty0
  (1):\penalty0 11--18, 2006.

\bibitem[Mishra et~al.(2018)Mishra, Mazumdar, and Pal]{MishraMP18}
Gopinath Mishra, Subhra Mazumdar, and Arindam Pal.
\newblock Improved algorithms for the evacuation route planning problem.
\newblock \emph{Journal of Combinatorial Optimization}, 36\penalty0
  (1):\penalty0 280--306, 2018.

\bibitem[Murano et~al.(2015)Murano, Perelli, and Rubin]{MuranoPR15}
Aniello Murano, Giuseppe Perelli, and Sasha Rubin.
\newblock Multi-agent path planning in known dynamic environments.
\newblock In \emph{Proceedings of the 18th International Conference on
  Principles and Practice of Multi-Agent Systems ({PRIMA} 2015)}, volume 9387
  of \emph{Lecture Notes in Computer Science}, pages 218--231. Springer, 2015.

\bibitem[Pertzovskiy et~al.(2025)Pertzovskiy, Stern, Zivan, and
  Felner]{PertzovskiySZF25}
Arseni Pertzovskiy, Roni Stern, Roie Zivan, and Ariel Felner.
\newblock Multi-agent corridor generating algorithm.
\newblock In \emph{Proceedings of the Thirty-Fourth International Joint
  Conference on Artificial Intelligence ({IJCAI} 2025)}, pages 240--247.
  ijcai.org, 2025.

\bibitem[Pyakurel et~al.(2019)Pyakurel, Nath, Dempe, and
  Dhamala]{PyakurelNDD19}
Urmila Pyakurel, Hari~Nandan Nath, Stephan Dempe, and Tanka~Nath Dhamala.
\newblock Efficient dynamic flow algorithms for evacuation planning problems
  with partial lane reversal.
\newblock \emph{Mathematics}, 7\penalty0 (10), 2019.

\bibitem[Schichl and Sellmann(2015)]{SchichlS15}
Hermann Schichl and Meinolf Sellmann.
\newblock Predisaster preparation of transportation networks.
\newblock In \emph{Proceedings of the 29th {AAAI} Conference on Artificial
  Intelligence ({AAAI}'15)}, pages 709--715. {AAAI} Press, 2015.

\bibitem[Shahabi and Wilson(2014)]{ShahabiW14}
Kaveh Shahabi and John~P. Wilson.
\newblock {CASPER:} intelligent capacity-aware evacuation routing.
\newblock \emph{Comput. Environ. Urban Syst.}, 46:\penalty0 12--24, 2014.

\bibitem[Stern et~al.(2019)Stern, Sturtevant, Felner, Koenig, Ma, Walker, Li,
  Atzmon, Cohen, Kumar, Bart{\'{a}}k, and Boyarski]{SternSFK0WLA0KB19}
Roni Stern, Nathan~R. Sturtevant, Ariel Felner, Sven Koenig, Hang Ma, Thayne~T.
  Walker, Jiaoyang Li, Dor Atzmon, Liron Cohen, T.~K.~Satish Kumar, Roman
  Bart{\'{a}}k, and Eli Boyarski.
\newblock Multi-agent pathfinding: Definitions, variants, and benchmarks.
\newblock In Pavel Surynek and William Yeoh, editors, \emph{Proceedings of the
  12th International Symposium on Combinatorial Search ({SOCS} 2019)}, pages
  151--158. {AAAI} Press, 2019.

\bibitem[van Bevern et~al.(2015)van Bevern, Mnich, Niedermeier, and
  Weller]{BevernMNW15}
Ren{\'{e}} van Bevern, Matthias Mnich, Rolf Niedermeier, and Mathias Weller.
\newblock Interval scheduling and colorful independent sets.
\newblock \emph{J. Sched.}, 18\penalty0 (5):\penalty0 449--469, 2015.

\bibitem[Wu et~al.(2016)Wu, Sheldon, and Zilberstein]{WuSZ16}
XiaoJian Wu, Daniel Sheldon, and Shlomo Zilberstein.
\newblock Optimizing resilience in large scale networks.
\newblock In \emph{Proceedings of the 30th {AAAI} Conference on Artificial
  Intelligence ({AAAI}'16)}, pages 3922--3928. {AAAI} Press, 2016.

\bibitem[Yu and LaValle(2012)]{YuL12}
Jingjin Yu and Steven~M. LaValle.
\newblock Multi-agent path planning and network flow.
\newblock In \emph{Proceedings of the Tenth Workshop on the Algorithmic
  Foundations of Robotics ({WAFR} 2012)}, volume~86 of \emph{Springer Tracts in
  Advanced Robotics}, pages 157--173. Springer, 2012.

\bibitem[Yu and LaValle(2013)]{YuL13}
Jingjin Yu and Steven~M. LaValle.
\newblock Structure and intractability of optimal multi-robot path planning on
  graphs.
\newblock In \emph{Proceedings of the Twenty-Seventh {AAAI} Conference on
  Artificial Intelligence ({AAAI} 2013)}, pages 1443--1449. {AAAI} Press, 2013.

\bibitem[Yu and LaValle(2016)]{YuL16}
Jingjin Yu and Steven~M. LaValle.
\newblock Optimal multirobot path planning on graphs: Complete algorithms and
  effective heuristics.
\newblock \emph{{IEEE} Trans. Robotics}, 32\penalty0 (5):\penalty0 1163--1177,
  2016.

\end{thebibliography}
\endgroup}

\begin{landscape}
	\begin{table*}[t]
	\centering
	\caption{Overview on \osm{}.
	For the connections we give the percentages of one-way arcs (O), antiparallel arcs (T), and undirected edges (U). diam denotes the diameter of~$G$. For~$\trt$, $d$, and~$c$, we give the mean along with $[\text{min},\,\text{max}]$.}
	\label{tab:osm:instances}
	\begin{tabular}{l r l r l r r r r r}
\toprule
City & \multicolumn{2}{l}{\# Vertices (in zones 0, A, B)} & \multicolumn{2}{l}{\# Connections (O, T, U)} & diam & $\trt$ & $d$ & $c$ \\
\midrule
Ahrweiler & 832 & (267, 275, 179) & 1066 & (17.8\%, 15.4\%, 66.8\%) & 68 & {\footnotesize [1.0, 323.0]} 13.1 & {\footnotesize [17.0, 4841.0]} 675.8 & {\footnotesize [1.0, 8.0]} 4.7 \\
Deggendorf & 1196 & (18, 93, 235) & 1483 & (12.5\%, 29.8\%, 57.7\%) & 61 & {\footnotesize [1.0, 400.0]} 15.1 & {\footnotesize [84.0, 8596.0]} 1893.3 & {\footnotesize [1.0, 10.0]} 4.7 \\
Gera & 2487 & (223, 243, 485) & 3197 & (17.8\%, 21.0\%, 61.2\%) & 88 & {\footnotesize [1.0, 527.0]} 14.2 & {\footnotesize [13.0, 9465.0]} 1958.2 & {\footnotesize [2.0, 10.0]} 4.7 \\
Hildburghausen & 454 & (88, 94, 168) & 578 & (10.7\%, 20.4\%, 68.9\%) & 46 & {\footnotesize [1.0, 345.0]} 14.4 & {\footnotesize [17.0, 4931.0]} 875.0 & {\footnotesize [2.0, 8.0]} 4.9 \\
Kitzingen & 707 & (83, 85, 188) & 904 & (14.3\%, 16.8\%, 68.9\%) & 56 & {\footnotesize [1.0, 117.0]} 12.4 & {\footnotesize [50.0, 3576.0]} 1051.7 & {\footnotesize [2.0, 8.0]} 4.8 \\
Koblenz & 2961 & (278, 621, 864) & 3886 & (35.5\%, 14.4\%, 50.2\%) & 100 & {\footnotesize [1.0, 463.0]} 11.5 & {\footnotesize [22.0, 4658.0]} 1035.1 & {\footnotesize [1.0, 8.0]} 4.4 \\
Kronach & 1024 & (408, 215, 166) & 1207 & (7.0\%, 13.7\%, 79.3\%) & 66 & {\footnotesize [1.0, 212.0]} 13.5 & {\footnotesize [6.0, 3989.0]} 658.4 & {\footnotesize [1.0, 10.0]} 4.6 \\
Rastatt & 1447 & (247, 267, 298) & 1999 & (18.9\%, 11.7\%, 69.4\%) & 75 & {\footnotesize [1.0, 115.0]} 10.8 & {\footnotesize [28.0, 4997.0]} 1214.2 & {\footnotesize [2.0, 10.0]} 5.1 \\
Trier & 2777 & (252, 331, 719) & 3680 & (31.0\%, 19.7\%, 49.3\%) & 124 & {\footnotesize [1.0, 254.0]} 11.6 & {\footnotesize [63.0, 6201.0]} 1590.9 & {\footnotesize [1.0, 8.0]} 4.5 \\
Zell (Mosel) & 246 & (135, 47, 58) & 334 & (20.4\%, 10.2\%, 69.5\%) & 48 & {\footnotesize [1.0, 349.0]} 13.7 & {\footnotesize [82.0, 1070.0]} 322.0 & {\footnotesize [2.0, 8.0]} 4.9 \\
\bottomrule
\end{tabular}

	\end{table*}
	\begin{table*}[t]
	\centering
	\caption{Average runtime [sec] for \osm{} along with $[\text{min},\,\text{max}]$ runtimes.}
	\label{tab:osm:instances:results}
	\begin{tabular}{lrrrrrr}\toprule
City & \cellcolor{gray!10}(0.1, A) & \cellcolor{gray!10}(0.2, A) & \cellcolor{gray!10}(0.3, A) & \cellcolor{gray!20}(0.1, B) & \cellcolor{gray!20}(0.2, B) & \cellcolor{gray!10}(0.1, C) \\ \midrule
Ahrweiler & {\footnotesize [0.03, 0.20]} 0.13 & {\footnotesize [0.23, 2.42]} 0.62 & {\footnotesize [0.77, 7.83]} 2.67 & {\footnotesize [0.43, 3.48]} 1.11 & {\footnotesize [1.82, 24.65]} 6.09 & {\footnotesize [4.30, 34.94]} 12.06 \\ 
Deggendorf & {\footnotesize [0.00, 0.04]} 0.02 & {\footnotesize [0.01, 0.04]} 0.02 & {\footnotesize [0.02, 0.04]} 0.02 & {\footnotesize [0.03, 0.09]} 0.05 & {\footnotesize [0.04, 0.18]} 0.12 & {\footnotesize [0.31, 1.57]} 0.59 \\ 
Gera & {\footnotesize [0.06, 0.18]} 0.13 & {\footnotesize [0.31, 1.32]} 0.59 & {\footnotesize [0.76, 6.82]} 2.34 & {\footnotesize [0.33, 0.88]} 0.49 & {\footnotesize [1.80, 10.06]} 3.97 & {\footnotesize [1.39, 7.84]} 2.97 \\ 
Hildburghausen & {\footnotesize [0.01, 0.04]} 0.02 & {\footnotesize [0.02, 0.10]} 0.07 & {\footnotesize [0.08, 0.32]} 0.17 & {\footnotesize [0.02, 0.10]} 0.05 & {\footnotesize [0.14, 0.43]} 0.23 & {\footnotesize [0.29, 5.64]} 1.10 \\ 
Kitzingen & {\footnotesize [0.02, 0.05]} 0.03 & {\footnotesize [0.02, 0.09]} 0.06 & {\footnotesize [0.03, 0.19]} 0.11 & {\footnotesize [0.02, 0.06]} 0.04 & {\footnotesize [0.03, 0.27]} 0.16 & {\footnotesize [0.22, 1.61]} 0.42 \\ 
Koblenz & {\footnotesize [0.07, 0.49]} 0.26 & {\footnotesize [0.79, 7.33]} 2.29 & {\footnotesize [1.59, 7.56]} 3.82 & {\footnotesize [1.15, 7.08]} 2.32 & {\footnotesize [7.66, 40.80]} 18.02 & {\footnotesize [12.32, 97.08]} 35.87 \\ 
Kronach & {\footnotesize [0.05, 0.36]} 0.22 & {\footnotesize [0.83, 3.31]} 1.57 & {\footnotesize [2.01, 19.27]} 4.08 & {\footnotesize [0.71, 7.29]} 2.80 & {\footnotesize [4.16, 23.18]} 8.33 & {\footnotesize [6.12, 652.81]} 46.06 \\ 
Rastatt & {\footnotesize [0.04, 0.13]} 0.08 & {\footnotesize [0.06, 0.43]} 0.27 & {\footnotesize [0.53, 1.25]} 0.72 & {\footnotesize [0.09, 1.22]} 0.45 & {\footnotesize [1.38, 12.36]} 4.50 & {\footnotesize [2.61, 8.94]} 4.43 \\ 
Trier & {\footnotesize [0.17, 0.88]} 0.31 & {\footnotesize [0.54, 3.21]} 1.01 & {\footnotesize [1.09, 10.71]} 3.63 & {\footnotesize [0.68, 2.31]} 1.10 & {\footnotesize [3.49, 18.73]} 7.15 & {\footnotesize [5.40, 29.56]} 10.92 \\ 
Zell (Mosel) & {\footnotesize [0.06, 1.11]} 0.33 & {\footnotesize [0.41, 11.07]} 2.32 & {\footnotesize [2.04, 170.05]} 18.49 & {\footnotesize [0.33, 98.06]} 7.62 & {\footnotesize [3.33, 347.86]} 25.05 & {\footnotesize [0.30, 105.16]} 9.65 \\ 
\bottomrule\end{tabular}
	\end{table*}
\end{landscape}

\end{document}